\documentclass[pra,twocolumn,showpacs,preprintnumbers,amsmath,amssymb,superscriptaddress]{revtex4}

\usepackage{bm}
\usepackage{graphicx}
\usepackage{amsbsy}
\usepackage{amsmath}
\usepackage{amsfonts}
\usepackage{amsthm}

\begin{document}

\theoremstyle{plain}
\newtheorem{theorem}{Theorem}
\newtheorem{lemma}[theorem]{Lemma}
\newtheorem{corollary}[theorem]{Corollary}
\newtheorem{conjecture}[theorem]{Conjecture}
\newtheorem{proposition}[theorem]{Proposition}

\theoremstyle{definition}
\newtheorem{definition}{Definition}

\theoremstyle{remark}
\newtheorem*{remark}{Remark}
\newtheorem{example}{Example}

\title{Necessary and sufficient conditions for local manipulation of multipartite pure quantum states}   

\author{Gilad Gour}\email{gour@ucalgary.ca}
\affiliation{Institute for Quantum Information Science and 
Department of Mathematics and Statistics,
University of Calgary, 2500 University Drive NW,
Calgary, Alberta, Canada T2N 1N4} 
\author{Nolan R. Wallach}\email{nwallach@ucsd.edu}
\affiliation{Department of Mathematics, University of California/San Diego, 
        La Jolla, California 92093-0112}

\date{\today}

\begin{abstract} 
Suppose several parties jointly possess a pure multipartite state, $|\psi\rangle$. 
Using local operations on their respective systems and classical communication (i.e. LOCC) it may be possible for the parties to transform deterministically $|\psi\rangle$ into another joint state $|\phi\rangle$. In the bipartite case, Nielsen majorization theorem gives the necessary and sufficient conditions for this process of entanglement transformation to be possible.
In the multipartite case, such a deterministic local transformation is possible only if both the states in the same
stochastic LOCC (SLOCC) class. Here we generalize Nielsen majorization theorem to the multipartite case, and find necessary and sufficient conditions for the existence of a local \emph{separable} transformation between two multipartite states in the same SLOCC class. When such a deterministic conversion is not possible, we find an expression for the maximum probability to convert one state to another by local separable operations. In addition, we find necessary and sufficient conditions for the existence of a separable transformation that converts a multipartite pure state into one of a set of possible final states all in the same SLOCC class. 

Our results are expressed in terms of  (1) the stabilizer group of the state representing the SLOCC orbit,
and (2) the associate density matrices (ADMs) of the two multipartite states. The ADMs play a similar role
to that of the reduced density matrices, when considering local transformations that involves pure bipartite states.  
We show in particular that the requirement that one ADM majorize another is a necessary condition but in general far from being also sufficient as it happens in the bipartite case. In most of the results the twirling operation with respect to the stabilizer group (of the representative state in the SLOCC orbit) plays an important role that provides a deep link between entanglement theory and the resource theory of reference frames.
\end{abstract}  

\pacs{03.67.Mn, 03.67.Hk, 03.65.Ud}

\maketitle


\section{Introduction}
Every restriction on quantum operations defines a resource theory, determining how quantum states that cannot be prepared under the restriction may be manipulated and used to circumvent the restriction. Much of
quantum information theory can be viewed as a theory of interconversions among such resources~\cite{Dev08}. 
When several spatially separated parties sharing a composite quantum system, they are naturally restricted to act locally
on the system. Such a restriction to local operations assisted by classical communication (LOCC) leads to the theory of entanglement, in which multipartite entangled states act as resources for overcoming the LOCC restriction. 
Hence, exotic multipartite entangled states play an important role in a variety of quantum information processing tasks. These include conventional~\cite{Nie00} and measurement-based quantum computation~\cite{Rau01}, quantum error correction schemes~\cite{Nie00}, 
quantum secret sharing~\cite{Hil99}, quantum simulations~\cite{Llo96}, and in principle in the description of every composite system consisting of more than one subsystem~\cite{Hor09,Ple07}.  

The amount of information needed to describe $n$-party quantum 
system grows exponentially with $n$, which makes it very difficult and almost impossible to classify multipartite states
by the degree of their resourcefulness (i.e. by their degree to circumvent the limitation of LOCC). The main reason 
is that the strength of the LOCC restriction increases with $n$, and, in general, the extent to which quantum resources 
(like multipartite entangled states) can be interconverted decreases as the strength of the restriction increases~\cite{GS}. 
It can be shown~\cite{Vid00,GW10} that for $n>3$, the LOCC restriction leads to an uncountable number of inequivalent resources (i.e. multipartite entangled states).

A conventional way to classify multipartite states is to divide the Hilbert space into sets of states that are all
related to each other by invertible stochastic local operations and classical communication (SLOCC)~\cite{Vid00}. 
More precisely, we say that $\psi$ and $\phi$ belong to the same (invertible) SLOCC class if both transformations 
$\psi\rightarrow\phi$ and $\phi\rightarrow\psi$ can be achieved by means of LOCC with non-zero probability. 
Hence, in general, if $\psi$ and $\phi$ belong to two distinct SLOCC classes, they are incomparable and essentially correspond to  two deferent types of resources.  While in three qubits there are only 6 inequivalent SLOCC classes (see~\cite{Vid00}), there are uncountable number
of inequivalent SLOCC classes in four (or more) qubits~\cite{Vid00,GW10}. Roughly speaking, this fact implies that relative to the dimension of the Hilbert space, only a small number of transformations can be achieved by local means 
(since the states involved must belong to the same SLOCC class). Nevertheless, the very few transformations
that can be achieved locally are extremely useful for QIP tasks, as in the examples of transformations involving the cluster states~\cite{Rau01}, graph states~\cite{Hei05}, code states~\cite{Nie00}, and other novel multipartite entangled states~\cite{Hor09}.

A fundamental problem in quantum information is, therefore, to characterize all possible deterministic LOCC transformations among states in the same SLOCC class. That is, if $\psi$ and $\phi$ belong to the same SLOCC class, what are the necessary and sufficient conditions that the transformation $\psi\rightarrow\phi$ can be achieved by LOCC with 100\% success (i.e with probability one)? Despite its importance, this problem was solved completely only for the case of pure bipartite 
states~\cite{Hor09,Ple07}. In fact, even in the case of 3 qubits, very little is known about the characterization of local transformations among states in the same GHZ class~\cite{Spe01}. For higher number of qubits, or in higher dimensions, much less is known and there are only partial results for very specific SLOCC classes of states~\cite{Kin10,Cui10}. 

One of the several difficulties encountered in the effort to solve this problem, is the non-elegant mathematical description of (the operationally motivated) LOCC protocols in composite quantum systems consisting of more than two parties. This difficulty has been avoided in the literature by considering a larger set of transformations, such as the positive partial transpose preserving operations~\cite{Ish05}, or the non-entangling operations in~\cite{Bra08}. However, these sets of transformations are much larger than what can be achieved by LOCC, and therefore include many transformations that can not be achieved locally. A much smaller set, that on one hand is local and include all LOCC transformations while on the other hand is mathematically simple to characterize, is the set of local separable transformations (SEP). 

SEP can be characterized as a set of trace preserving quantum channels such that for each channel in the set, there exists an operator sum representation (or Kraus representation) with the property that each Kraus matrix is a tensor product of $n$ matrices acting on the $n$ subsystems consisting the composite system (see the section~\ref{pre} for the mathematical definition of SEP). 
In~\cite{Ben99} it was shown (somewhat counter intuitively) that SEP is strictly larger than LOCC, while in~\cite{Ghe08} it was shown that for bipartite pure states we have effectively SEP=LOCC; that it, any transformation between two bipartite pure states that can be achieved by SEP can also be achieved by LOCC and vice versa. However, for mixed bipartite states this is no longer true~\cite{Eri09}.

In this paper we characterize the set of all transformations taking one multipartite pure state to another by SEP. 
Clearly, unless the two states belong to the same SLOCC class, such a transformation is impossible. 
For each SLOCC class we show that there is a natural state (known also as a normal form~\cite{Ver03}) that can be chosen to represent the class, and use the stabilizer group of this state to present our results. 
Unlike the stabilizer formalism in quantum error correction schemes~\cite{}, here the stabilizer group is not a subgroup of the Pauli group, but, in general, it is a subgroup of a much larger group, the group of all invertible matrices. 

When it is impossible to convert by SEP one state to another, we find an expression for the maximum probability for the conversion. Our expression is in general hard to calculate and we therefore provide simple lower and upper bounds on the maximum probability of conversion. In addition, we generalize our results
to include necessary and sufficient conditions for the existence of a separable transformation that converts a multipartite pure state into one of a set of possible final states all in the same SLOCC class.

The paper is organized as follows. In section~\ref{pre} we discuss some preliminary concepts and theorems about SLOCC classes, separable operations, and normal forms. In section~\ref{sec:stab} we define the stabilizer group and prove several of its 
properties. We also define the stabilizer twirling operation, and discuss the relation between the stabilizer group and SL-invariant polynomials. In section~\ref{sec:asso} we define the associate density matrix and discuss its relation to the reduced density matrix.
In Section~\ref{sec:main} we present and prove our main results. 
Section~\ref{sec:ex}
is devoted to examples demonstrating how to apply our results to specific cases. Finally, in section~\ref{sec:con} we end with conclusions.

\section{Preliminaries}\label{pre}
The Hilbert space of $n$ qudits is denoted by:
$$
\mathcal{H}_n\equiv \mathbb{C}^{d_1}\otimes\mathbb{C}^{d_2}\otimes\cdots\otimes\mathbb{C}^{d_n}\;.
$$
The SLOCC groups we will consider are the subgroups
\begin{align*}
& G\equiv \text{SL}(d_1,\mathbb{C})\otimes\text{SL}(d_2,\mathbb{C})\otimes\cdots\otimes\text{SL}(d_n,\mathbb{C})\\
& \tilde{G}\equiv \text{GL}(d_1,\mathbb{C})\otimes\text{GL}(d_2,\mathbb{C})\otimes\cdots\otimes\text{GL}(d_n,\mathbb{C})
\end{align*}
in $GL(\mathcal{H}_{n})$. Here $GL(d,\mathbb{C})$ is the group consisting of all $d\times d$ complex invertible matrices
and $SL(d,\mathbb{C})$ is a subset of these matrices with determinant 1.

Viewing the SLOCC classes as orbits over the action of $G$, we define the orbit of a state $\psi\in\mathcal{H}_n$ to consists of unit vectors:
\begin{definition}
The orbit $\mathcal{O}_{\psi}$ of a state $|\psi\rangle\in\mathcal{H}_n$ under the action of $G$ is defined by:
$$
\mathcal{O}_{\psi}\equiv\left\{\frac{g|\psi\rangle}{\|g|\psi\rangle\|}\Big|\;g\in G\right\}\;.
$$
\end{definition}
Note that the orbit above if unchanged if one replace $G$ by $\tilde{G}$ since we ignore global phase.
Note also that any state in the orbit can be chosen to represent the orbit; that is, if $\psi'\in\mathcal{O}_{\psi}$ then $\mathcal{O}_{\psi'}=\mathcal{O}_{\psi}$.
Sometimes in literature the term ``SLOCC class" is referred to a similar set as above, but slightly different, allowing the matrix $g$ in the definition above not to be invertible.

\subsection{Separable Matrices and Separable Operations}

The set of all states that can be prepared by LOCC is called the set of separable states.
Here we define it without the normalization requirement of trace one.
\begin{definition}
A matrix (or operator) $\rho:\mathcal{H}_n\to\mathcal{H}_n$ is said to be a separable matrix or a separable operator
if it can be written as:
$$
\rho=\sum_{k}\sigma_{1}^{(k)}\otimes \sigma_{2}^{(k)}\otimes\cdots\otimes \sigma_{n}^{(k)}
$$
where all $\sigma_{i}^{(k)}:\mathbb{C}^{d_i}\to\mathbb{C}^{d_i}$ are positive semi-definite matrices (operators).
\end{definition}

Later on, we will use the fact that any tensor product of matrices can be completed to the identity matrix by an addition of a separable matrix:
\begin{lemma}\label{flem}
Let $M=A_1\otimes A_{2}\otimes\cdots\otimes A_n$, where $A_k$ are $d_k\times d_k$ positive semi-definite matrices.
Denote by $\lambda_{\max}$ the largest eigenvalue of $M$. Then, for $p\in\mathbb{R}$, the matrix $I-pM$ is separable if and only if 
$p\leq 1/\lambda_{\max}$.
\end{lemma}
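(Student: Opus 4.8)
The plan is to deduce both directions from two elementary observations: (i) every separable operator is positive semi-definite, since a tensor product of positive semi-definite matrices is positive semi-definite and a sum of positive semi-definite matrices is positive semi-definite; and (ii) every \emph{diagonal} positive semi-definite operator on $\mathcal{H}_n$ is separable, being a non-negative linear combination of the product projectors $|i_1\rangle\langle i_1|\otimes\cdots\otimes|i_n\rangle\langle i_n|$. I would also use that conjugation by a tensor product of invertible matrices preserves separability, because it sends $\sum_k\sigma_1^{(k)}\otimes\cdots\otimes\sigma_n^{(k)}$ to $\sum_k(g_1\sigma_1^{(k)}g_1^\dagger)\otimes\cdots\otimes(g_n\sigma_n^{(k)}g_n^\dagger)$, again a sum of tensor products of positive semi-definite matrices; only the case of local unitaries is needed. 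Finally, recall that the spectrum of $M=A_1\otimes\cdots\otimes A_n$ consists of the products $\mu_1\cdots\mu_n$ with $\mu_k$ an eigenvalue of $A_k$, so that (assuming $M\neq0$, hence $\lambda_{\max}>0$) one has $\lambda_{\max}=\prod_k\lambda_{\max}(A_k)$ and $M$ is positive semi-definite with all its eigenvalues in $[0,\lambda_{\max}]$.

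For the ``only if'' direction I would argue as follows. Suppose $I-pM$ is separable, hence positive semi-definite by (i); its eigenvalues are the numbers $1-p\mu$ as $\mu$ runs over the eigenvalues of $M$, and since $\lambda_{\max}$ is one of those $\mu$, positive semi-definiteness forces $1-p\lambda_{\max}\geq0$, that is, $p\leq1/\lambda_{\max}$. (For $p\leq0$ this is automatic; I phrase it this way so as to treat both signs of $p$ at once.)

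For the ``if'' direction, assume $p\leq1/\lambda_{\max}$. If $p\leq0$ I would simply note that $-pM=((-p)A_1)\otimes A_2\otimes\cdots\otimes A_n$ is already a tensor product of positive semi-definite matrices, so $I-pM$ is a sum of two separable operators. If $0<p\leq1/\lambda_{\max}$, I would diagonalize each factor as $A_k=U_kD_kU_k^\dagger$ with $U_k$ unitary and $D_k\geq0$ diagonal, and, setting $U=U_1\otimes\cdots\otimes U_n$, write $I-pM=U(I-p\,D_1\otimes\cdots\otimes D_n)U^\dagger$. The operator in parentheses is diagonal, each of its diagonal entries being of the form $1-p\,\mu_1\cdots\mu_n$ with $\mu_k$ an eigenvalue of $A_k$ and hence at least $1-p\lambda_{\max}\geq0$; so that operator is separable by (ii), and conjugating it by the local unitary $U$ keeps it separable by the remark above. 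Together these give the equivalence.

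I do not anticipate a serious obstacle here: the proof is a reduction to the diagonal case together with the fact that separability implies positive semi-definiteness. What does require a word of care is the treatment of the degenerate cases — $p\leq0$, and $M=0$, for which $I-pM=I$ is separable for every $p$ while $1/\lambda_{\max}$ must be read as $+\infty$ — and making the two structural inputs explicit: stability of separability under conjugation by a tensor product of (unitary) matrices, and separability of an arbitrary diagonal positive semi-definite operator on $\mathcal{H}_n$.
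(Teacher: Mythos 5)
Your proof is correct, but it takes a genuinely different route from the paper's. The paper never diagonalizes: it writes each local identity as a sum of two positive semi-definite matrices, $I_j=\bigl(I_j-\tfrac{1}{\lambda_j}A_j\bigr)+\tfrac{1}{\lambda_j}A_j$ where $\lambda_j$ is the largest eigenvalue of $A_j$, expands $I=I_1\otimes\cdots\otimes I_n$ into $2^n$ product terms, and observes that one of these terms is exactly $\lambda_{\max}^{-1}M$ (using $\lambda_{\max}=\lambda_1\cdots\lambda_n$); hence $I-\lambda_{\max}^{-1}M$ is the sum of the remaining $2^n-1$ tensor products of positive semi-definite matrices, and the case $p<1/\lambda_{\max}$ is absorbed by a ``w.l.o.g.'' (one adds back the separable operator $(\lambda_{\max}^{-1}-p)M$). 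Your argument instead conjugates by the local unitary $U_1\otimes\cdots\otimes U_n$ to reduce to a diagonal operator and invokes the separability of diagonal positive semi-definite operators. The paper's expansion buys an explicit separable decomposition written directly in terms of the $A_j$, with no spectral decomposition needed; your reduction makes it more transparent exactly where the threshold $p\leq 1/\lambda_{\max}$ enters (each diagonal entry $1-p\,\mu_1\cdots\mu_n\geq 0$), treats all $p$ in the allowed range uniformly rather than via the w.l.o.g.\ step, and is more careful about the degenerate cases $p\leq 0$ and $M=0$, which the paper passes over. The ``only if'' direction (separable $\Rightarrow$ positive semi-definite $\Rightarrow$ $1-p\lambda_{\max}\geq 0$) is the same in both and is left implicit in the paper.
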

\begin{proof}
Clearly $pM\leq I$ so $I-pM$ is positive semi-definite. To see that it is also separable denote $\lambda_j$ to be the maximum eigenvalue of $A_j$, denote $I_j$ to be the identity matrix on subsytem $j$, and w.l.o.g take $p=1/\lambda_{\max}=\left(\lambda_1\lambda_2\cdots\lambda_n\right)^{-1}$. Then, the matrix
$I_j-\frac{1}{\lambda_j}A_j$ is positive semi definite, and therefore $I_j$ is a sum of two positive semi definite matrices $I_j=(I_j-\frac{1}{\lambda_j}A_j)+\frac{1}{\lambda_j}A_j$. Substituting this into $I=I_1\otimes I_2\otimes\cdots I_n$ completes the proof.
\end{proof}

We now give the precise definition of the set of operations that will be used in this paper.

\begin{definition}
Let $\mathcal{B}(\mathcal{H}_n)$ be the set of density matrices acting on $\mathcal{H}_n$, and
let $\Lambda:\mathcal{B}(\mathcal{H}_n)\to\mathcal{B}(\mathcal{H}_n)$ be a completely positive trace-preserving map.
Then, we say that $\Lambda$ is a \emph{separable} super-operator if there exists an operator sum representation of $\Lambda$ such that for any density matrix $\rho\in\mathcal{B}(\mathcal{H}_n)$,
\begin{equation}\label{sep}
\Lambda(\rho)=\sum_{k}M_{k}\rho M_{k}^{\dag}\;,
\end{equation}
and 
$$
M_k=A_{1}^{(k)}\otimes A_{2}^{(k)}\otimes\cdots\otimes A_{n}^{(k)}\;,
$$
where $A_{i}^{(k)}$ are $d_i\times d_i$ complex matrices.
\end{definition}

We denote the set of all separable superoperators by SEP. In the particular case, when considering only pure bipartite states, 
LOCC is effectively the same as SEP, but in general, LOCC is a strict subset of SEP.

\subsection{The Critical Set}

Let $\mathfrak{g}$ be the Lie algebra of $G$
contained in $End(\mathcal{H}_{n})$. We define the set of \emph{critical} states as
$$
Crit(\mathcal{H}_{n})\equiv\{\phi
\in\mathcal{H}_{n}|\left\langle \phi|X|\phi\right\rangle =0,X\in
\mathfrak{g\}}.
$$ 
The GHZ states~\cite{Vid00}, cluster states~\cite{Rau01}, graph states~\cite{Hei05}, and code states~\cite{Nie00} (such as the 5-qubit codes state, the Steane 7-qubit code state,
and the Shore 9-qubit code states) are all critical. In the bipartite case, with $\mathcal{H}_2=\mathcal{C}^{d}\otimes\mathcal{C}^{d}$,
the maximally entangled state is the only critical state up to a local unitary matrix. 
As we will see shortly, almost all states belong to an orbit of some critical state.

The Kempf-Ness theorem~\cite{KN} in the context of this paper says (the only hard
part of the theorem is the \textquotedblleft if\textquotedblright\ part of 3):

\begin{theorem}\label{KN}
Let $\phi\in\mathcal{H}_{n}$ and let $K\equiv SU(d_1)\otimes
SU(d_2)\otimes\cdots\otimes SU(d_n)$, then

\textbf{1.} $\phi\in Crit(\mathcal{H}_{n})$ if and
only if $\left\Vert g\phi\right\Vert \geq\left\Vert \phi\right\Vert $ for all
$g\in G$.\\

\textbf{2.} If $\phi\in Crit(\mathcal{H}_{n})$ and $g\in G$ then $\left\Vert g\phi\right\Vert
\geq\left\Vert \phi\right\Vert $ with equality if and only if $g\phi\in K\phi$.
Moreover, if $g$ is positive definite then the equality condition holds if and only if
$g\phi=\phi$.\\

\textbf{3.} If $\phi\in\mathcal{H}_{n}$ then $G\phi$ is closed in $\mathcal{H}_{n}$ if
and only if $G\phi\cap Crit(\mathcal{H}_{n})\neq\emptyset$.
\end{theorem}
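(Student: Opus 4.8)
\emph{Proof strategy.}\quad The whole theorem rests on a single one–variable convexity observation together with the polar (Cartan) decomposition; only the ``if'' part of statement~3 needs a genuinely nontrivial input. Write $\mathfrak{g}=\mathfrak{k}\oplus\mathfrak{p}$ for the Cartan decomposition, with $\mathfrak{k}=\mathrm{Lie}(K)$ the local skew-Hermitian traceless operators and $\mathfrak{p}=i\mathfrak{k}$ the local Hermitian traceless ones. Since $\langle\phi|A|\phi\rangle$ is purely imaginary for $A\in\mathfrak{k}$, the critical condition $\langle\phi|X|\phi\rangle=0$ for all $X\in\mathfrak{g}$ is equivalent to $\langle\phi|H|\phi\rangle=0$ for all $H\in\mathfrak{p}$. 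I would first record two standard facts: every $g\in G$ factors as $g=ke^{H}$ with $k\in K$ and $H\in\mathfrak{p}$ (with $k=I$ when $g$ is positive definite); and $Crit(\mathcal{H}_n)$ is $K$-invariant, since $\mathrm{Ad}(k)$ preserves $\mathfrak{p}$. The convexity lemma: for $H\in\mathfrak{p}$ and $\phi\in\mathcal{H}_n$, diagonalizing $H=\sum_j\mu_j|e_j\rangle\langle e_j|$ and writing $\phi=\sum_j\phi_j|e_j\rangle$ gives
$$f(s):=\|e^{sH}\phi\|^{2}=\sum_j e^{2s\mu_j}\,|\phi_j|^{2},$$
a nonnegative combination of the convex functions $s\mapsto e^{2s\mu_j}$; hence $f$ is convex, $f'(0)=2\langle\phi|H|\phi\rangle$, and, by linear independence of distinct real exponentials, $f$ is constant on a nondegenerate interval only if $\phi$ is supported on $\ker H$, i.e. only if $e^{sH}\phi=\phi$ for all $s$.

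Granting this, statements~1 and~2 come out immediately. For ``$\phi$ critical $\Rightarrow\|g\phi\|\ge\|\phi\|$'': write $g=ke^{H}$; then $\|g\phi\|^{2}=\|e^{H}\phi\|^{2}=f(1)$ and, since $\phi$ is critical, $f'(0)=2\langle\phi|H|\phi\rangle=0$, so the convex function $f$ attains a global minimum at $s=0$ and $f(1)\ge f(0)=\|\phi\|^{2}$. For the converse in~1: if $\|g\phi\|\ge\|\phi\|$ for all $g\in G$, then for each $H\in\mathfrak{p}$ the map $s\mapsto\|e^{sH}\phi\|^{2}$ is minimized at $s=0$, so $\langle\phi|H|\phi\rangle=0$; as this holds for all $H\in\mathfrak{p}$, $\phi$ is critical. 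For~2, equality $\|g\phi\|=\|\phi\|$ forces $f(0)=f(1)=\min f$, so the convex $f$ is constant on $[0,1]$, and the lemma gives $e^{H}\phi=\phi$, hence $g\phi=ke^{H}\phi=k\phi\in K\phi$; the reverse implication is trivial. When $g$ is positive definite we have $k=I$, so $g=e^{H}\in\exp(\mathfrak{p})$ and the same argument yields $g\phi=e^{H}\phi=\phi$.

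For statement~3, the ``only if'' direction is short: if $G\phi$ is closed and $\phi\neq0$, then $\{v\in G\phi:\|v\|\le\|\phi\|\}$ is closed in $\mathcal{H}_n$ and bounded, hence compact and nonempty, so $\|\cdot\|$ attains its minimum over $G\phi$ at some $\psi=g_{0}\phi$; then $\|g\psi\|\ge\|\psi\|$ for all $g\in G$, and by statement~1, $\psi\in G\phi\cap Crit(\mathcal{H}_n)$. The ``if'' direction is the crux. Let $\psi\in G\phi\cap Crit(\mathcal{H}_n)$, so $G\phi=G\psi$, and suppose $G\psi$ were not closed. A $G$-orbit is open in its closure, so $Y:=\overline{G\psi}\setminus G\psi$ is a nonempty closed $G$-invariant set, and by the Hilbert--Mumford criterion there is a one-parameter subgroup $\lambda:\mathbb{C}^{*}\to G$ with $v_{0}:=\lim_{t\to0}\lambda(t)\psi\in Y$; in particular $v_{0}\notin G\psi$. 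Conjugating $\lambda$ by a suitable $k\in K$ --- which, by $K$-invariance of $Crit(\mathcal{H}_n)$, only replaces $\psi$ by another critical vector in the same orbit --- I may assume $\lambda(e^{s})=e^{sH}$ with $H\in\mathfrak{p}$. Then, writing $\psi=\sum_j\psi_j|e_j\rangle$ in an eigenbasis $H=\sum_j\mu_j|e_j\rangle\langle e_j|$, existence of $\lim_{s\to-\infty}e^{sH}\psi$ forces $\mu_j\ge0$ whenever $\psi_j\neq0$, while criticality gives $0=\langle\psi|H|\psi\rangle=\sum_j\mu_j|\psi_j|^{2}$, a sum of nonnegative terms; hence $\mu_j=0$ on the support of $\psi$, so $e^{sH}\psi=\psi$ for every $s$ and $v_{0}=\psi\in G\psi$, a contradiction. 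Therefore $G\phi$ is closed.

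The step I expect to be the real obstacle is exactly this last one. Statements~1 and~2 and the ``only if'' half of~3 are elementary consequences of the convexity lemma, but the closedness of the orbit of a critical vector is not self-contained: it rests on the Hilbert--Mumford criterion (in the form that a nonempty closed $G$-invariant subset of an orbit closure is reached from a point of the orbit by a one-parameter subgroup), equivalently on the properness of the Kempf--Ness norm function on the symmetric space $K\backslash G$. Making precise the reduction ``conjugate $\lambda$ into $\exp(\mathfrak{p})$'' is a minor additional point to be justified along the way.
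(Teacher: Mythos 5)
The paper itself contains no proof of this theorem: it is quoted from Kempf and Ness \cite{KN}, with only the remark that the hard part is the ``if'' direction of statement 3. So there is nothing internal to compare against, and your argument has to stand on its own. Parts 1, 2, and the ``only if'' half of 3 do: the reduction of criticality to $\langle\phi|H|\phi\rangle=0$ for $H\in\mathfrak{p}$, the polar decomposition $g=ke^{H}$, the convexity of $s\mapsto\|e^{sH}\phi\|^{2}$ together with the equality analysis via linear independence of exponentials, and the compactness argument extracting a minimal (hence critical) vector from a closed orbit are all correct and complete. This is the standard Kempf--Ness route.

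The gap is in the ``if'' half of 3, at exactly the step you flag as ``a minor additional point'': it is \emph{not} true that a one-parameter subgroup $\lambda$ of $G$ can be conjugated into $\exp(\mathfrak{p})$ by an element of $K$. The weight spaces $V_a$ of $\lambda$ need not be mutually orthogonal --- already in $SL(2,\mathbb{C})$ the one-parameter subgroup $\lambda(t)=P\,\mathrm{diag}(t,t^{-1})P^{-1}$, with $P$ unipotent upper triangular, has eigenvectors $(1,0)^{T}$ and $(1,1)^{T}$ --- and unitary conjugation preserves the angles between eigenspaces, so no $k\in K$ renders $\lambda(e^{s})$ Hermitian. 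Without orthogonality you cannot write $0=\langle\psi|H|\psi\rangle=\sum_j\mu_j|\psi_j|^{2}$ as a sum of nonnegative terms, and the contradiction collapses. The standard repair is to conjugate $\lambda$ not by $K$ but by a suitable element $p$ of the parabolic $P(\lambda)=\{g\in G:\lim_{t\to0}\lambda(t)g\lambda(t)^{-1}\text{ exists}\}$: since $P(\lambda)$ preserves the filtration $\bigoplus_{b\ge a}V_b$ and $G=K\,P(\lambda)$, one can choose $p$ so that $p\lambda p^{-1}(e^{s})=e^{sH}$ with $H\in\mathfrak{p}$, and one then checks that the new limit still lies in the (unique, $G$-invariant) closed orbit inside $\overline{G\psi}\setminus G\psi$. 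With that inserted, your argument goes through. Alternatively, you can bypass Hilbert--Mumford entirely and prove closedness via properness and geodesic convexity of the Kempf--Ness function on $K\backslash G$, which is the route of the original paper \cite{KN} and of the lecture notes \cite{W}.
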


The dimension of the set $Crit(\mathcal{H}_{n})$ in the case of $n$ qubits (i.e. $d_1=d_2=...=d_n=2$) is
$1$ for $n=3$ (up to a local unitary matrix, only the GHZ state belongs to $Crit(\mathcal{H}_{n})$) and $2^n-3n$ for $n>3$.

The following theorem says that critical states are maximally entangled. 
\begin{theorem}\cite{Ver03,GW10}
Let $\psi\in\mathcal{H}_n$. Then,
$
\psi\in Crit(\mathcal{H}_{n})
$
if and only if the \emph{local} density matrices are all proportional to the identity (i.e.
each qudit is maximally entangled with the rest of the system).
\end{theorem}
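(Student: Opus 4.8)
The plan is to make the defining condition of $Crit(\mathcal{H}_n)$ completely explicit by writing out the Lie algebra $\mathfrak{g}$ and then recognizing the resulting expectation values as Hilbert--Schmidt inner products with the single-party reduced density matrices.

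First I would identify $\mathfrak{g}$ concretely. Inside $GL(\mathcal{H}_n)$ the group $G$ consists of product operators $g_1\otimes\cdots\otimes g_n$ with $\det g_i=1$, so differentiating a smooth curve through the identity shows that, as a subspace of $\mathrm{End}(\mathcal{H}_n)$,
\[
\mathfrak{g}=\sum_{i=1}^{n} I_1\otimes\cdots\otimes I_{i-1}\otimes\mathfrak{sl}(d_i,\mathbb{C})\otimes I_{i+1}\otimes\cdots\otimes I_n ,
\]
where $\mathfrak{sl}(d_i,\mathbb{C})$ is the space of all traceless $d_i\times d_i$ complex matrices and $I_j$ is the identity on the $j$-th factor. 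Consequently $\langle\phi|X|\phi\rangle=0$ for all $X\in\mathfrak{g}$ if and only if $\langle\phi|\,I_1\otimes\cdots\otimes X_i\otimes\cdots\otimes I_n\,|\phi\rangle=0$ for every index $i$ and every traceless $X_i$.

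Next I would introduce the reduced operators. For $\phi\in\mathcal{H}_n$ (not necessarily normalized) set $\rho_i\equiv\mathrm{Tr}_{\widehat{\imath}}\,|\phi\rangle\langle\phi|$, the partial trace of $|\phi\rangle\langle\phi|$ over all subsystems except the $i$-th. The defining property of the partial trace gives
\[
\langle\phi|\,I_1\otimes\cdots\otimes X_i\otimes\cdots\otimes I_n\,|\phi\rangle=\mathrm{Tr}\!\left(X_i\rho_i\right)
\]
for every $d_i\times d_i$ matrix $X_i$, so by the previous paragraph $\phi\in Crit(\mathcal{H}_n)$ if and only if $\mathrm{Tr}(X_i\rho_i)=0$ for all $i$ and all traceless $X_i$.

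To finish I would use the orthogonal decomposition of $M_{d_i}(\mathbb{C})$ under the Hilbert--Schmidt inner product $\langle A,B\rangle=\mathrm{Tr}(A^\dagger B)$ into the line $\mathbb{C}I_i$ and its orthogonal complement, which is exactly $\mathfrak{sl}(d_i,\mathbb{C})$. The condition $\mathrm{Tr}(X_i\rho_i)=0$ for all traceless $X_i$ (equivalently for all $X_i^\dagger$ traceless) says precisely that $\rho_i$ is Hilbert--Schmidt-orthogonal to $\mathfrak{sl}(d_i,\mathbb{C})$, hence $\rho_i\in\mathbb{C}I_i$; since $\rho_i\geq 0$ and $\mathrm{Tr}\,\rho_i=\|\phi\|^2$, this forces $\rho_i=(\|\phi\|^2/d_i)\,I_i$. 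Thus $\phi\in Crit(\mathcal{H}_n)$ exactly when every local density matrix is proportional to the identity, and the converse is obtained by reading the same chain of equivalences backwards. The only point requiring care is the first step: pinning down the Lie algebra of the tensor-product SLOCC group, and observing that over $\mathbb{C}$ the traceless matrices already exhaust the Hilbert--Schmidt complement of the scalars, so the vanishing condition genuinely forces $\rho_i$ to be a multiple of the identity rather than merely constraining part of it.
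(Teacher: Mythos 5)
Your proof is correct and complete. The paper does not actually prove this theorem --- it only cites it to \cite{Ver03,GW10} --- so there is no in-paper argument to compare against; your computation (identifying $\mathfrak{g}$ as the span of the single-site traceless operators $I_1\otimes\cdots\otimes X_i\otimes\cdots\otimes I_n$, rewriting $\langle\phi|X|\phi\rangle$ as $\mathrm{Tr}(X_i\rho_i)$, and using Hilbert--Schmidt orthogonality of $\mathfrak{sl}(d_i,\mathbb{C})$ to $\mathbb{C}I_i$ to force $\rho_i\propto I_i$) is exactly the standard derivation from the cited references, and each step, including the reduction by linearity to single-site generators and the closure of the traceless matrices under the adjoint, is handled properly.
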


If an SLOCC orbit $\mathcal{O}_{\psi}$ contains a critical state $\psi_c$, then $\psi_c$ can be regarded as the maximally entangled state of the orbit $\mathcal{O}_{\psi}$. In such cases, we will take $\psi_c$ to represent the orbit and denote the orbit
by $\mathcal{O}_{\psi_c}$. The corollary below implies that almost all SLOCC classes can be represented by a critical state.

\begin{corollary}~\cite{W}
The set of all states of the form $g\psi/\|g\psi\|$, where $g\in G$ and $\psi\in Crit(\mathcal{H}_n)$, 
is dense in $\mathcal{H}_n$.
\end{corollary}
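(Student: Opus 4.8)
The plan is to reduce the statement to part~3 of the Kempf--Ness theorem (Theorem~\ref{KN}), which identifies the $G$-orbits that meet $Crit(\mathcal{H}_{n})$ with the \emph{closed} $G$-orbits in $\mathcal{H}_{n}$. Since every state in the set considered has unit norm, ``dense in $\mathcal{H}_{n}$'' is to be read as dense in the unit sphere (equivalently, that $\mathbb{R}_{>0}\cdot\{g\psi/\|g\psi\|\}$ is dense in $\mathcal{H}_{n}$). Writing $C:=\{v\in\mathcal{H}_{n}\setminus\{0\}:Gv\ \text{is closed in}\ \mathcal{H}_{n}\}$, I would first establish the set-theoretic identity
\begin{align*}
&\{\,g\psi/\|g\psi\|\ :\ g\in G,\ \psi\in Crit(\mathcal{H}_{n})\setminus\{0\}\,\}\\
&\qquad\qquad=\{\,v/\|v\|\ :\ v\in C\,\}.
\end{align*}
For ``$\subseteq$'': given $\psi\in Crit$ and $g\in G$, the vector $v:=g\psi$ satisfies $Gv=G\psi$, which is closed by part~3 of Theorem~\ref{KN} since $\psi\in G\psi\cap Crit$; thus $v\in C$ and $v/\|v\|=g\psi/\|g\psi\|$. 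For ``$\supseteq$'': if $Gv$ is closed and $v\neq0$ then $0\notin Gv$ (else $Gv=\{0\}$), and part~3 of Theorem~\ref{KN} gives $\psi_{c}\in Gv\cap Crit$, necessarily $\neq0$, so $v=g\psi_{c}$ for some $g\in G$ and $v/\|v\|=g\psi_{c}/\|g\psi_{c}\|$. Because the normalization map $\mathcal{H}_{n}\setminus\{0\}\to\{\text{unit vectors}\}$ is continuous and surjective, it now suffices to prove that $C$ is dense in $\mathcal{H}_{n}$.

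Next I would remove the null cone. In the situation where the corollary is not vacuous there is a state $\psi_{c}\in Crit(\mathcal{H}_{n})\setminus\{0\}$; its orbit $G\psi_{c}$ is closed by part~3 of Theorem~\ref{KN} and does not contain $0$, so $0\notin\overline{G\psi_{c}}$ and $\psi_{c}$ lies outside the null cone $\mathcal{N}=\{v\in\mathcal{H}_{n}:0\in\overline{Gv}\}$. Hence $\mathcal{N}$ is a proper closed $G$-stable subvariety, $\mathcal{H}_{n}\setminus\mathcal{N}$ is open and dense, and it is enough to show that $C$ is dense in $\mathcal{H}_{n}\setminus\mathcal{N}$ --- i.e.\ that a generic semistable state already has a closed orbit.

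The last step is the real content, and here I would lean on geometric invariant theory (this is the part behind the citation \cite{W}). By Richardson's theorem the $G$-action on $\mathcal{H}_{n}$ has a principal (generic) isotropy subgroup $H$; in the cases at hand --- precisely those admitting critical states --- $H$ is reductive, and Luna's slice theorem then shows that the orbit of a generic point of $\mathcal{H}_{n}$ is closed, and in fact that there is a $G$-stable dense open $U\subseteq\mathcal{H}_{n}\setminus\mathcal{N}$ consisting entirely of closed orbits. Then $U\subseteq C$, and the corollary follows. I expect this last point to be the main obstacle: ``$Gv$ closed'' is not an open condition --- the closed-orbit locus is in general neither open nor closed --- so a naive perturbation argument fails, and one genuinely needs the slice theorem to exhibit a dense \emph{open} family of closed orbits, together with the reductivity of the principal isotropy, which is exactly where the hypothesis $Crit(\mathcal{H}_{n})\neq\emptyset$ enters. (Alternatively one can argue from the Hilbert--Mumford criterion, since the non-closed orbits outside $\mathcal{N}$ are captured, up to conjugacy, by finitely many one-parameter degenerations and therefore fill only a proper subvariety of $\mathcal{H}_{n}$.)
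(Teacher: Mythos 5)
The paper itself offers no proof of this corollary --- it is simply quoted from \cite{W} --- so there is no internal argument to compare against and your proposal has to stand on its own. Its skeleton is sound as far as it goes: identifying the set in question with the set of normalized vectors whose $G$-orbit is closed is a correct application of part 3 of the Kempf--Ness theorem (both inclusions are argued properly), and the observation that a nonzero critical state is semistable, so that the null cone is a proper closed $G$-stable subvariety, is also correct.

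The gap is in the final step, which you yourself identify as ``the real content.'' You assert that the principal isotropy group $H$ is reductive ``in the cases at hand --- precisely those admitting critical states,'' and then let Richardson/Luna (equivalently, Popov's stability criterion) finish the job. But ``$Crit(\mathcal{H})\neq\{0\}$ implies the principal isotropy is reductive, hence the generic orbit is closed'' is not a valid principle for representations of reductive groups. Take $SL(2,\mathbb{C})$ acting on $\mathbb{C}^{2}\oplus\mathbb{C}$ (standard plus trivial): the critical set is $\{0\}\oplus\mathbb{C}\neq\{0\}$, the closed orbits are exactly the single points of $\{0\}\oplus\mathbb{C}$, their union is nowhere dense, and the principal isotropy is a one-dimensional unipotent subgroup. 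So the implication you need must exploit the specific structure of $SL(d_{1},\mathbb{C})\otimes\cdots\otimes SL(d_{n},\mathbb{C})$ acting on $\mathbb{C}^{d_{1}}\otimes\cdots\otimes\mathbb{C}^{d_{n}}$; establishing reductivity of the generic stabilizer (or, equivalently, stability of the action) for precisely those tensor formats that admit nonzero critical states is the entire content of the result being cited to \cite{W}, and your proposal asserts it rather than proves it. The parenthetical alternative via Hilbert--Mumford is likewise incomplete: the fact that destabilizing one-parameter subgroups fall into finitely many conjugacy classes stratifies the null cone, but it does not by itself show that the semistable points with non-closed orbit lie in a proper subvariety. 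To close the gap you would need to actually compute or bound the generic stabilizer for these representations (finite for generic systems of four or more qudits, an explicit reductive group in the remaining cases) and only then invoke the slice theorem.
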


Moreover, part 2 of the Kempf-Ness theorem above implies the uniqueness of critical states; that is, up to local unitary matrix there can be at most one critical state in any SLOCC class $\mathcal{O}_{\psi}$. That is, critical states are the natural representatives of SLOCC orbits, and are the unique maximally entangled states in their SLOCC orbits.

The main results of this paper are expressed in terms of the stabilizer group of the state representing the SLOCC orbit, and
the multipartite generalization of the reduced density matrices, which we call
the associate density matrices (ADMs). In the next two sections we define and prove useful theorems about stabilizer groups and ADMs. These will be fruitful when we present our main results.

\section{The Stabilizer Group}\label{sec:stab}

In section~\ref{sec:main} we will see that the stabilizer group naturally arrises in the analysis of separable operations.
However, unlike the stabilizer formalism of quantum error correcting codes~\cite{Nie00}, in general, the stabilizer group considered in this paper is not a subgroup of the Pauli group. Instead, it is a subgroup of $\tilde{G}$, which is a much bigger group than the Pauli group.  

\begin{definition}
The \emph{stabilizer} group of $\psi\in\mathcal{H}_n$ is a subgroup of $\tilde{G}$
defined by
$$
\text{Stab}(\psi)\equiv\{g\in \tilde{G}\big|\;g\psi=\psi\}\;.
$$
\end{definition}
Note that for $g\in \tilde{G}$ we have $\text{Stab}(g\psi)=g\text{Stab}(\psi)g^{-1}$. That is, the stabilizer groups of two states in the same SLOCC class are related to each other by a conjugation.

\begin{proposition}\label{uni}
If the stabilizer group $Stab(\psi)$, for $\psi\in\mathcal{H}_n$, is finite, then there exists
a state $\phi\in\mathcal{O}_{\psi}$ such that $Stab(\phi)\subset U(d_1)\otimes U(d_2)\otimes\cdots\otimes U(d_n)$.
\end{proposition}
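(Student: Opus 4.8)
The plan is to exploit the Kempf--Ness theorem (Theorem~\ref{KN}) together with the conjugation behavior of the stabilizer group, $\mathrm{Stab}(g\psi)=g\,\mathrm{Stab}(\psi)\,g^{-1}$. The key idea is that a \emph{finite} subgroup of $\tilde G$ can always be conjugated into the maximal compact subgroup $U(d_1)\otimes\cdots\otimes U(d_n)$, and that the conjugating element can be realized by acting on $\psi$. First I would pass from $\mathrm{Stab}(\psi)\subset\tilde G$ to its image $H$ in $G$ (using that we ignore global phase, so we may normalize determinants or simply work modulo scalars); $H$ is still finite. The standard fact I would invoke is that a finite subgroup of $\mathrm{GL}(N,\mathbb{C})$ is conjugate to a subgroup of $U(N)$: average any inner product over the finite group to get an $H$-invariant positive-definite Hermitian form, and the positive square root $h$ of the associated matrix conjugates $H$ into the unitary group. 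The subtlety here is that $h$ must be taken \emph{inside} $\tilde G$, i.e.\ a tensor product of local matrices, not an arbitrary element of $\mathrm{GL}(\mathcal{H}_n)$.

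To arrange that, I would instead average not an abstract inner product but pull back the structure to the orbit. Concretely: since $\mathrm{Stab}(\psi)$ is finite, the orbit point $\psi$ has a closed $G$-orbit — a finite stabilizer forces the orbit to be of maximal dimension, and more to the point one can argue $G\psi$ is closed, hence by part~3 of Theorem~\ref{KN} the orbit meets $\mathrm{Crit}(\mathcal{H}_n)$. (Alternatively: a finite stabilizer means the orbit map is finite-to-one, which is the generic/stable situation where closedness holds; this is where I would lean on the Kempf--Ness picture.) So choose $\phi=g\psi/\|g\psi\|\in\mathcal{O}_\psi\cap\mathrm{Crit}(\mathcal{H}_n)$. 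Then $\mathrm{Stab}(\phi)=g\,\mathrm{Stab}(\psi)\,g^{-1}$ is again finite, and I claim it lies in $U(d_1)\otimes\cdots\otimes U(d_n)$.

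For that claim I would use part~2 of Theorem~\ref{KN}: if $h\in\mathrm{Stab}(\phi)$ then $h\phi=\phi$, so in particular $\|h\phi\|=\|\phi\|$. Write $h=ku$ via a polar-type decomposition with $k\in K=SU(d_1)\otimes\cdots\otimes SU(d_n)$ (the compact part) and a positive-definite local factor — this works because $\tilde G$ is a product of general linear groups, each of which has the polar decomposition $\mathrm{GL}(d,\mathbb{C})=U(d)\cdot(\text{pos.\ def.})$, and tensoring these gives $h = (\text{unitary part})\cdot(\text{positive part})$ with both factors in $\tilde G$ up to scalars. Since $\phi$ is critical and the positive-definite factor $p$ satisfies $p\phi = (\text{unitary})^{-1}\phi$, hence $\|p\phi\|=\|\phi\|$, the equality case of part~2 (the "moreover" clause for positive definite $g$) forces $p\phi=\phi$, so $p$ acts trivially and $h$ equals its unitary part on the relevant space; modulo the phase/determinant bookkeeping this puts $\mathrm{Stab}(\phi)$ inside $U(d_1)\otimes\cdots\otimes U(d_n)$.

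The main obstacle is the first reduction: showing that finiteness of $\mathrm{Stab}(\psi)$ implies the orbit is closed (equivalently meets the critical set). The naive averaging argument produces a conjugating operator on all of $\mathcal{H}_n$, not a local one, so it does not immediately respect the tensor structure; the honest route goes through the Kempf--Ness/geometric-invariant-theory statement that finite stabilizer (a dimension condition on the orbit plus properness of the moment map) yields a closed orbit. I would want to state this carefully, perhaps citing \cite{GW10} or \cite{W} for the precise form in the $n$-qudit setting, since it is the one genuinely nontrivial input; everything after that is polar decomposition plus the equality clause of Theorem~\ref{KN}.
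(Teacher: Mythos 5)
There is a genuine gap, and it sits exactly where you flagged ``the one genuinely nontrivial input.'' Your reduction rests on the claim that a finite stabilizer forces $G\psi$ to be closed, hence (by part~3 of Theorem~\ref{KN}) that $\mathcal{O}_\psi$ meets $Crit(\mathcal{H}_n)$. That implication is false: finiteness of the stabilizer gives an orbit of maximal dimension, but maximal-dimensional orbits are typically \emph{not} closed (their closures contain lower-dimensional orbits), and there are states in the null cone with finite stabilizer whose orbits therefore cannot be closed. The paper itself signals this: Proposition~\ref{nolan1} assumes \emph{both} that $Stab(\psi)$ is finite \emph{and} that a critical state exists in the orbit --- the second hypothesis would be redundant if your claim held. (Closedness plus finite stabilizer is the GIT notion of stability; neither condition implies the other.) A secondary, smaller issue: in your last step, the equality clause of part~2 gives $p\phi=\phi$, i.e.\ $p\in Stab(\phi)$, not $p=I$; to conclude that $h$ is unitary you still need to argue $p=I$, e.g.\ by noting that a positive-definite element of a \emph{finite} group has finite order, so its (positive real) eigenvalues are roots of unity and hence all equal to $1$.

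The paper's proof takes an entirely different and more elementary route that dissolves the obstacle you identified (that averaging over the group produces a conjugator on all of $\mathcal{H}_n$ rather than a local one): it averages \emph{within each tensor factor}. Writing $S_k=A_k^{(1)}\otimes\cdots\otimes A_k^{(n)}$, for each fixed slot $l$ the matrices $\{A_k^{(l)}\}$ form a finite subgroup of $\mathrm{GL}(d_l,\mathbb{C})$ (up to the scalar ambiguity in the factorization), and the positive square root $\Delta^{(l)}=\bigl(\sum_k A_k^{(l)\dag}A_k^{(l)}\bigr)^{1/2}$ conjugates that subgroup into $U(d_l)$. The tensor product $\Delta=\Delta^{(1)}\otimes\cdots\otimes\Delta^{(n)}$ is then automatically an element of $\tilde G$, and $\phi=\Delta\psi/\|\Delta\psi\|$ has unitary stabilizer $\Delta\,Stab(\psi)\,\Delta^{-1}$. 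No appeal to closed orbits, critical states, or Kempf--Ness is needed. If you want to salvage your approach, you would have to restrict to orbits that do contain a critical state --- which is the content of Proposition~\ref{nolan1}, not of Proposition~\ref{uni}.
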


\begin{proof}
The proof is by construction.
Denote by $Stab(\psi)\equiv\{S_k\}_{k=1,2,...,m}$, where $m$ is the number of elements in $Stab(\psi)$.
Since $Stab(\psi)$ is a subgroup of $\tilde{G}$, it follows that the elements $S_k$ can be written as:
$$
S_k=A_{k}^{(1)}\otimes A_{k}^{(2)}\otimes\cdots\otimes A_{k}^{(n)}\;,
$$
where $A_{k}^{(l)}\in\text{GL}(d_l,\mathbb{C})$ and $l=1,2,...,n$. Clearly, for a fixed $l$, the set $\{A_{k}^{(l)}\}_{l=1,2,...,n}$
form a subgroup of $\text{GL}(d_l,\mathbb{C})$. This subgroup is equivalent (up to a conjugation) to a unitary group. To see that, denote,
$$
\Delta^{(l)}\equiv\left(\sum_{k=1}^{m}A_{k}^{(l)\dag}A_{k}^{(l)}\right)^{1/2}\;.
$$
Clearly, $\Delta^{(l)}\in\text{GL}(d_l,\mathbb{C})$ and the elements $U_{k}^{(l)}\equiv \Delta^{(l)}A_{k}^{(l)}(\Delta^{(l)})^{-1}$
form a unitary representation of the subgroup $\{A_{k}^{(l)}\}$. Now, denote by
\begin{align}
& \Delta\equiv \Delta^{(1)}\otimes\Delta^{(2)}\otimes\cdots\otimes \Delta^{(n)}\nonumber\\
& U_k\equiv U_{k}^{(1)}\otimes U_{k}^{(2)}\otimes\cdots\otimes U_{k}^{(n)}\;.
\end{align}
By construction we have
$$
U_k\Delta|\psi\rangle=\Delta|\psi\rangle\;.
$$
Hence, the stabilizer group of $\phi\equiv \Delta|\psi\rangle/\|\Delta|\psi\rangle\|$ consists of unitaries.
\end{proof}
Note that for generic states of 4 or more subsystems~\cite{W} the stabilizer group is finite and therefore the proposition above can be applied to most (i.e. a dense subset) of the states in $\mathcal{H}_n$. We now show that if the orbit of a state with finite stabilizer contains a critical state
then the stabilizer of the critical state is unitary. Since most orbits contains a critical state, it is most natural to choose the critical states to represent the SLOCC orbit (see Fig.1).

\begin{proposition}\label{nolan1}
If the stabilizer group $Stab(\psi)$ for $\psi\in\mathcal{H}_n$ is finite, and there exists
a critical state $\psi_c$ in the orbit $\mathcal{O}_{\psi}$, then $Stab(\psi_c)\subset U(d_1)\otimes U(d_2)\otimes\cdots\otimes U(d_n)$.
\end{proposition}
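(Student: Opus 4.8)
The plan is to peel a stabilizer element of $\psi_c$ into a scalar times an element of $G$, then into a unitary times a positive-definite element of $G$, and use the Kempf--Ness theorem together with finiteness to eliminate the non-unitary pieces. First I would record two elementary facts. (i) $\text{Stab}(\psi_c)$ is finite: since $\psi_c\in\mathcal{O}_\psi$ we have $\psi_c=g_0\psi/\|g_0\psi\|$ for some $g_0\in G$, and by the conjugation remark following the definition of the stabilizer group, $\text{Stab}(\psi_c)=g_0\,\text{Stab}(\psi)\,g_0^{-1}$ has the same (finite) cardinality as $\text{Stab}(\psi)$. (ii) Any $h\in\text{Stab}(\psi_c)$ can be written $h=\omega g$ with $\omega\in\mathbb{C}^{*}$ and $g\in G$, obtained by dividing each of the $n$ tensor factors of $h$ by a $d_l$-th root of its determinant.

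Next I would pin down $|\omega|$. From $h\psi_c=\psi_c$ we get $g\psi_c=\omega^{-1}\psi_c$, so $\|g\psi_c\|=|\omega|^{-1}\|\psi_c\|$. Because $\psi_c$ is critical and $g\in G$, part~1 of Theorem~\ref{KN} gives $\|g\psi_c\|\geq\|\psi_c\|$, hence $|\omega|\leq 1$. Running the same argument on $h^{-1}\in\text{Stab}(\psi_c)$, which equals $\omega^{-1}$ times an element of $G$, yields $|\omega|\geq 1$. Thus $|\omega|=1$ and $\|g\psi_c\|=\|\psi_c\|$.

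Now I would bring in the polar decomposition. Writing $g=g^{(1)}\otimes\cdots\otimes g^{(n)}$ with $g^{(l)}\in\text{SL}(d_l,\mathbb{C})$, let $g^{(l)}=u^{(l)}p^{(l)}$ be the polar decomposition; a determinant count shows that $u^{(l)}\in\text{SU}(d_l)$ and that $p^{(l)}$ is positive definite with determinant one. Hence $g=up$ with $u=u^{(1)}\otimes\cdots\otimes u^{(n)}\in K$ and $p=p^{(1)}\otimes\cdots\otimes p^{(n)}\in G$ positive definite. Unitary invariance of the norm gives $\|p\psi_c\|=\|g\psi_c\|=\|\psi_c\|$, and since $p$ is positive definite the positive-definite equality clause of part~2 of Theorem~\ref{KN} forces $p\psi_c=\psi_c$, i.e. $p\in\text{Stab}(\psi_c)$. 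But a positive-definite matrix lying in a finite group has all eigenvalues equal to positive roots of unity, hence equal to $1$, so $p=I$. Therefore $g=u\in K$ and $h=\omega u$ with $|\omega|=1$; absorbing the phase $\omega$ into one tensor factor keeps that factor unitary, so $h\in U(d_1)\otimes\cdots\otimes U(d_n)$, as claimed.

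The one genuine obstacle is that $\text{Stab}(\psi_c)$ a priori sits inside $\tilde{G}$ rather than $G$, so the Kempf--Ness theorem is not directly applicable; this is circumvented by splitting off the overall scalar and exploiting that \emph{both} $h$ and $h^{-1}$ belong to the finite stabilizer to force $|\omega|=1$. The remainder is the standard mechanism ``polar-decompose and let the positive-definiteness clause of Kempf--Ness kill the positive part'', together with the elementary fact that a positive-definite element of a finite group is the identity.
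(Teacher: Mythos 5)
Your proof is correct and follows essentially the same route as the paper's: polar-decompose the stabilizer element, invoke the positive-definite equality clause of the Kempf--Ness theorem to force the positive part into $\text{Stab}(\psi_c)$, and conclude that it is the identity. The two refinements you add --- splitting off the overall scalar and using both $h$ and $h^{-1}$ to pin $|\omega|=1$ so that the positive part genuinely lies in $G$, and replacing the appeal to Proposition~\ref{uni} by the observation that a positive-definite element of a finite group must be the identity --- in fact tighten a small gap in the paper's argument, which asserts $p\in G$ in the polar decomposition without justifying the determinant normalization.
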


\begin{proof}
Let $g\in Stab(\psi_c)\subset\tilde{G}$; i.e. $g|\psi_c\rangle=|\psi_c\rangle$. In its polar decomposition $g=up$, where 
$u\in U(d_1)\otimes U(d_2)\otimes\cdots\otimes U(d_n)$ and $p\in G$ is a \emph{positive} definite matrix. Note that 
$\|\psi_c\|=\|g\psi_c\|=\|p\psi_c\|$.
From theorem~\ref{KN} (Kempf-Ness), $|\psi_c\rangle$ being critical implies that $\|p\psi_c\|\geq\|\psi_c\|$ with equality if and only if
$p|\psi_c\rangle=|\psi_c\rangle$. 
This implies that $p\in Stab(\psi_c)$. Moreover, since $\psi_c\in\mathcal{O}_{\psi}$ there exists $h\in G$ such that
$Stab(\psi_c)=hStab(\psi)h^{-1}$. From proposition~\ref{uni}, $Stab(\psi)$ is itself conjugate to a unitary subgroup.
Thus, $p$ is similar to a unitary matrix. However, $p$ is positive definite matrix and since it is similar to a unitary matrix it must be an identity.
\end{proof}

Even though for most states in $\mathcal{H}_n$ the stabilizer is a finite group, for many important states such as GHZ states, and other graph states, the stabilizer group is not finite. For such non-generic states we show that the stabilizer can not even be compact.

\begin{proposition}\label{noncom}
Let $\phi\in\mathcal{H}_{n}$ be such that $Stab(\phi)$ is compact then
$Stab(\phi)$ is finite.
\end{proposition}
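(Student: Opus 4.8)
The plan is to show that a compact stabilizer must in fact be finite by exploiting the fact that $\mathrm{Stab}(\phi)$ always sits inside $\tilde G$ as a group of product matrices, and that compactness forces it (up to conjugation) to be a \emph{closed} subgroup of the product unitary group $U(d_1)\otimes\cdots\otimes U(d_n)$. First I would run the same averaging trick used in the proof of Proposition~\ref{uni}: given a compact $\mathrm{Stab}(\phi)$, equip it with normalized Haar measure, form the positive operator $\Delta^{(l)} \equiv \bigl(\int A^{(l)\dag}A^{(l)}\,dg\bigr)^{1/2}$ on each tensor factor (this integral converges precisely because the group is compact), and conjugate by $\Delta \equiv \Delta^{(1)}\otimes\cdots\otimes\Delta^{(n)}$. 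As in Proposition~\ref{uni}, this produces a state $\phi' \in \mathcal{O}_\phi$ whose stabilizer is a compact subgroup of $U(d_1)\otimes\cdots\otimes U(d_n)$; since the stabilizer of $\phi$ is conjugate to that of $\phi'$, it suffices to prove finiteness there.

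Next I would argue that this compact subgroup is actually \emph{discrete}, hence (being compact) finite. The key point is that its Lie algebra must be trivial. Suppose not: then $\mathrm{Stab}(\phi')$ contains a one-parameter subgroup $e^{tX}$ with $X\neq 0$ in the Lie algebra of $K = SU(d_1)\otimes\cdots\otimes SU(d_n)$ — here one first observes the stabilizer lies in the product of $U(d_l)$, and after pulling out the overall phase freedom (which acts trivially on $\phi'$ up to scalars but we are working with $\tilde G$ acting on honest vectors, so $g\phi'=\phi'$ exactly) one may take $X$ to be in $\mathfrak{g}$, the Lie algebra of $G$, and in fact in the compact form $\mathfrak{k}$ of $\mathfrak g$. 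Then $e^{tX}\phi' = \phi'$ for all $t$ gives $X\phi' = 0$ by differentiating at $t=0$. Now I would invoke the structure of $\mathfrak g$: for $X \in \mathfrak{k}$ (anti-Hermitian on each factor), consider $iX \in i\mathfrak k$, which is Hermitian and lies in $\mathfrak g$ as well (since $\mathfrak g$ is the complexification-compatible real form — more precisely $\mathfrak g = \mathfrak k \oplus i\mathfrak k$ as the $SL$ Lie algebra). Then $e^{t(iX)}$ is a positive-definite one-parameter subgroup fixing $\phi'$ (because $e^{tiX}\phi' = \phi'$ follows from $iX\phi'=0$, which follows from $X\phi'=0$).

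The main obstacle — and the crux of the argument — is to derive a contradiction from the existence of such a positive-definite one-parameter family in the stabilizer. For this I would lean on part 2 of the Kempf–Ness theorem (Theorem~\ref{KN}) together with the density corollary: the orbit $\mathcal{O}_{\phi'}$ either contains a critical state or can be handled by a limiting/closure argument. If $G\phi'$ is closed it meets $Crit(\mathcal H_n)$, and then for the critical representative $\psi_c$, Theorem~\ref{KN}(2) says any positive-definite $p$ with $p\psi_c = \psi_c$ must satisfy $\|p\psi_c\| = \|\psi_c\|$ forcing $p = I$ on the relevant subspace; applied to $p = e^{tiX'}$ (the conjugate of $e^{tiX}$ into $\mathrm{Stab}(\psi_c)$) this gives $iX' = 0$, hence $X = 0$, a contradiction. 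If $G\phi'$ is not closed, one passes to a point in the closure of the orbit lying in the critical set and transports the one-parameter subgroup there via a limiting argument, using that the stabilizer can only grow under such degeneration — the delicate part being to verify the limiting positive-definite subgroup is still nontrivial, which follows because conjugation by a convergent sequence in $G$ cannot collapse a nontrivial connected subgroup to a point. Either way one concludes $\mathfrak{k} \cap \mathrm{Lie}(\mathrm{Stab}(\phi')) = 0$, so $\mathrm{Stab}(\phi')$ is a compact zero-dimensional Lie group, i.e.\ finite, and therefore so is $\mathrm{Stab}(\phi)$.
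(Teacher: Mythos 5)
Your central observation is correct and is the real content of the proposition: since the condition $X\phi'=0$ is $\mathbb{C}$-linear in $X$, the Lie algebra of $Stab(\phi')$ is a \emph{complex} subspace, so a positive-dimensional stabilizer that contains an anti-Hermitian $X\neq 0$ also contains the Hermitian element $iX$ and hence the positive-definite one-parameter subgroup $e^{t\,iX}$. But at exactly this point you take a detour that is both unnecessary and, in one branch, broken. You do not need Kempf--Ness at all: a nonzero Hermitian operator $iX$ has a nonzero real eigenvalue, so $\|e^{t\,iX}\|\to\infty$ as $t\to+\infty$ or $t\to-\infty$, and an unbounded subset cannot lie inside the compact group $Stab(\phi')$. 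That one sentence closes the proof. The Kempf--Ness route you actually write out only covers orbits meeting the critical set, and your proposed fix for the remaining case is wrong as stated: if $G\phi'$ is \emph{not} closed, any sequence $g_k$ with $g_k\phi'\to\psi_c\in Crit(\mathcal{H}_n)$ is necessarily divergent in $G$ (a convergent sequence would put $\psi_c$ in the orbit), so the claim that ``conjugation by a convergent sequence cannot collapse a nontrivial connected subgroup'' does not apply, and conjugation by a divergent sequence can indeed collapse the subgroup or fail to converge. A smaller technical wrinkle: in the Haar-averaging step the tensor factors $A^{(l)}$ of a stabilizer element are only defined up to scalars, so for a continuous group you must either make a continuous choice of representatives or average the full operator $g^{\dag}g$ over $Stab(\phi)$ at once; this is fixable but should be said.

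For comparison, the paper's proof is a three-line appeal to structure theory: $Stab(\phi)$ is a linear algebraic group over $\mathbb{C}$, its identity component is a connected complex linear algebraic group, and the only compact such group is trivial; hence a compact stabilizer is discrete, and compact plus discrete means finite. Your Lie-algebra argument, once repaired as above (equivalently, apply Liouville's theorem to the nonconstant entire map $z\mapsto e^{zX}$, all of whose values lie in $Stab(\phi')$), is precisely the concrete proof of that abstract fact, so the two routes are morally the same; yours just needs the indicated correction and loses nothing by skipping the critical-state machinery entirely.
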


\begin{proof}
The stabilizer of an element is a linear algebraic group over 
$\mathbb{C}$. The identity component of this group is therefore also a linear algebraic
group over $\mathbb{C}$. The only connected, linear algebraic group that is compact is the group
with one element.
\end{proof}

When the stabilizer group of a state $\psi\in\mathcal{H}_n$ is not finite and therefore non-compact, it would be very useful
to work with a compact subgroup $T\subset Stab(\psi)$, defined as the intersection of $Stab(\psi)$ with 
$U(d_1)\otimes U(d_2)\otimes\cdots\otimes U(d_n)$. We will discuss it in more details in Sec~\ref{sec:conti} and Sec~\ref{sec:ex}. 

\subsection{The Stabilizer Twirling Operation}

In this subsection we define the stabilizer twirling operation that will be used quite often when we discuss our main results
in the next section.

\begin{definition}
Let $\psi\in\mathcal{H}_n$ be a state with a finite unitary stabilizer. The $Stab(\psi)$-twirling operation, $\mathcal{G}(\cdot)$, is defined by
\begin{equation}\label{gtwirling}
\mathcal{G}(\sigma)\equiv\frac{1}{m}\sum_{k=1}^{m}U_{k}^{\dag}\sigma U_{k}\;,
\end{equation}
for any positive semi-definite operator $\sigma$ acting on $\mathcal{H}_n$.
\end{definition}

In Ref.~\cite{BRS07} it was shown that 
the twirling operation $\mathcal{G}$ can be factorized in terms of the irreps of the group (in our case
$Stab(\psi)$). In the following we make use of techniques introduced in Ref.~\cite{BRS07}, to which the reader is referred
for more details. Under the action of the unitary representation of a
compact group $Stab(\psi)$, a finite dimensional Hilbert space factorizes as follows
\begin{equation*}
\mathcal{H}=\sum_{q}\mathcal{H}_{q}=\sum_{q}\mathcal{M}_{q}\otimes \mathcal{N}_{q}
\end{equation*}
where $q$ labels the irreps of $Stab(\psi)$, $\mathcal{M}_{q}$ is the $q$th representations space, and $\mathcal{N}_{q}$
is the $q$th muliplicity space.
The $Stab(\psi)$-twirling operation has the form
\begin{equation}\label{factorization}
\mathcal{G}\left( \sigma \right) =\sum_{q}\mathcal{D}_{\mathcal{M}_{q}}\otimes \mathrm{id}_{\mathcal{N}_{q}}\left( \Pi _{\mathcal{H}_{q}}\sigma \Pi _{\mathcal{H}_{q}}\right) ,
\end{equation}
where $\Pi _{\mathcal{H}_{q}}$ is the projector onto $\mathcal{H}_{q},$ $\mathrm{id}_{\mathcal{N}_{q}}$ is the identity map on 
$\mathcal{N}_{q},$ and $\mathcal{D}_{\mathcal{M}_{q}}$ is the completely decohering map on $\mathcal{M}_{q}$; i.e. it denotes the trace-preserving operation that takes every operator on the Hilbert space $\mathcal{M}_q$ to a constant times the identity. In section~\ref{sec:niel} we will see that this factorization enables the elegant classification of all states
$\phi$ to which $\psi$ can be transformed deterministically by separable operations.

As we pointed out above, if the stabilizer group is not finite, then it is also non-compact and therefore there is no 
$Stab(\psi)$-twirling. However,  for the compact subgroup, $T\subset Stab(\psi)$ as defined above, there is an analog for the twirling operation. We extend the definition of the stabilizer twirling operation
(see Eq.(\ref{gtwirling})) to the compact group $T$.  
\begin{definition}
Let $\psi\in\mathcal{H}$ and assume $Stab(\psi)$ is non-compact. Let $T$ be the intersection of $Stab(\psi)$ with $U(d_1)\otimes U(d_2)\otimes\cdots\otimes U(d_n)$. Then, for any positive semi-definite operator $\sigma$ acting on $\mathcal{H}_n$, the $T$-twirling operation, $\mathcal{T}(\sigma)$, is defined by
\begin{equation}\label{gtwirling2}
\mathcal{T}(\sigma)\equiv\int dt\; t^{\dag}\sigma t\;,
\end{equation}
where $t\in T$ and $dt$ is the Haar measure over $T$.
\end{definition}

The factorization form in Eq.(\ref{factorization}) holds for the $T$-twirling as well. This $T$-twirling operation will become useful whenever the stabilizer group is non-compact. The simplest example of that is the 3 qubits GHZ state (see sec~\ref{sec:ex}). 

\subsection{SL- Invariant Polynomials}\label{SL}

In this subsection we discuss the quantification of entanglement in terms of $SL$-invariant polynomials and discuss their relationship to the stabilizer group. 

An $SL$-invariant polynomial, $f(\psi)$, is a polynomial in the components of the vector 
$\psi\in\mathcal{H}_n$, which is invariant under the action of the group $G$. That is, $f(g\psi)=f(\psi)$ for all $g\in G$. 
In the case of two qubits there exists only one unique $SL$-invariant polynomial.
It is homogeneous of degree $2$ and is given by the bilinear form $(\psi,\psi)$:
$$
f_2(\psi)\equiv (\psi,\psi)\equiv \langle\psi^{*}|\sigma_y\otimes\sigma_y|\psi\rangle\;\;,\;\;\psi\in\mathbb{C}^{2}\otimes\mathbb{C}^2\;.
$$
Its absolute value is the celebrated concurrence~\cite{Woo98}. 

Also in three qubits there exists a unique $SL$-invariant polynomial. It is homogeneous of degree $4$ and
is given by
$$
f_4(\psi)=\det\left[\begin{array}{cc} (\varphi_0,\varphi_0) & (\varphi_0,\varphi_1)\\ 
(\varphi_1,\varphi_0) &(\varphi_1,\varphi_1)\end{array}\right]\;,
$$
where the two qubits states $\varphi_i$ for $i=0,1$ are defined by the decomposition 
$|\psi\rangle=|0\rangle|\varphi_0\rangle+|1\rangle|\varphi_1\rangle$, and the bilinear form $(\varphi_i,\varphi_j)$ is defined above for two qubits.
The absolute value of $f_4$ is the
celebrated 3-tangle~\cite{CKW}. 

In four qubits, however, there are many $SL$-invariant polynomials and it is possible to show
that they are generated by 4 SL-invariant polynomials (see e.g.~\cite{GW10} for more details and refrences).

Any $SL$-invariant polynomial is generated by a finite number of homogeneous $SL$-invariant polynomials.
Moreover, for some states, all $SL$-invariant polynomials vanish. For example, the $W$ state 
$$
|W\rangle=(|100\rangle+|010\rangle+|001\rangle)/\sqrt{3}
$$ 
and (its generalization to $n$-qubits) has the property that $g_t|W\rangle=t|W\rangle$, where 
$g_t\equiv\text{diag}\{t,t^{-1}\}^{\otimes 3}$ and $t\in\mathbb{R}$ is non-zero. Note that $g_t\in SL(2,\mathbb{C})^{\otimes 3}$.
Thus, the value of any homogeneous $SL$-invariant polynomial, $f$, is zero on $|W\rangle$ (and its generalization to $n$ qubits) since 
$$
f(|W\rangle)=f(g_t|W\rangle)=f(t|W\rangle)=t^kf(|W\rangle),
$$   
where $k\in\mathbb{N}$ is the degree of $f$. Nevertheless, the set of states for which \emph{all} $SL$-invariant polynomials vanish
is of measure zero and is called here the \emph{null cone}. Since the null cone is of measure zero, for most states there exists at least one homogeneous $SL$-invariant polynomial that does not vanish on them. The following proposition will be useful for such states.

\begin{proposition}\label{gk}
Let $\psi\in\mathcal{H}_n$ and assume $\psi$ is not in the null cone.
Let $k\in\mathbb{N}$ be the degree of some homogeneous SL-invariant polynomial 
that is not zero on $\psi$. Then,
$$
Stab(\psi)\subset G_k
$$
where
$$
G_k\equiv\left\{g'\in\tilde{G}\big| g'=e^{i\frac{2\pi m}{k}}g\;;\; 1\leq m\leq k\;;\;g\in G\right\}\;
$$
is a subgroup of $\tilde{G}$. 
\end{proposition}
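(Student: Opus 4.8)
The plan is to take an arbitrary $g\in\text{Stab}(\psi)$ and show directly that $g\in G_k$, where $k$ is the degree of a fixed homogeneous $SL$-invariant polynomial $f$ with $f(\psi)\neq 0$; such an $f$ exists precisely because $\psi$ is assumed not to lie in the null cone. Since $\text{Stab}(\psi)\subset\tilde G$, I would first write $g=A_1\otimes A_2\otimes\cdots\otimes A_n$ with each $A_\ell\in\text{GL}(d_\ell,\mathbb{C})$, and rescale each factor to have unit determinant: setting $B_\ell\equiv(\det A_\ell)^{-1/d_\ell}A_\ell$ (for any fixed choice of the root) gives $\det B_\ell=1$, hence $g=c\,h$ with $h\equiv B_1\otimes\cdots\otimes B_n\in G$ and $c\equiv\prod_{\ell=1}^{n}(\det A_\ell)^{1/d_\ell}\in\mathbb{C}^{*}$.

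The second step is to feed the fixed-point identity $g\psi=\psi$ into $f$ and use its two defining properties --- homogeneity of degree $k$ and invariance under $G$:
\[
f(\psi)=f(g\psi)=f(c\,h\psi)=c^{k}\,f(h\psi)=c^{k}\,f(\psi),
\]
where the third equality is homogeneity and the last uses $h\in G$. Since $f(\psi)\neq 0$, this forces $c^{k}=1$, i.e. $c=e^{i2\pi m/k}$ for some integer $1\le m\le k$. Therefore $g=e^{i2\pi m/k}h$ with $h\in G$, which is exactly the assertion $g\in G_k$; as $g$ was arbitrary, $\text{Stab}(\psi)\subset G_k$. To round things off one verifies the parenthetical claim that $G_k$ is a subgroup of $\tilde G$, which is immediate from $(e^{i2\pi m_1/k}g_1)(e^{i2\pi m_2/k}g_2)=e^{i2\pi(m_1+m_2)/k}(g_1g_2)$ with $g_1g_2\in G$, and similarly for inverses.

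I do not anticipate a genuine obstacle; the argument is short once the right objects are in hand. The one point deserving care is that the factorization $g=c\,h$ is not unique --- $c$ is determined only up to a root of unity whose order divides $\mathrm{lcm}(d_1,\dots,d_n)$ --- but this is harmless, since the computation above shows that \emph{every} admissible $c$ satisfies $c^{k}=1$, and a single valid factorization is all the conclusion requires. It is also worth stressing where the hypothesis enters: the implication $c^{k}f(\psi)=f(\psi)\Rightarrow c^{k}=1$ uses $f(\psi)\neq 0$, and the statement genuinely fails without it --- for instance $\text{diag}(2,1/4)^{\otimes 3}$ lies in $\text{Stab}(|W\rangle)$ but has scalar part $2^{-3/2}$, which is not a root of unity, so $\text{Stab}(|W\rangle)\not\subset G_k$ for any $k$ (consistent with $|W\rangle$ lying in the null cone).
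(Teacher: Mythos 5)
Your proof is correct and follows essentially the same route as the paper: write the stabilizer element as a scalar multiple of an element of $G$, apply the homogeneity and $SL$-invariance of $f$ to the fixed-point identity, and conclude $c^k=1$ from $f(\psi)\neq 0$. The only difference is that you spell out the determinant-normalization producing the decomposition $g=c\,h$ (which the paper merely asserts) and add the instructive $|W\rangle$ counterexample showing the null-cone hypothesis is needed.
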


\begin{proof}
Let $f$ be an homogeneous SL-invariant polynomial of degree $k$ such that $f(\psi)\neq 0$.
Let $g'$ be an element in $Stab(\psi)$. Since $Stab(\psi)\subset\tilde{G}$, there exists $a\in\mathbb{C}$ such that
$g'=ag$, where $g\in G$. 
Thus,
\begin{align*}
f(\psi) =f(g'\psi)=f\left(ag\psi\right)
=a^kf\left(g\psi\right)
=a^kf\left(\psi\right)\;,
\end{align*} 
where the first equality follows from the fact that $g'\in Stab(\psi)$, whereas the last equality follows
from the fact that $g\in G$ and $f$ is $SL$-invariant polynomial.
Since $f(\psi)\neq 0$ it follows that $a^k=1$. This completes the proof.
\end{proof}

\begin{remark}
If $k$, in the proposition above, is equal to the dimension, $d_l$, of the subsystem $l$ (here $l=1,2,...,n$),
then $G_k=G$. This is because $e^{i2\pi m/d_l}g\in G$ for all $g\in G$, $l=1,2,...,n$ and $m=1,2,...,d_l$.
In particular, for systems of $n$ qubits $G_2=G$.
\end{remark}

In the following corollary, we show that for most states in $\mathcal{H}_n$ (i.e. a dense subset of $\mathcal{H}_n$)
the stabilizer group is a subgroup of $G$ (which is a smaller and somewhat simpler group to work with than $\tilde{G}$).

\begin{corollary}
Let $f_k$ and $f_{k'}$ be two homogenous $SL$-invariant polynomials of degrees $k$ and $k'$, respectively. 
Let $\psi\in\mathcal{H}_n$, and assume that $f_k(\psi)\neq 0$ and $f_{k'}(\psi)\neq 0$. Denote by $l=(k,k')$ the greatest common divisor (gcd) of $k$ and $k'$. Then 
$$
Stab(\psi)\subset G_l\;.
$$
\end{corollary}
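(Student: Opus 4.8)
The plan is to combine Proposition~\ref{gk} with an elementary number-theoretic fact about the group $G_k$. By Proposition~\ref{gk}, the hypothesis $f_k(\psi)\neq 0$ gives $Stab(\psi)\subset G_k$, and the hypothesis $f_{k'}(\psi)\neq 0$ gives $Stab(\psi)\subset G_{k'}$. Hence $Stab(\psi)\subset G_k\cap G_{k'}$, and it remains only to show that $G_k\cap G_{k'}=G_l$ where $l=(k,k')$.

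First I would unwind the definition: an element of $G_k$ is of the form $e^{i2\pi m/k}g$ with $g\in G$, i.e.\ it is a scalar multiple $a\cdot g$ of an element of $G$ where $a$ is a $k$th root of unity. So $G_k$ is exactly the set of $a g$ with $g\in G$ and $a^k=1$. An element $h=ag\in\tilde G$ (with $g\in G$) lies in $G_k$ iff $a^k=1$ — but one must be slightly careful, because the decomposition $h=ag$ is not unique: multiplying $g$ by any $\zeta\in G$ that is a scalar (i.e.\ $\zeta\in G\cap\mathbb C^\times I$) and dividing $a$ by $\zeta$ gives the same $h$. The scalars contained in $G=\mathrm{SL}(d_1)\otimes\cdots\otimes\mathrm{SL}(d_n)$ are the tensor products $\omega_1 I\otimes\cdots\otimes\omega_n I$ with $\omega_j^{d_j}=1$; as a scalar this is $\prod_j\omega_j$, so the scalars in $G$ form the group $\mu$ of $e$th roots of unity where $e=\mathrm{lcm}(d_1,\dots,d_n)$ (or a divisor thereof — in any case a fixed finite cyclic group, call its order $e$). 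This ambiguity is the one genuine subtlety, so I would phrase the membership criterion invariantly: $h\in G_k$ iff the well-defined coset $aG\cap\mathbb C^\times I \in \mathbb C^\times/\mu$ — equivalently $\det$-type invariant — satisfies $a^k\in\mu$. The cleanest device is to pick one subsystem, say of dimension $d_1$, and note $e^{i2\pi m/d_1}g\in G$ for all $g\in G$ (as observed in the Remark), so working modulo $d_1$-th roots is automatic.

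With the criterion in hand, the claim $G_k\cap G_{k'}=G_l$ is pure arithmetic. The inclusion $G_l\subset G_k\cap G_{k'}$ holds because $l\mid k$ and $l\mid k'$, so every $l$th root of unity is both a $k$th and a $k'$th root of unity, giving $G_l\subset G_k$ and $G_l\subset G_{k'}$. For the reverse, suppose $h=ag\in G_k\cap G_{k'}$ with $g\in G$; then (modulo the scalar ambiguity described above) $a^k=1$ and $a^{k'}=1$, so $a^{(k,k')}=1$ by Bézout — write $l=(k,k')=uk+vk'$ for integers $u,v$, whence $a^l=(a^k)^u(a^{k'})^v=1$ — and therefore $h\in G_l$. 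Folding in the scalar subtlety: what we actually get is $a^k,a^{k'}\in\mu$, so $(a^{|\mu|})^k=(a^{|\mu|})^{k'}=1$, hence $a^{|\mu|l}=1$; but $|\mu|$ divides each $d_j$-cycle and one checks $e^{i2\pi/(|\mu| l)}g$ differs from an element of $G_l$ by a scalar in $G$, so the conclusion $h\in G_l$ still holds. I would package this last point as a one-line lemma ($G_{l}$ depends only on $l$ modulo the scalars of $G$) rather than belaboring it.

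The main obstacle is entirely bookkeeping rather than mathematics: making precise that $G_k$ is well-defined and that the ``root of unity'' attached to an element of $\tilde G$ is only defined up to the finite group of scalars lying inside $G$, and then checking that this ambiguity does not break the gcd computation. Once that is handled, the proof is two inclusions, each a couple of lines. I would therefore open with the scalar-ambiguity lemma, then state and prove $G_k\cap G_{k'}=G_l$, and finally cite Proposition~\ref{gk} twice to conclude $Stab(\psi)\subset G_k\cap G_{k'}=G_l$.
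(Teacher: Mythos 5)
Your overall strategy is sound, and the first paragraph (two applications of Proposition~\ref{gk} plus the set identity $G_k\cap G_{k'}=G_l$) would work if that identity were established cleanly. But the route you chose is longer than the one the paper intends, and your patch for the scalar ambiguity contains a false step. The paper's ``follows directly'' means: fix \emph{one} decomposition $g'=ag$ with $g\in G$ of an element $g'\in Stab(\psi)$. The computation in the proof of Proposition~\ref{gk} gives $f_k(\psi)=a^k f_k(\psi)$ and $f_{k'}(\psi)=a^{k'}f_{k'}(\psi)$ for this \emph{same} $a$, hence $a^k=a^{k'}=1$, hence $a^l=(a^k)^u(a^{k'})^v=1$ by B\'ezout, hence $g'\in G_l$. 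Because both conditions are extracted from a single chosen decomposition, the non-uniqueness of the decomposition never enters, and the whole discussion of the scalar group $\mu=G\cap\mathbb{C}^\times I$ is unnecessary.

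The concrete flaw in your version: in the last step you reduce to ``$a^{|\mu|l}=1$, and $e^{i2\pi/(|\mu|l)}g$ differs from an element of $G_l$ by a scalar in $G$.'' That amounts to claiming $\mu_{|\mu|l}\subseteq\mu_l\cdot\mu$, where $\mu_m$ denotes the $m$th roots of unity; since $\mu_l\cdot\mu=\mu_{\mathrm{lcm}(l,|\mu|)}$, this fails whenever $\gcd(l,|\mu|)>1$. For qubits $|\mu|=2$, so already $l=2$ gives a counterexample: $a=i$ satisfies $a^{4}=1$ but $i\notin\mu_2\cdot\mu_2=\mu_2$. Your derivation only retained the weaker information $a^k\in\mu$, $a^{k'}\in\mu$, and from that the conclusion genuinely does not follow. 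If you insist on proving $G_k\cap G_{k'}=G_l$, the correct bookkeeping is: $h=ag\in G_k$ iff $a\in\mu_k\cdot\mu=\mu_{\mathrm{lcm}(k,e)}$ (with $e=|\mu|$), so membership in both gives $a\in\mu_{\gcd(\mathrm{lcm}(k,e),\mathrm{lcm}(k',e))}=\mu_{\mathrm{lcm}(l,e)}=\mu_l\cdot\mu$ by distributivity of $\gcd$ over $\mathrm{lcm}$, which is exactly membership in $G_l$. Either repair works, but the single-decomposition argument is the one the authors have in mind and is two lines.
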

This corollary follows directly from the proposition above. Note also that
since most states in $\mathcal{H}_n$ has many non-vanishing SL-invariant homogenous polynomials, it follows from the corollary above that for these states the stabilizer group is a subgroup of $G$. For example, consider the case of $n$-qubits with even $n$.
For such systems their exists a homogeneous $SL$-invariant polynomial of degree 2. It is possible to show that for a dense set of states in $\mathcal{H}_n$ this polynomial has non-vanishing value. Thus, for this dense set of states the stabilizer is a subgroup of
$G_2$ which is equal to $G$ for qubits.

Despite the above corollary, in many important cases (such as the 3 qubits GHZ state) 
the stabilizer is not a subgroup of $G$. For these cases the following corollary will be useful. 
\begin{corollary}\label{stst}
Let $\psi\in\mathcal{H}_n$ and define $St(\psi)$ to be the intersection of $Stab(\psi)$ with $G$; that is,
$$
\text{St}(\psi)\equiv\{g\in G\big|\;g\psi=\psi\}\;.
$$
If there exists an homogeneous SL-invariant polynomial of degree $k$ that is not vanishing on $\psi$,
then the quotient group $Stab(\psi)/St(\psi)$ is a cyclic group of order at most $k$.
\end{corollary}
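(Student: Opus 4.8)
The plan is to derive the corollary from Proposition~\ref{gk} by a short group-theoretic argument. First I would check that $St(\psi)$ is normal in $Stab(\psi)$, so that the quotient is a genuine group: this holds because $G$ is normal in $\tilde{G}$. Indeed, if $g=g_1\otimes\cdots\otimes g_n$ with $\det g_i=1$ for all $i$, and $h=h_1\otimes\cdots\otimes h_n\in\tilde{G}$, then $hgh^{-1}=(h_1g_1h_1^{-1})\otimes\cdots\otimes(h_ng_nh_n^{-1})$, and each tensor factor again has determinant $1$; hence $St(\psi)=Stab(\psi)\cap G\triangleleft Stab(\psi)$.

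Next I would bring in Proposition~\ref{gk}: since $\psi$ admits a nonvanishing homogeneous SL-invariant polynomial of degree $k$, we have $Stab(\psi)\subseteq G_k$. Now I analyze the quotient $G_k/G$ inside $\tilde{G}$. Writing $\zeta\equiv e^{i2\pi/k}$, the scalar operator $\zeta I$ on $\mathcal{H}_n$ belongs to $\tilde{G}$ (e.g. as $(\zeta I_{d_1})\otimes I_{d_2}\otimes\cdots\otimes I_{d_n}$), and by the definition of $G_k$ every element of $G_k$ has the form $(\zeta I)^m g$ with $1\le m\le k$ and $g\in G$. Hence $G_k/G$ is cyclic, generated by the coset of $\zeta I$, and has order dividing $k$ (since $(\zeta I)^k=I\in G$); in particular it has order at most $k$.

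Finally I would apply the second isomorphism theorem with $Stab(\psi)\le G_k$ and $G\triangleleft G_k$:
$$
Stab(\psi)/St(\psi)=Stab(\psi)/\bigl(Stab(\psi)\cap G\bigr)\;\cong\;Stab(\psi)\,G/G\;\le\;G_k/G .
$$
Since a subgroup of a cyclic group of order at most $k$ is itself cyclic of order at most $k$, this proves the claim. I do not expect a real obstacle here; the only point requiring a little care is the ambiguity (up to scalars with product one) in the tensor-factor decomposition of elements of $\tilde{G}$, which enters both in verifying $G\triangleleft\tilde{G}$ and in identifying the generator of $G_k/G$, but in each case the relevant assertion is immediate from the definitions.
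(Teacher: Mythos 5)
Your argument is correct and follows the same route the paper intends: the paper simply asserts that the corollary "is a direct consequence of Proposition~\ref{gk}," and your write-up supplies exactly the group-theoretic details that are left implicit there (normality of $G$ in $\tilde{G}$, the second isomorphism theorem, and the observation that $G_k/G$ is cyclic of order dividing $k$, generated by the coset of $\zeta I$). No gaps.
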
 
This corollary is a direct consequence of proposition~\ref{gk}.

\section{The Associate Density Matrix}\label{sec:asso}

In this section we associate density matrices (not necessarily normalized) with states in an orbit $\mathcal{O}_{\psi}$.
We call such density matrices \emph{associate density matrices} (ADMs). We will see that the ADM plays a similar role
to that of the reduced density matrices, when considering local transformations that involves pure bipartite states.
\begin{definition}\label{asso}
Let $\psi\in\mathcal{H}_n$ and let $\phi\in\mathcal{O}_{\psi}$ be another state in the SLOCC orbit of $\psi$.
Thus, there exists $g\in G$ such that $\phi=g\psi/\|g\psi\|$. An \emph{associate} density matrix, $\rho_{\psi}(\phi)$, of $\phi$ with respect to $\psi$, is defined by
$$
\rho_{\psi}(\phi)=\frac{g^{\dag}g}{\|g\psi\|^2}\;,\;\text{where } \phi=\frac{g\psi}{\|g\psi\|}\;.
$$
\end{definition}

Note that the associate density matrix is positive definite and that for $s\in\text{Stab}(\psi)$, both $\rho_{\psi}(\phi)$ and $s^{\dag}\rho_{\psi}(\phi)s$ are associate
density matrices of $\phi$ with respect to $\psi$. Hence, $\rho_{\psi}(\phi)$ is defined up to a conjugation by
the stabilizer group of $\psi$. Furthermore, note that $\rho_{\psi}(\phi)= I$ if and only if $\psi$ is related
to $\phi$ by a local unitary transformation. 

\begin{figure}[tp]
\includegraphics[scale=.40]{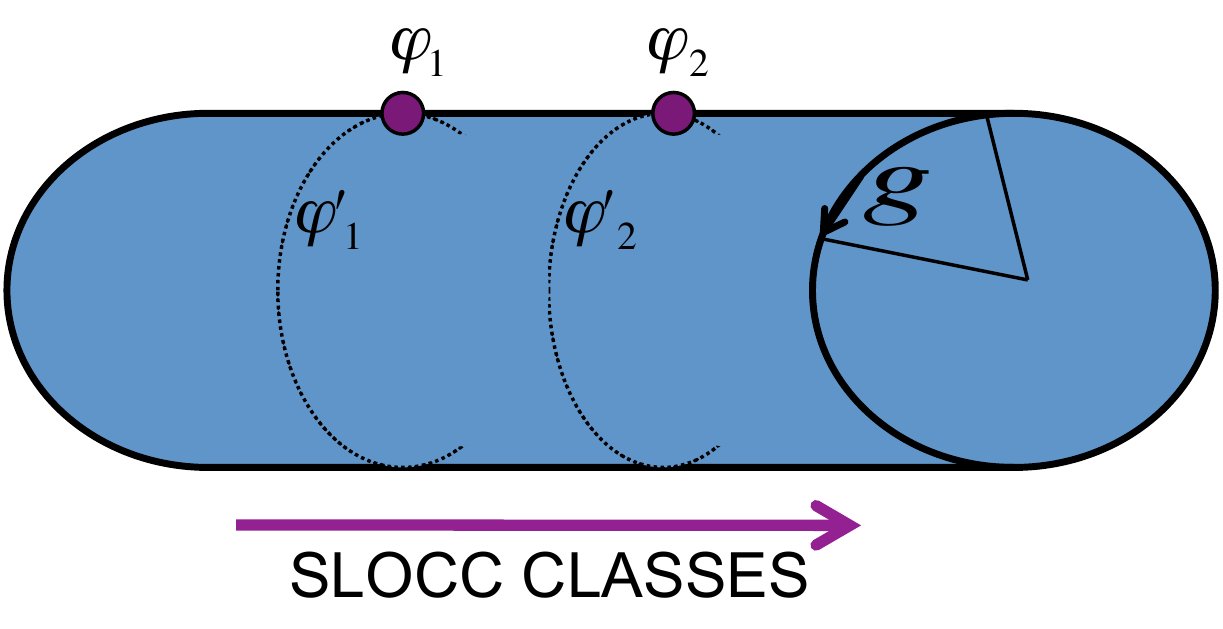}
\caption{The state space $\mathcal{H}_n$ can be described heuristically as a horizontal cylinder. The horizontal dimension represents the uncountable number of inequivalent SLOCC classes. Each SLOCC orbit of states is described by a circle perpendicular to the horizontal line. The states $\varphi _1$ and $\varphi_2$ are the representatives of two inequivalent SLOCC orbits. The action of $g\in G$ on $\varphi _1$ and $\varphi_2$ generates $\varphi _{1}'$ and $\varphi_{2}'$, respectively. Therefore, the ADMs of both $\varphi _{1}'$ and $\varphi_{2}'$ are proportional to $g^{\dag}g$. 
From the results in section~\ref{sec:niel} it follows that if the stabilizers of $\varphi _1$ and $\varphi_2$ are equal, then the transformation $\varphi _1\rightarrow\varphi _{1}'$ is possible by SEP iff $\varphi _2\rightarrow\varphi _{2}'$ is possible by SEP.}
\end{figure}

\begin{proposition}\label{proasso}
Let $\psi,\phi\in\mathcal{H}_n$ be two states in the same orbit $\mathcal{O}_{\psi}$ (i.e. $\phi=g\psi/\|g\psi\|$, for
some $g\in G$), where $Stab(\psi)\subset U(d_1)\otimes U(d_2)\otimes\cdots\otimes U(d_n)$. Then,
$Stab(\phi)\subset U(d_1)\otimes U(d_2)\otimes\cdots\otimes U(d_n)$ if and only if
$$
[\rho_{\psi}(\phi),u]=0\;\;\text{for all  }u\in Stab(\psi)\;.
$$
\end{proposition}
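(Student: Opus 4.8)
The plan is to reduce everything to the conjugation identity $\text{Stab}(\phi)=g\,\text{Stab}(\psi)\,g^{-1}$, already recorded right after the definition of the stabilizer group (the normalization factor in $\phi=g\psi/\|g\psi\|$ is irrelevant, since $\phi$ and $g\psi$ differ only by a scalar and scalars stabilize nothing new). Thus every element of $\text{Stab}(\phi)$ is of the form $gug^{-1}$ with $u\in\text{Stab}(\psi)$, and this correspondence $u\mapsto gug^{-1}$ is a bijection onto $\text{Stab}(\phi)$. Moreover, since $\rho_\psi(\phi)=g^{\dag}g/\|g\psi\|^{2}$ is a positive scalar multiple of $g^{\dag}g$, the condition ``$[\rho_\psi(\phi),u]=0$ for all $u\in\text{Stab}(\psi)$'' is literally the same as ``$[g^{\dag}g,u]=0$ for all $u\in\text{Stab}(\psi)$''. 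So it suffices to prove, for each fixed $u\in\text{Stab}(\psi)$, that $gug^{-1}\in U(d_1)\otimes\cdots\otimes U(d_n)$ if and only if $[g^{\dag}g,u]=0$, and then quantify over $u$.

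For the fixed-$u$ equivalence I would first invoke the hypothesis $\text{Stab}(\psi)\subset U(d_1)\otimes\cdots\otimes U(d_n)$, so that $u$ is unitary, $u^{\dag}=u^{-1}$. A direct computation shows $gug^{-1}$ is unitary iff $(gug^{-1})^{\dag}(gug^{-1})=I$; multiplying this relation on the left by $g^{\dag}$ and on the right by $g$ collapses it to $u^{\dag}(g^{\dag}g)u=g^{\dag}g$, and using $u^{\dag}=u^{-1}$ this is exactly $(g^{\dag}g)u=u(g^{\dag}g)$, i.e. $[g^{\dag}g,u]=0$. Hence $gug^{-1}$ local unitary $\Rightarrow$ $gug^{-1}$ unitary $\Rightarrow$ $[g^{\dag}g,u]=0$, and conversely $[g^{\dag}g,u]=0$ forces $gug^{-1}$ to be at least unitary.

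The one point needing a little care — and the closest thing to an obstacle — is upgrading ``unitary'' back to ``lies in $U(d_1)\otimes\cdots\otimes U(d_n)$''. For this I would note that $gug^{-1}\in\tilde G=\mathrm{GL}(d_1,\mathbb{C})\otimes\cdots\otimes\mathrm{GL}(d_n,\mathbb{C})$ automatically, so it is a genuine tensor product $A_1\otimes\cdots\otimes A_n$ of invertible matrices; the requirement $(A_1\otimes\cdots\otimes A_n)^{\dag}(A_1\otimes\cdots\otimes A_n)=I$ forces $A_i^{\dag}A_i=c_i I$ with $c_i>0$ and $\prod_i c_i=1$, so each $A_i=\sqrt{c_i}\,V_i$ with $V_i$ unitary and $A_1\otimes\cdots\otimes A_n=V_1\otimes\cdots\otimes V_n\in U(d_1)\otimes\cdots\otimes U(d_n)$. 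Combining the pieces: for each $u\in\text{Stab}(\psi)$ we get $gug^{-1}\in U(d_1)\otimes\cdots\otimes U(d_n)\iff gug^{-1}\text{ unitary}\iff[g^{\dag}g,u]=0\iff[\rho_\psi(\phi),u]=0$; quantifying over all $u$ and using $\text{Stab}(\phi)=g\,\text{Stab}(\psi)\,g^{-1}$ yields the proposition.
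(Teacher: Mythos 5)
Your proof is correct and follows essentially the same route as the paper's: the conjugation identity $\text{Stab}(\phi)=g\,\text{Stab}(\psi)\,g^{-1}$ plus the computation $s^{\dag}s=I\iff u^{\dag}g^{\dag}gu=g^{\dag}g\iff[g^{\dag}g,u]=0$ using $u^{\dag}=u^{-1}$. The only difference is that you explicitly verify that a unitary element of $\tilde G$ actually lies in $U(d_1)\otimes\cdots\otimes U(d_n)$ (via the rescaling $A_i=\sqrt{c_i}V_i$), a small point the paper's proof leaves implicit; that step is correct and a worthwhile addition.
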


\begin{proof}
Since $\phi=g\psi/\|g\psi\|$ we have $Stab(\phi)=g Stab(\psi)g^{-1}$. That is, if $s\in Stab(\phi)$ then there exists
$u\in Stab(\psi)$ such that $s=gug^{-1}$. Thus, $s$ is unitary if and only if 
$$
s^{\dag}s=(g^{\dag})^{-1}u^{\dag}g^{\dag}gug^{-1}=1\;\iff\;u^{\dag}g^{\dag}gu=g^{\dag}g\;.
$$
This completes the proof.
\end{proof}

\subsection*{The relation with the reduced density matrix}

Consider the Hilbert space $\mathbb{C}^{d}\otimes\mathbb{C}^{d}$, and the maximally entangled state
(it is unnormalized for simplicity)
$$
|\psi\rangle=\sum_{i=0}^{d-1}|i\rangle |i\rangle\;,
$$
where $\{|i\rangle\}$ is an orthonormal basis in $\mathbb{C}^d$. Clearly, in this case the orbit $\mathcal{O}_{\psi}$
consists of all states in $\mathbb{C}^{d}\otimes\mathbb{C}^{d}$ with Schmidt number $d$. 

The stabilizer group of $\psi$ is given by (see Sec.~\ref{bip} for the proof):
\begin{equation}\label{bistab}
Stab(\psi)=\left\{S^{-1}\otimes S^{T} \Big| \;S\in GL(\mathbb{C}^{d})\right\}\;,
\end{equation}
where $S^{T}$ is the transpose matrix of $S$. 
 
Any state $\phi\in\mathcal{O}_{\psi}$
can be written as
$$
|\phi\rangle=\frac{1}{\|A_1\otimes A_2|\psi\rangle\|}A_1\otimes A_2|\psi\rangle\;,
$$
where $A_1\otimes A_2\in G$ and in this subsection $G=SL(d,\mathbb{C})\otimes SL(d,\mathbb{C})$.
However, due to the relatively big stabilizer group, we can always assume
without loss of generality that $A_2=I$. Thus, the associate density matrix (which is defined up to conjugation by an element from the stabilizer group) can be written as:
$$
\rho_{\psi}(\phi)=\frac{A_{1}^{\dag}A_1\otimes I}{\|A_1\otimes I|\psi\rangle\|^2}=\rho_r\otimes I\;,
$$
where $\rho_r\equiv\text{Tr}_{B}|\phi\rangle\langle\phi|$ is the reduced density matrix. Thus, in the bipartite case, the
associate density matrix is essentially equivalent to the reduced density matrix. As we will see in the next section, in the multipartite case, the associate density matrix plays the same role as the reduced density matrix plays in bipartite LOCC transformations.

\section{Main Results}\label{sec:main}

In this section we present our main results. As we have seen in the previous sections,
for a dense set of states in $\mathcal{H}_n$ the stabilizer group is finite. Hence, in the first three subsections we assume explicitly that the stabilizer is finite, while our results for infinite stabilizer groups are presented in the last subsection.
More specifically, in subsection~\ref{sec:niel} we generalize Nielsen majorization theorem to multi-partite states, and in subsection~\ref{NonD}
we introduce necessary and sufficient conditions for non-deterministic SEP transformations. Subsection~\ref{sec:pro} is devoted to the calculation of the maximum probability to convert one state to another in the case where deterministic SEP is not possible. Finally, in section~\ref{sec:conti} we generalize our theorems for the case in which the stabilizer group is non-compact.

\subsection{Generalization of Nielsen Majorization Theorem to Multipartite states}\label{sec:niel}
 
 In~\cite{Nie99} it was shown that a transformation $|\psi_1\rangle^{\text{AB}}\to|\psi_2\rangle^\text{AB}$, between two bipartite states in $\mathbb{C}^{n}\otimes \mathbb{C}^{m}$, is possible by LOCC if and only if there exists a set of probabilities $\{p_k\}$ and a set of unitary matrices $\{U_k\}$ such that 
 $$
 \rho_{1}^{\text{A}}=\sum_{k} p_kU_{k}^{\dag}\rho_{2}^{A}U_{k}\;,
 $$
 where $\rho_{i}^{A}\equiv\text{Tr}_\text{B}|\psi_i\rangle^{\text{AB}}\langle\psi_i|$ ($i=1,2$) are the reduced density matrices.
 Moreover, it was argued that the relation in the equation above hold true if and only if the eigenvalues of $\rho_{1}^{\text{A}}$
 are majorized by the eigenvalues of $\rho_{2}^{\text{A}}$; i.e. $\rho_{1}^{\text{A}}\prec\rho_{2}^{\text{A}}$. We now generalize this theorem to multipartite states.
 
\begin{theorem}\label{two}
Let $|\psi_1\rangle=g_1|\psi\rangle/\|g_1\psi\|$ and $|\psi_2\rangle=g_2|\psi\rangle/\|g_2\psi\|$
be two entangled states in the same orbit $\mathcal{O}_{\psi}$, where the orbit representative 
$|\psi\rangle\in\mathcal{H}_n$ is chosen such that $Stab(|\psi\rangle)$ is a finite unitary group 
(see proposition~\ref{uni}). Let $m$ be the cardinality of $Stab(\psi)$.
Then, $|\psi_1\rangle\rightarrow|\psi_2\rangle$ by SEP 
if and only if there exists a set of probabilities $\{p_k\}$ such that
\begin{equation}\label{basic1}
\sum_{k=1}^{m}p_kU_{k}^{\dag}\rho_2U_{k}=\rho_1\;,
\end{equation}
where $\rho_i\equiv\rho_{\psi}(\psi_i)$ for $i=1,2$ and $U_k\in Stab(\psi)$.
\end{theorem}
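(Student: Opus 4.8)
The plan is to prove both implications by translating between separable Kraus decompositions of a channel achieving $|\psi_1\rangle\to|\psi_2\rangle$ and decompositions of $\rho_1$ of the form in Eq.~(\ref{basic1}), using the orbit representative $\psi$ as a common ``gauge'' reference and keeping careful track of the normalization constants $\|g_i\psi\|$.

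\textbf{Sufficiency ($\Leftarrow$).} This is a direct construction. Given probabilities $\{p_k\}$ with $\sum_k p_k U_k^{\dag}\rho_2 U_k=\rho_1$, put $c\equiv\|g_2\psi\|/\|g_1\psi\|$ and define $M_k\equiv\sqrt{p_k}\,c^{-1}g_2U_kg_1^{-1}$. Each $M_k$ is a tensor product of $d_i\times d_i$ matrices, since $g_1,g_2\in G$ and $U_k\in\text{Stab}(\psi)\subset\tilde G$ are all of product form, so $\{M_k\}$ is an admissible separable instrument. Using $g_1^{-1}g_1|\psi\rangle=|\psi\rangle$ and $U_k|\psi\rangle=|\psi\rangle$ one gets $M_k|\psi_1\rangle=\sqrt{p_k}\,|\psi_2\rangle$, hence $\sum_k M_k|\psi_1\rangle\langle\psi_1|M_k^{\dag}=|\psi_2\rangle\langle\psi_2|$. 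Finally, substituting $g_i^{\dag}g_i=\|g_i\psi\|^2\rho_i$ turns $\sum_k M_k^{\dag}M_k$ into $c^{-2}(g_1^{\dag})^{-1}\bigl(\sum_k p_k U_k^{\dag}g_2^{\dag}g_2U_k\bigr)g_1^{-1}$, and the bracket equals $c^2 g_1^{\dag}g_1$ by the hypothesis, so $\sum_k M_k^{\dag}M_k=I$ and the channel is trace preserving. (One could instead only arrange $\sum_k M_k^{\dag}M_k\le I$ and complete the instrument with a separable remainder using Lemma~\ref{flem}, but the exact identity makes this unnecessary.)

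\textbf{Necessity ($\Rightarrow$).} Suppose a separable channel with product Kraus operators $\{M_j\}$, $\sum_j M_j^{\dag}M_j=I$, sends $|\psi_1\rangle\langle\psi_1|$ to $|\psi_2\rangle\langle\psi_2|$. Since the output is pure, each $M_j|\psi_1\rangle$ is proportional to $|\psi_2\rangle$; write $M_j|\psi_1\rangle=\sqrt{q_j}\,e^{i\theta_j}|\psi_2\rangle$ with $q_j\ge0$, where $\sum_j q_j=1$ by trace preservation. Discarding the $j$ with $q_j=0$ and rescaling, set $S_j\equiv \frac{1}{\sqrt{q_j}e^{i\theta_j}}\,c\,g_2^{-1}M_jg_1$; then $S_j|\psi\rangle=|\psi\rangle$ and $S_j$ is again of product form. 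The crucial point is that $S_j$ is invertible, hence lies in the finite group $\text{Stab}(\psi)=\{U_1,\dots,U_m\}$: if some tensor factor of $S_j$ were singular, projecting onto its range would force the corresponding one-body reduced state of $\psi$ to be rank deficient, which is excluded for the chosen representative (for a critical-state representative the one-body marginals are proportional to the identity, and in general the representative may be taken with full-rank one-body marginals, a property that is $G$-invariant). Solving for $M_j$, substituting into $\sum_j M_j^{\dag}M_j=I$, and unwinding the normalizations yields $\sum_j q_j\,S_j^{\dag}\rho_2 S_j=\rho_1$; collecting the $j$ that produce the same stabilizer element into $p_k\equiv\sum_{j:\,S_j=U_k}q_j$ gives Eq.~(\ref{basic1}) with $\{p_k\}$ a probability distribution.

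\textbf{Main obstacle.} The algebra converting trace preservation into the ADM identity is routine; the real content is the necessity direction's claim that the product operators $g_2^{-1}M_jg_1$ (suitably normalized) genuinely belong to $\text{Stab}(\psi)$ rather than merely fixing $\psi$ — equivalently, that they are invertible. This is exactly where the choice of orbit representative (critical state / normal form with full-rank one-body marginals, as set up via Proposition~\ref{uni} and the Kempf--Ness theorem) enters, and it is the step that must be argued carefully; everything else is bookkeeping with the scalars $\sqrt{q_j}e^{i\theta_j}$ and $\|g_i\psi\|$.
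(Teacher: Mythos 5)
Your proposal is correct and follows essentially the same route as the paper: normalize the Kraus operators so that $g_2^{-1}M_kg_1$ fixes $\psi$, identify these with stabilizer elements using the full-rank one-body marginals of the representative (the paper's ``$n$-way entangled'' remark, which you justify slightly more explicitly), and convert the completeness relation $\sum_k M_k^{\dag}M_k=I$ into Eq.~(\ref{basic1}); your sufficiency direction is exactly the paper's ``same lines backwards'' construction written out.
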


\begin{remark}
The condition on $\rho_1$ and $\rho_2$ in the Eq.(\ref{basic1}) is the multipartite generalization of the condition obtained in~\cite{Nie99} for the bipartite case. In particular, Eq.(\ref{basic1}) implies that $\rho_1\prec\rho_2$, i.e. the eigenvalues of $\rho_1$ are majorized by the eigenvalues of $\rho_2$. In the bipartite case, this majorization condition is both necessary \emph{and} sufficient. The reason for this is that in the bipartite case the stabilizer group is big and contains all unitary matrices, whereas in the multipartite state the stabilizer usually consists of only a finite number of unitary matrices $U_k$. Hence, in the multipartite state the majorization
condition is far from being sufficient.
\end{remark}

\begin{proof}
Consider the separable operation given in Eq.(\ref{sep}). If $\Lambda(|\psi_1\rangle\langle\psi_1|)=|\psi_2\rangle\langle\psi_2 |$ then
there must exist complex coefficients $a_k$ such that
$$
M_k|\psi_1\rangle=a_k|\psi_2\rangle
$$
for all $k$. Since the separable operation $\Lambda$ is invariant under the multiplication of the operators $M_k$ by phases,
we can assume without loss of generality that $a_k$ are real and non-negative. We therefore denote $a_k\equiv\sqrt{p_k}$,
where $p_k$ are non-negative real numbers. From the above equation and the completeness relation
$\sum_kM_{k}^{\dag}M_k=I$ it follows that $\sum_kp_k=1$.

Now, since $|\psi_1\rangle=g_1|\psi\rangle/\|g_1\psi\|$ and $|\psi_2\rangle=g_2|\psi\rangle/\|g_2\psi\|$ we get
$$
\frac{\|g_2\psi\|}{\|g_1\psi\|\sqrt{p_k}}g_{2}^{-1}M_kg_{1}|\psi\rangle=|\psi\rangle\;.
$$
Hence,
$$
\frac{\|g_2\psi\|}{\|g_1\psi\|\sqrt{p_k}}g_{2}^{-1}M_kg_{1}=U_k\;,
$$
where $U_k\in Stab(\psi)$. In the above equation we used the fact that $\psi$ is n-way entangled (in the sense that the determinant
of each of the $n$ reduced density matrices obtained by tracing out $n-1$ qudits, is non zero); otherwise,
$M_k$ may have zero determinant. The above equation may be rewritten as: 
\begin{equation}\label{mkform}
M_k=\frac{\|g_1\psi\|}{\|g_2\psi\|}\sqrt{p_k}g_2U_kg_{1}^{-1}\;.
\end{equation}
The completeness relation $\sum_kM_{k}^{\dag}M_k=I$ gives:
$$
\sum_{k=1}^{m}p_kU_{k}^{\dag}\rho_{2}U_{k}=\rho_1\;,
$$
where we substitute $\rho_i=g_{i}^{\dag}g_{i}/\|g_{i}\psi\|^2$ for $i=1,2$. This provide the first direction of the proof.
The other direction follows the exact same lines backwards.
\end{proof}

Taking on both sides of Eq.(\ref{basic1}) the $Stab(\psi)$-twirling operation gives the following simple necessary condition
which we summarize in the following corollary:
\begin{corollary}
Notations as in theorem~\ref{two}. If the transformation $|\psi_1\rangle\to|\psi_2\rangle$ is achievable by SEP then
\begin{equation}\label{nece}
\mathcal{G}\left(\rho_2\right)=\mathcal{G}\left(\rho_1\right)\;.
\end{equation}
Moreover, if $\rho_1$ is invariant under the action of the $Stab(\psi)$-twirling operation, that is, if
$$
\mathcal{G}\left(\rho_1\right)=\rho_1\;,
$$
then the condition in Eq.(\ref{nece}) is both necessary and sufficient.
\end{corollary}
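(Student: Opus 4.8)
The plan is to derive the corollary directly from Theorem~\ref{two} by applying the twirling channel $\mathcal{G}$ to both sides of Eq.(\ref{basic1}) and exploiting the fact that $\mathcal{G}$ averages over the (left-)action of the finite group $\{U_k\}=\text{Stab}(\psi)$. First I would establish the key invariance property of $\mathcal{G}$: for any $V\in\text{Stab}(\psi)$ and any positive semi-definite $\sigma$, one has $\mathcal{G}(V^{\dag}\sigma V)=\mathcal{G}(\sigma)$, which follows immediately from the rearrangement/reindexing $k\mapsto k'$ induced by the bijection $U_k\mapsto U_kV$ on the group, together with $\mathcal{G}(\mathcal{G}(\sigma))=\mathcal{G}(\sigma)$. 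I would also note the obvious linearity of $\mathcal{G}$.

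For the necessity direction, I would start from Eq.(\ref{basic1}), namely $\sum_k p_k U_k^{\dag}\rho_2 U_k=\rho_1$, apply $\mathcal{G}$ to both sides, use linearity to pull $\mathcal{G}$ through the sum, use the invariance property to get $\mathcal{G}(U_k^{\dag}\rho_2 U_k)=\mathcal{G}(\rho_2)$ for each $k$, and then use $\sum_k p_k=1$ to collapse the sum to $\mathcal{G}(\rho_2)=\mathcal{G}(\rho_1)$. This is the content of Eq.(\ref{nece}).

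For the second part, assume $\mathcal{G}(\rho_1)=\rho_1$ and that $\mathcal{G}(\rho_2)=\mathcal{G}(\rho_1)$. I would then exhibit an explicit choice of probabilities realizing Eq.(\ref{basic1}), namely the uniform distribution $p_k=1/m$. Indeed, with $p_k=1/m$ the left-hand side of Eq.(\ref{basic1}) becomes exactly $\mathcal{G}(\rho_2)$ by the definition (\ref{gtwirling}), and by hypothesis this equals $\mathcal{G}(\rho_1)=\rho_1$. Hence Eq.(\ref{basic1}) holds with $\{p_k\}=\{1/m\}$, and Theorem~\ref{two} then gives $|\psi_1\rangle\to|\psi_2\rangle$ by SEP. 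Combined with the first part (which gives necessity regardless of any invariance assumption), this shows that under the extra hypothesis $\mathcal{G}(\rho_1)=\rho_1$ the condition (\ref{nece}) is both necessary and sufficient.

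There is essentially no serious obstacle here; the only point requiring a little care is the bookkeeping in the invariance identity $\mathcal{G}(V^{\dag}\sigma V)=\mathcal{G}(\sigma)$, where one must check that right-multiplication by $V\in\text{Stab}(\psi)$ permutes the summation index over the finite group without changing the set of terms — this uses that $\text{Stab}(\psi)$ is a group (closure and the existence of inverses), which is exactly the standing assumption of Theorem~\ref{two}. One should also remark that the necessity half of the corollary does not use the invariance of $\rho_1$ at all, so Eq.(\ref{nece}) is unconditional given only that the transformation is achievable by SEP.
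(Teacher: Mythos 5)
Your proof is correct and takes essentially the same route as the paper, which simply applies the $\mathrm{Stab}(\psi)$-twirling operation to both sides of Eq.~(\ref{basic1}) for necessity and (implicitly) uses the uniform choice $p_k=1/m$ for sufficiency. One trivial bookkeeping slip: in verifying $\mathcal{G}(V^{\dag}\sigma V)=\mathcal{G}(\sigma)$ the relevant reindexing is the \emph{left} translation $U_k\mapsto VU_k$, not $U_k\mapsto U_kV$; since either translation is a bijection of the finite group, this does not affect the argument.
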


The corollary above implies in particular that the transformation $\psi\to\psi_2$ is achievable by SEP if and only if
\begin{equation}\label{abc}
\mathcal{G}\left(\rho_2\right)=I\;.
\end{equation}

We have seen in the previous sections that if $\psi\in\mathcal{H}_n$ is both critical and generic, then 
$Stab(\psi)=\{U_k\}_{k=1,2,...,m}$
is a finite unitary group. Hence, from Eq.~(\ref{abc}) and Eq.~(\ref{factorization}) we can
characterize all the states $\phi$ that can be obtained from a critical generic state $\psi$.
We summarize this characterization in the following corollary:
\begin{corollary}
Let $\psi\in\mathcal{H}_n$ be a state with a unitary stabilizer group, and let $\phi$ be a state in the orbit $\mathcal{O}_{\psi}$. Then,
the transformation $\psi\rightarrow\phi$ can be achieved by SEP  
if and only if
$$
\text{Tr}_{\mathcal{M}_q}\left[\Pi _{\mathcal{H}_{q}}\rho_{\psi}(\phi) \Pi _{\mathcal{H}_{q}}\right]=\mathrm{id}_{\mathcal{N}_{q}}\;\;\;\forall\;q\;.
$$
\end{corollary}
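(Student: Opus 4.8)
The plan is to combine the explicit form of the condition $\mathcal{G}(\rho_\psi(\phi))=I$ (already isolated as Eq.~(\ref{abc})) with the irrep factorization in Eq.~(\ref{factorization}) and simply read off the condition block by block.

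First I would specialize the preceding corollary to $\psi_1=\psi$ and $\psi_2=\phi$. Since $\rho_\psi(\psi)=I$ and $\mathcal{G}(I)=\tfrac1m\sum_k U_k^\dagger U_k=I$ (the $U_k$ are unitary), the hypothesis $\mathcal{G}(\rho_1)=\rho_1$ of that corollary is automatically satisfied, so it tells us that $\psi\to\phi$ by SEP is equivalent to $\mathcal{G}(\rho_\psi(\phi))=I$ — exactly Eq.~(\ref{abc}). It therefore remains only to rewrite the single operator identity $\mathcal{G}(\rho_\psi(\phi))=I$ in the stated form.

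Next, write $\rho\equiv\rho_\psi(\phi)$ and substitute Eq.~(\ref{factorization}). Because $\mathcal{D}_{\mathcal{M}_q}$ sends an operator $Y$ on $\mathcal{M}_q$ to $(\text{Tr}\,Y/\dim\mathcal{M}_q)\,I_{\mathcal{M}_q}$, one has, for any operator $X$ on $\mathcal{H}_q=\mathcal{M}_q\otimes\mathcal{N}_q$, the identity $(\mathcal{D}_{\mathcal{M}_q}\otimes\mathrm{id}_{\mathcal{N}_q})(X)=\tfrac{1}{\dim\mathcal{M}_q}\,I_{\mathcal{M}_q}\otimes\text{Tr}_{\mathcal{M}_q}(X)$, whence
$$
\mathcal{G}(\rho)=\sum_q \frac{1}{\dim\mathcal{M}_q}\, I_{\mathcal{M}_q}\otimes \text{Tr}_{\mathcal{M}_q}\!\left(\Pi_{\mathcal{H}_q}\rho\,\Pi_{\mathcal{H}_q}\right).
$$
On the other hand $I=\sum_q \Pi_{\mathcal{H}_q}=\sum_q I_{\mathcal{M}_q}\otimes I_{\mathcal{N}_q}$, the sum running over the mutually orthogonal subspaces $\mathcal{H}_q$. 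Both expressions are block-diagonal with respect to the decomposition $\mathcal{H}_n=\bigoplus_q\mathcal{H}_q$ (the $q$th block of $\mathcal{G}(\rho)$ sits in $\mathcal{H}_q$ since $\Pi_{\mathcal{H}_q}$ commutes with every $U_k$), so $\mathcal{G}(\rho)=I$ holds if and only if it holds blockwise, i.e. iff for every $q$
$$
\text{Tr}_{\mathcal{M}_q}\!\left(\Pi_{\mathcal{H}_q}\rho\,\Pi_{\mathcal{H}_q}\right)=\dim(\mathcal{M}_q)\,\mathrm{id}_{\mathcal{N}_q}.
$$
Up to the fixed positive constant $\dim\mathcal{M}_q$ — which can be absorbed into the normalization of the factorization — this is precisely the asserted condition $\text{Tr}_{\mathcal{M}_q}[\Pi_{\mathcal{H}_q}\rho\,\Pi_{\mathcal{H}_q}]=\mathrm{id}_{\mathcal{N}_q}$.

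I do not expect a genuine obstacle here: the substantive content (Schur's lemma, the isotypic block structure of the twirl) is already packaged into Eq.~(\ref{factorization}), and what is left is bookkeeping. The only points that need a little care are (i) checking that $I=\rho_\psi(\psi)$ is fixed by $\mathcal{G}$, so that the ``necessary and sufficient'' clause of the previous corollary really applies, and (ii) keeping track of the dimension factor $\dim\mathcal{M}_q$ consistently on the two sides of the block equation.
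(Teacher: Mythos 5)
Your proposal is correct and follows exactly the route the paper intends: specialize the preceding corollary (whose hypothesis $\mathcal{G}(\rho_1)=\rho_1$ holds trivially since $\rho_\psi(\psi)=I$) to get the criterion $\mathcal{G}(\rho_\psi(\phi))=I$ of Eq.~(\ref{abc}), then decompose it blockwise via Eq.~(\ref{factorization}). Your observation about the factor $\dim\mathcal{M}_q$ is also well taken: with the trace-preserving normalization of $\mathcal{D}_{\mathcal{M}_q}$ the condition should literally read $\text{Tr}_{\mathcal{M}_q}\bigl[\Pi_{\mathcal{H}_q}\rho_\psi(\phi)\Pi_{\mathcal{H}_q}\bigr]=\dim(\mathcal{M}_q)\,\mathrm{id}_{\mathcal{N}_q}$, so the paper's statement implicitly absorbs that constant.
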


\subsection{Non Deterministic Transformations}\label{NonD}

Transformations involving quantum measurements are usually not deterministic. 
In this subsection we provide necessary and sufficient conditions for the existence of a local procedure that converts a finite multipartite pure state into one of a set of possible final states. For the bipartite case, this problem was solved in~\cite{Ple99}. 
In particular, a minimal set of necessary and sufficient conditions
where found for the local conversion of a bipartite state $|\psi\rangle^{AB}$ to the ensemble of bipartite states $\{p_j,\;|\phi_j\rangle^{AB}\}$;
i.e., $|\psi\rangle^{AB}$ is converted to $|\phi_j\rangle^{AB}$ with probability $p_j$. The conditions are
$$
E_{k}\left(|\psi\rangle^{AB}\right)\geq\sum_{j}p_jE_{k}\left(|\phi_j\rangle^{AB}\right)
$$
where $\{E_k\}$ are the Vidal's entanglement monotones~\cite{Vid99}:
\begin{equation}\label{Vid}
E_{k}\left(|\psi\rangle^{AB}\right)\equiv\sum_{m=k}^{n}\lambda_m\;\;,\;\text{for}\;k=1,2,...,n
\end{equation}
where $\lambda_1\geq\lambda_2\geq\cdots\geq\lambda_n$ are the ordered eigenvalues of the reduced density matrix 
$\rho^A\equiv\text{Tr}_{B}|\psi\rangle^{AB}\langle\psi|$.
The following theorem generalizes these conditions to the multipartite case.

\begin{theorem}\label{basic2}
Let $\phi\in\mathcal{H}_n$ be a normalized state in the orbit $\mathcal{O}_{\psi}$, where the orbit representative $|\psi\rangle$
is chosen such that $Stab(\psi)$ consists of $m$ unitary matrices (see proposition~\ref{uni}).  Let $\{p_j,\phi_j\}$ be an ensemble of states also in the orbit $\mathcal{O}_{\psi}$. Then, the transformation $\phi\rightarrow\{p_j,\phi_j\}$ is achievable  
by SEP (i.e. there exists a separable operation taking $\psi$ to $\phi_j$ with probability $p_j$) if and only if there exist probabilities 
$\{p_{jk}\}$ such that:
\begin{align}\label{ggg}
&\sum_{k=1}^{m}p_{jk}=p_j\nonumber\\
&\sum_{j}\sum_{k=1}^{m}p_{jk}U_{k}^{\dag}\rho_jU_{k}=\rho\;,
\end{align} 
where $\rho\equiv\rho_{\psi}(\phi_j)$, $\rho_j\equiv\rho_{\psi}(\phi_j)$, and $U_{k}\in Stab(\psi)$.
\end{theorem}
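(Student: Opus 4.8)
The plan is to mirror the proof of Theorem~\ref{two}, carrying the extra ensemble index $j$ through the computation. Write $\phi=g\psi/\|g\psi\|$ and $\phi_j=g_j\psi/\|g_j\psi\|$ for $g,g_j\in G$, so that (up to $Stab(\psi)$-conjugation) $\rho=g^{\dag}g/\|g\psi\|^2$ and $\rho_j=g_j^{\dag}g_j/\|g_j\psi\|^2$. By definition, $\phi\to\{p_j,\phi_j\}$ by SEP means there is a separable instrument, i.e. a family of completely positive maps $\Lambda_j(\sigma)=\sum_{l}M_{jl}\sigma M_{jl}^{\dag}$ with each $M_{jl}$ a tensor product of $n$ matrices, $\sum_{j,l}M_{jl}^{\dag}M_{jl}=I$, and $\Lambda_j(|\phi\rangle\langle\phi|)=p_j|\phi_j\rangle\langle\phi_j|$.

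First I would extract the Kraus structure branch by branch. Since $\sum_l M_{jl}|\phi\rangle\langle\phi|M_{jl}^{\dag}$ is a positive multiple of the rank-one operator $|\phi_j\rangle\langle\phi_j|$, each summand must be proportional to it, so $M_{jl}|\phi\rangle=a_{jl}|\phi_j\rangle$ for scalars $a_{jl}$; absorbing phases into the $M_{jl}$ we may take $a_{jl}=\sqrt{q_{jl}}$ with $q_{jl}\ge 0$, and comparing traces gives $\sum_l q_{jl}=p_j$. Substituting the expressions for $\phi$ and $\phi_j$ into $M_{jl}|\phi\rangle=\sqrt{q_{jl}}|\phi_j\rangle$, and using that every state in $\mathcal{O}_\psi$ is $n$-way entangled — so a tensor product $M_{jl}$ that does not annihilate $\phi$ is invertible — I obtain
\[
\frac{\|g_j\psi\|}{\|g\psi\|\sqrt{q_{jl}}}\,g_j^{-1}M_{jl}g=V_{jl}\in Stab(\psi),
\]
equivalently $M_{jl}=\frac{\|g\psi\|}{\|g_j\psi\|}\sqrt{q_{jl}}\,g_jV_{jl}g^{-1}$. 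Next I would relabel: enumerate $Stab(\psi)=\{U_1,\dots,U_m\}$ and set $p_{jk}\equiv\sum_{l:\,V_{jl}=U_k}q_{jl}$, so $\sum_k p_{jk}=p_j$. Feeding the formula for $M_{jl}$ into $\sum_{j,l}M_{jl}^{\dag}M_{jl}=I$, conjugating by $g$ and dividing by $\|g\psi\|^2$, the left-hand side collapses to $\sum_{j,k}p_{jk}U_k^{\dag}\rho_jU_k$ and the right-hand side to $\rho$, which is precisely Eq.~(\ref{ggg}).

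For the converse, given $\{p_{jk}\}$ satisfying Eq.~(\ref{ggg}) I would reverse the construction: define $M_{jk}\equiv\frac{\|g\psi\|}{\|g_j\psi\|}\sqrt{p_{jk}}\,g_jU_kg^{-1}$, which is manifestly a tensor product of $n$ matrices. A direct check gives $M_{jk}|\phi\rangle=\sqrt{p_{jk}}|\phi_j\rangle$, and, inserting the second line of Eq.~(\ref{ggg}), $\sum_{j,k}M_{jk}^{\dag}M_{jk}=(g^{-1})^{\dag}(g^{\dag}g)g^{-1}=I$; hence $\Lambda_j(\sigma)\equiv\sum_k M_{jk}\sigma M_{jk}^{\dag}$ is a separable instrument realizing $\phi\to\phi_j$ with probability $\sum_k p_{jk}=p_j$.

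The genuinely delicate points — neither hard, but both needing care — are: (i) the rank-one argument forcing every Kraus operator in branch $j$ to send $\phi$ to a multiple of the \emph{same} pure state $\phi_j$, which is exactly where the hypothesis that the target is a pure-state ensemble (rather than an ensemble of mixed states) enters; and (ii) invertibility of the $M_{jl}$, which rests on $n$-way entanglement of the representative $\psi$ and is what lets us pass into $Stab(\psi)$. The relabelling that replaces the a priori branch-dependent stabilizer elements $V_{jl}$ by a fixed enumeration $U_1,\dots,U_m$ is the final bookkeeping step; everything else is the computation already performed in the proof of Theorem~\ref{two}.
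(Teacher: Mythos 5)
Your proof is correct and follows essentially the same route as the paper's: extract the branch-wise Kraus structure, use $n$-way entanglement to force each Kraus operator into $g_j\,Stab(\psi)\,g^{-1}$ up to scale, and read off Eq.~(\ref{ggg}) from the completeness relation, then reverse the construction for sufficiency. Your explicit grouping $p_{jk}=\sum_{l:\,V_{jl}=U_k}q_{jl}$ merely makes precise a bookkeeping step the paper leaves implicit (it indexes the Kraus operators in branch $j$ directly by the stabilizer elements), so this is a refinement rather than a different argument.
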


\begin{remark}
The relation between $\rho$ and $\rho_j$ in Eq.(\ref{ggg}) is the multipartite generalization of the condition obtained in~\cite{Ple99} for the bipartite case. In particular, Eq.(\ref{ggg}) implies that 
\begin{equation}\label{ensvid}
E_{k}\left(|\phi\rangle\right)\geq\sum_{j}p_jE_{k}\left(|\phi_j\rangle\right)\;,
\end{equation}
for all $k=1,2,...,\dim\mathcal{H}_n$. On multipartite states the functions $\{E_k\}$ are defined by
\begin{equation}\label{gvid}
E_{k}\left(|\phi\rangle\right)\equiv\sum_{m=k}^{\dim\mathcal{H}_n}\lambda_m\;\;,\;\text{for}\;k=1,2,...,\dim\mathcal{H}_n
\end{equation}
where $\lambda_1\geq\lambda_2\geq\cdots\geq\lambda_{\dim\mathcal{H}_n}$ are the ordered eigenvalues of 
the ADM of $|\phi\rangle$.
In the bipartite case, these conditions are both necessary \emph{and} sufficient since the stabilizer group contains all unitary matrices, whereas in the multipartite case the stabilizer usually consists of only a finite number of unitary matrices $U_k$. Hence, in the multipartite case the conditions in Eq.(\ref{ensvid}) are far from being sufficient.
\end{remark}

\begin{proof}
The proof follows similar lines as the proof of theorem~\ref{two}.
Consider the SEP $\Lambda=\sum_{j}\Lambda_{j}$, where 
$$
\Lambda_{j}(\rho)\equiv\sum_{k=1}^{m}M_{jk}\rho M_{jk}^{\dag}
$$ 
for any density matrix $\rho$ acting on $\mathcal{H}_n$. If the transformation 
$\phi\rightarrow\{p_j,\phi_j\}$ is achievable by such a SEP, then 
$$
\Lambda_{j}(|\psi\rangle\langle\psi|)=p_j|\phi_j\rangle\langle\phi_j|\;.
$$ 
Thus, there exists complex coefficients $a_{jk}$ such that
$$
M_{jk}|\phi\rangle=a_{jk}|\phi_j\rangle
$$
for all $j$ and $k$. Since the separable operation $\Lambda$ is invariant under the multiplication of the operators $M_{jk}$ by phases,
we can assume without loss of generality that $a_{jk}$ are real and non-negative. We therefore denote $a_{jk}\equiv\sqrt{p_{jk}}$,
where $p_{jk}$ are non-negative real numbers. From the above equation it follows that $\sum_kp_{jk}=p_j$.

Now, since $|\phi\rangle=g|\psi\rangle/\|g\psi\|$ and $|\phi_j\rangle=g_j|\psi\rangle/\|g_j\psi\|$ for some $g,g_j\in G$, we get
$$
\frac{\|g_j\psi\|}{\|g\psi\|\sqrt{p_{jk}}}g_{j}^{-1}M_{jk}g|\psi\rangle=|\psi\rangle\;.
$$
Hence,
$$
\frac{\|g_j\psi\|}{\|g\psi\|}\sqrt{p_{jk}}g_{j}^{-1}M_{jk}g=S_k\;,
$$
where $S_k\in Stab(\psi)$. In the above equation we used the fact that $\psi$ is n-way entangled (in the sense that the determinant
of each of the $n$ reduced density matrices, obtained by tracing out $n-1$ qudits, is non zero);
otherwise $M_{jk}$ may have zero determinant. The above equation may be rewritten as: 
$$
M_{jk}=\frac{\|g\psi\|}{\|g_{j}\psi\|}\sqrt{p_{jk}}g_{j}S_kg^{-1}\;.
$$
Thus, the completeness relation $\sum_{jk}M_{jk}^{\dag}M_{jk}=I$ gives:
$$
\sum_{j}\sum_{k=1}^{m}p_{jk}U_{k}^{\dag}\rho_jU_{k}=\rho\;,
$$
where we substitute $\rho\equiv \rho_{\psi}(\phi_j)$ and $\rho_j\equiv g_{j}^{\dag}g_{j}/\|g_j\psi\|^2$. This provide the first direction of the proof.
The other direction follows the exact same lines backwards.
\end{proof}

Taking on both sides of Eq.(\ref{ggg}) the $Stab(\psi)$-twirling operation gives the following simple necessary condition
which we summarize in the following corollary:
\begin{corollary}
Notations as in theorem~\ref{basic2}. If the transformation $|\phi\rangle\to\{p_j,|\phi_j\rangle\}$ is achievable by SEP then
\begin{equation}\label{nece2}
\sum_{j}p_j\mathcal{G}(\rho_j)=\mathcal{G}(\rho)\;.
\end{equation}
Moreover, if $\rho$ is invariant under the action of the $Stab(\psi)$-twirling operation, that is, if
$$
\mathcal{G}\left(\rho\right)=\rho\;,
$$
then the condition in Eq.(\ref{nece2}) is both necessary and sufficient.
\end{corollary}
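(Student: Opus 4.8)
The plan is to obtain both directions of the corollary directly from Theorem~\ref{basic2}, using only the elementary fact that the twirling map $\mathcal{G}$ of Eq.(\ref{gtwirling}) is invariant under conjugation by elements of $Stab(\psi)$.

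First I would prove the necessary part. Assume $|\phi\rangle\to\{p_j,|\phi_j\rangle\}$ is achievable by SEP. By Theorem~\ref{basic2} there are probabilities $\{p_{jk}\}$ with $\sum_k p_{jk}=p_j$ and $\sum_j\sum_k p_{jk}\,U_k^{\dag}\rho_j U_k=\rho$. Apply $\mathcal{G}$ to both sides of this identity. The key observation is that for any $S\in Stab(\psi)$ and any positive semi-definite $\sigma$ on $\mathcal{H}_n$ one has $\mathcal{G}(S^{\dag}\sigma S)=\frac{1}{m}\sum_k U_k^{\dag}S^{\dag}\sigma S U_k=\frac{1}{m}\sum_k (SU_k)^{\dag}\sigma (SU_k)=\mathcal{G}(\sigma)$, because $k\mapsto SU_k$ merely permutes the finite group $Stab(\psi)$ (rearrangement), and because the $U_k$ are unitary so $U_k^{\dag}=U_k^{-1}$. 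Hence the left-hand side collapses to $\sum_j\bigl(\sum_k p_{jk}\bigr)\mathcal{G}(\rho_j)=\sum_j p_j\,\mathcal{G}(\rho_j)$ while the right-hand side is $\mathcal{G}(\rho)$, which is precisely Eq.(\ref{nece2}).

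For the sufficiency part, under the extra hypothesis $\mathcal{G}(\rho)=\rho$, I would argue as follows. Suppose Eq.(\ref{nece2}) holds, so that $\sum_j p_j\,\mathcal{G}(\rho_j)=\mathcal{G}(\rho)=\rho$. Choose the candidate probabilities $p_{jk}\equiv p_j/m$. These are non-negative, satisfy $\sum_k p_{jk}=p_j$, and $\sum_{j,k}p_{jk}=\sum_j p_j=1$, so they form an admissible family. Moreover $\sum_j\sum_k p_{jk}\,U_k^{\dag}\rho_j U_k=\sum_j\frac{p_j}{m}\sum_k U_k^{\dag}\rho_j U_k=\sum_j p_j\,\mathcal{G}(\rho_j)=\rho$, which is exactly the second line of Eq.(\ref{ggg}). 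Theorem~\ref{basic2} then guarantees that $|\phi\rangle\to\{p_j,|\phi_j\rangle\}$ is achievable by SEP.

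The main obstacle is essentially nonexistent: all the content is already packaged in Theorem~\ref{basic2}, and the corollary is a short consequence. The only point that needs a moment of care is the conjugation-invariance $\mathcal{G}(S^{\dag}\sigma S)=\mathcal{G}(\sigma)$, which relies on $Stab(\psi)$ being finite (so the rearrangement argument applies) and unitary (so $\mathcal{G}$ in Eq.(\ref{gtwirling}) is genuinely the group average); both hold for the representative $\psi$ fixed in Theorem~\ref{basic2}. One could alternatively phrase the sufficiency step through the factorization (\ref{factorization}) of $\mathcal{G}$, but the explicit choice $p_{jk}=p_j/m$ is cleaner and makes no use of it; I would present it that way.
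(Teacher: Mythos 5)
Your proof is correct and follows exactly the route the paper intends: the necessity direction is obtained by applying the twirling map $\mathcal{G}$ to both sides of Eq.(\ref{ggg}) and using its invariance under conjugation by elements of the finite unitary group $Stab(\psi)$, and the sufficiency direction is the natural converse via the uniform choice $p_{jk}=p_j/m$, which reduces Eq.(\ref{nece2}) back to Eq.(\ref{ggg}) so that Theorem~\ref{basic2} applies. The paper states this corollary with only the one-line remark that one should ``take the $Stab(\psi)$-twirling on both sides''; your write-up simply supplies the details of that same argument.
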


The corollary above implies in particular that the transformation $\psi\to\{p_j,\phi_j\}$ is achievable by SEP if and only if
$$
\sum_{j}p_j\mathcal{G}(\rho_j)=I\;.
$$

\subsection{Probabilities}\label{sec:pro}

From the theorems above we see that when the conversion of one state to another is possible with some probability, i.e. the states in the same SLOCC class, this probability is usually less than one. A natural question to ask is what is the maximum possible probability, $P_{\max}$, to convert one state to another if we know that the two states are in the same SLOCC class. 
Eq.~(\ref{ensvid}) provides an immediate upper bound on $P_{\max}$, which we summarize in the following corollary.
\begin{corollary}
Let $\psi_1$ and $\psi_2$ be as in theorem~\ref{two}.
Then,
\begin{equation}\label{vidupper}
P_{\max}(\psi_1\to\psi_2)\leq\min_{k}\left\{\frac{E_{k}(\psi_1)}{E_{k}(\psi_2)}\right\}
\end{equation}
where $k=1,2,...,\dim\mathcal{H}_n$ and $E_k$ are the functions defined in Eq.(\ref{gvid}).
\end{corollary}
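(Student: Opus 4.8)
The plan is to read the bound off the ensemble inequality~(\ref{ensvid}), which was already derived from Theorem~\ref{basic2}. Suppose a separable operation converts $\psi_1$ into $\psi_2$ with probability $P$. Grouping all measurement outcomes that signal ``success'' into one branch, this protocol realizes a transformation $\psi_1\to\{p_j,\phi_j\}$ of the type treated in Theorem~\ref{basic2}, in which one final state equals $\psi_2$ and carries total weight $P$. Inequality~(\ref{ensvid}) then gives, for every $k=1,\dots,\dim\mathcal{H}_n$,
\[
E_k(\psi_1)\ \ge\ \sum_j p_j\,E_k(\phi_j)\ \ge\ P\,E_k(\psi_2),
\]
where the last step discards all but the $\psi_2$ term; this is legitimate because each ADM $\rho_\psi(\phi_j)$ is positive definite, so its eigenvalues are strictly positive and $E_k(\phi_j)\ge 0$. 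Since $\rho_\psi(\psi_2)$ is likewise positive definite, $E_k(\psi_2)>0$, so $P\le E_k(\psi_1)/E_k(\psi_2)$; taking the minimum over $k$ and the supremum over achievable $P$ yields the stated bound.

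The delicate point --- the natural ``main obstacle'' --- is that invoking Theorem~\ref{basic2} presupposes every branch $\phi_j$ again lies in $\mathcal{O}_\psi$, which can fail when a Kraus operator is singular on some tensor factor, in which case $\rho_\psi(\phi_j)$ and hence $E_k(\phi_j)$ are not literally defined. To avoid assuming at the outset that one restricts to orbit-preserving protocols, I would argue directly, paralleling the proof of Theorem~\ref{two}. Let $\{M_\ell\}$ be the Kraus operators signalling success; each has the form $M_\ell=(\|g_1\psi\|/\|g_2\psi\|)\sqrt{p_\ell}\,g_2 U_\ell g_1^{-1}$ with $U_\ell\in\text{Stab}(\psi)$ and $\sum_\ell p_\ell=P$ (exactly the derivation leading to~(\ref{mkform}), applied only to the success outcomes). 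The completeness relation $\sum_k M_k^\dagger M_k=I$ forces $\sum_\ell M_\ell^\dagger M_\ell\le I$; conjugating by $g_1$ and dividing by $\|g_1\psi\|^2$ turns this into the operator inequality
\[
\sum_\ell p_\ell\,U_\ell^\dagger\rho_2 U_\ell\ \le\ \rho_1 .
\]

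From here the bound follows from two standard facts about the functional $f_k(A)\equiv\sum_{m=k}^{N}\lambda_m(A)$, the sum of the $N-k+1$ smallest eigenvalues of a Hermitian matrix $A$, where $N\equiv\dim\mathcal{H}_n$, so that $E_k(\psi_i)=f_k(\rho_i)$. First, $f_k$ is monotone for the operator order ($A\le B\Rightarrow f_k(A)\le f_k(B)$), by Weyl's monotonicity theorem. Second, writing $f_k(A)=\min\{\text{Tr}(\Pi A):\Pi\ \text{a rank-}(N-k+1)\ \text{projector}\}$ exhibits $f_k$ as a minimum of linear functionals, hence concave and positively homogeneous, therefore superadditive; combined with unitary invariance of the spectrum this gives $f_k\bigl(\sum_\ell p_\ell U_\ell^\dagger\rho_2 U_\ell\bigr)\ge\sum_\ell p_\ell f_k(U_\ell^\dagger\rho_2 U_\ell)=P\,f_k(\rho_2)$. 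Feeding the operator inequality $\sum_\ell p_\ell U_\ell^\dagger\rho_2 U_\ell\le\rho_1$ into the monotonicity of $f_k$ yields $P\,E_k(\psi_2)=P f_k(\rho_2)\le f_k(\rho_1)=E_k(\psi_1)$, and minimizing over $k$ while maximizing over $P$ completes the argument as before.
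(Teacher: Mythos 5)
Your proposal is correct, and its first paragraph is precisely the paper's own proof: the authors simply view the optimal protocol as an ensemble transformation $\psi_1\to\{p_j,\phi_j\}$ with $\psi_2$ carrying weight $P_{\max}$, invoke Eq.~(\ref{ensvid}), and discard the remaining nonnegative terms. Where you go further is in flagging, correctly, that Eq.~(\ref{ensvid}) is a consequence of Theorem~\ref{basic2} and hence formally requires every branch $\phi_j$ to lie in $\mathcal{O}_\psi$ --- whereas the failure branches need not (indeed, in the proof of Theorem~\ref{delta} the authors deliberately send the failure weight to the product state $|00\dots0\rangle$, which is outside the orbit). Your repair is sound: restricting the Kraus-form derivation of Eq.~(\ref{mkform}) to the success outcomes and using only $\sum_\ell M_\ell^\dagger M_\ell\le I$ yields the operator inequality $\sum_\ell p_\ell U_\ell^\dagger\rho_2 U_\ell\le\rho_1$ (this is exactly the positivity part of the separability statement in Theorem~\ref{delta}), and the Ky Fan minimum principle gives both the Weyl monotonicity and the superadditivity of $f_k$ needed to conclude $P\,E_k(\psi_2)\le E_k(\psi_1)$. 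So the two arguments reach the same inequality, but yours is self-contained and does not require the failure branches to remain in the SLOCC class; the paper's is shorter but leans on an implicit extension of $E_k$ (by nonnegativity, or by continuity as mentioned in the conclusions) to states outside the orbit. One cosmetic remark: the paper's phrase ``the value of $E_k$ for these states is non-zero'' should read ``non-negative,'' which is the property your argument actually isolates and uses.
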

\begin{proof}
The procedure converting $|\psi_1\rangle$ to $|\psi_2\rangle$ with probability $P_{\max}$ also converts $|\psi_1\rangle$
to some other states with the remaining probability $1-P_{\max}$. The value of $E_k$ for these states is non-zero
and therefore from Eq.(\ref{ensvid}) it follows that $P_{\max}\leq E_{k}(\psi_1)/E_{k}(\psi_2)$.
\end{proof}
In~\cite{Vid99} it was shown that for the bipartite case the upper bound in Eq.(\ref{vidupper}) is always achievable.
In the multipartite case, unless $\psi_2$ is critical, this is usually not the case and as we will see below in some 
cases it is even possible 
to find better upper bounds. In particular, it is possible that for all $k$, $E_{k}(\psi_1)\geq E_{k}(\psi_2)$ and yet the probability 
$P_{\max}$ is strictly less than 1 (or even arbitrarily close to zero).
 
In the following we first derive a mathematical expression for the maximum probability. Since in general the expression can be difficult to calculate (at least as hard as determining if a multipartite density matrix is separable or not), we simplify it dramatically for the case in which the states $\psi_1$ or $\psi_2$ are critical. Then, for the more general case, we find lower and upper bounds for $P_{\max}$. These upper and lower bounds remain the same if we replace SEP by LOCC.

\subsubsection{General formula for $P_{\max}$}

\begin{theorem}\label{delta}
Let $\psi,\;\psi_1,\;\psi_2,\;\rho_1\;,\rho_2$ and $m$ be as in theorem~\ref{two}. 
For any set of probabilities $\{p_k\}_{k=1}^{m}$ with 
$\sum_{k=1}^{m}p_k\leq 1$, denote
$$
\Delta_{\psi_1\to\psi_2}\left(\{p_k\}\right)\equiv \rho_1-\sum_{k=1}^{m}p_kU_{k}^{\dag}\rho_2U_{k}
$$
where $U_{k}\in Stab(\psi)$.
Then, the maximum probability to convert $\psi_1$ to $\psi_2$ is given by the following expression:
\begin{align}
& P_{\max}(\psi_1\to\psi_2)\nonumber\\
& =\max\left\{\sum_{k=1}^{m}p_k\Big| \;\Delta_{\psi_1\to\psi_2}\left(\{p_k\}\right)\text{ is separable}\right\}\;.
\label{opt}
\end{align}
That is, the maximum is taken over all sets of probabilities $\{p_k\}$ such that $\sum_{k=1}^{m}p_k\leq 1$ and $\Delta_{\psi_1\to\psi_2}\left(\{p_k\}\right)$ is a positive semi-definite separable matrix.
\end{theorem}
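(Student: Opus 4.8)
The plan is to reduce the probabilistic conversion $\psi_1\to\psi_2$ to the deterministic setting already handled by Theorem~\ref{two}, by completing a sub-normalized separable instrument to a genuine SEP channel using Lemma~\ref{flem}. Concretely, suppose first that a SEP channel $\Lambda$ converts $\psi_1$ to $\psi_2$ with probability $P$. Split the Kraus operators of $\Lambda$ into those that ``succeed'' (i.e. map $|\psi_1\rangle$ to a multiple of $|\psi_2\rangle$) and those that do not. Repeating verbatim the argument of Theorem~\ref{two}, the successful Kraus operators must have the form $M_k=c\sqrt{p_k}\,g_2U_kg_1^{-1}$ with $U_k\in\mathrm{Stab}(\psi)$ and $\sum_k p_k=P$; taking $\sum_k M_k^{\dag}M_k$ and using the remaining completeness slack gives that $I-\sum_k M_k^{\dag}M_k$ is positive semi-definite, which after conjugating by $g_1$ and rescaling is exactly the statement that $\Delta_{\psi_1\to\psi_2}(\{p_k\})$ is positive semi-definite; moreover the ``failure'' Kraus operators being separable forces $\Delta_{\psi_1\to\psi_2}(\{p_k\})$ to be \emph{separable}, since it equals the (rescaled, conjugated) sum $\sum_{\text{fail}}M_\ell^{\dag}M_\ell$ of products of tensor-product operators. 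Hence $P_{\max}$ is bounded above by the right-hand side of Eq.~(\ref{opt}).

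For the converse, I would start from a set $\{p_k\}$ with $\sum_k p_k\le 1$ for which $\Delta\equiv\Delta_{\psi_1\to\psi_2}(\{p_k\})$ is positive semi-definite and separable. Define $M_k$ by the formula (\ref{mkform}), i.e. $M_k=\frac{\|g_1\psi\|}{\|g_2\psi\|}\sqrt{p_k}\,g_2U_kg_1^{-1}$; each $M_k$ is a tensor product of local operators and $\sum_k M_k^{\dag}M_k=\sum_k p_k\,(g_1^{\dag})^{-1}U_k^{\dag}\rho_2 U_k g_1^{-1}\cdot\|g_1\psi\|^2$ (up to bookkeeping of norms), so that $I-\sum_k M_k^{\dag}M_k$ is a conjugate-by-$g_1^{-1}$ and rescaling of $\Delta$, hence itself positive semi-definite and separable. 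Writing that leftover separable operator as a sum $\sum_\ell \sigma_1^{(\ell)}\otimes\cdots\otimes\sigma_n^{(\ell)}$ of positive tensor products and taking local square roots produces tensor-product ``failure'' Kraus operators $N_\ell=(\sigma_1^{(\ell)})^{1/2}\otimes\cdots\otimes(\sigma_n^{(\ell)})^{1/2}$ with $\sum_\ell N_\ell^{\dag}N_\ell=I-\sum_k M_k^{\dag}M_k$. Then $\{M_k\}\cup\{N_\ell\}$ is a legitimate SEP channel that, acting on $|\psi_1\rangle\langle\psi_1|$, outputs $|\psi_2\rangle\langle\psi_2|$ with probability exactly $\sum_k p_k$. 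Taking the supremum over all admissible $\{p_k\}$ yields the reverse inequality and hence Eq.~(\ref{opt}).

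The main obstacle — and the place where a little care is needed — is the bookkeeping around the ``failure'' branch: one must check that the residual completeness operator $I-\sum_k M_k^{\dag}M_k$ is not merely positive semi-definite but genuinely realizable by \emph{separable} Kraus operators, and conversely that in the forward direction the failure Kraus operators really do assemble into a separable matrix after the $g_1$-conjugation. Both directions hinge on the elementary fact (used implicitly already in Lemma~\ref{flem}) that a positive semi-definite operator is separable iff it is a sum of positive tensor products, together with the observation that conjugation of a separable operator by a fixed invertible product operator $g_1^{-1}\otimes\cdots$ preserves separability. One should also note that the optimization in (\ref{opt}) is over a compact convex feasible set (the $p_k$ lie in a simplex and the separability-plus-positivity constraint is closed), so the maximum is attained and $P_{\max}$ is well-defined; this justifies writing $\max$ rather than $\sup$.
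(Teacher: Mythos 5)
Your proposal is correct and follows essentially the same route as the paper: split the Kraus operators into a success branch, which Theorem~\ref{two}'s argument forces into the form of Eq.~(\ref{mkform}), and a failure branch whose completeness residual, after conjugation by the tensor-product operator $g_1$ and rescaling, is exactly $\Delta_{\psi_1\to\psi_2}(\{p_k\})$ and is separable because each failure Kraus operator is a tensor product. Your explicit converse construction (taking local square roots of the separable decomposition of the residual) and your remark on attainment of the maximum are welcome details that the paper's proof leaves implicit, but they do not constitute a different approach.
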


\begin{proof}
Any separable transformation that takes $\psi_1$ to $\psi_2$ with maximum probability $P_{\max}$, takes $\psi_1$ to some other states that are not $\psi_2$ with the remaining probability $1-P_{\max}$. However, since any state can be transformed to the product state 
$|00...0\rangle$ with 100\% success , we can assume, without loss of generality, that the separable map that takes $\psi_1$ to $\psi_2$ with maximum probability $P_{\max}$, takes $\psi_1$ to the product state $|00...0\rangle$ with probability $1-P_{\max}$. A separable transformation
with such property, can be written as $\Lambda=\Lambda_1+\Lambda_2$, where 
$$
\Lambda_1(\cdot)=\sum_{k=1}^{m}M_{k}(\cdot) M_{k}^{\dag}
$$
is a separable map that takes $\psi_1$ to $\psi_2$ with probability $\text{Tr}\left[\Lambda_1(\psi_1)\right]$ and 
$$
\Lambda_2(\cdot)=\sum_{k>m}M_{k}(\cdot) M_{k}^{\dag}
$$
is a separable map that takes $\psi_1$ to $|00...0\rangle$ with probability $\text{Tr}\left[\Lambda_2(\psi_1)\right]=1-\text{Tr}\left[\Lambda_1(\psi_1)\right]$.
The form of $M_k$ for $k\leq m$ is given in Eq.~(\ref{mkform}). Note that Eq.~(\ref{mkform}) also implies that there are at most $m$ 
(i.e. the cardinality of stabilizer group) distinct Kraus operators in the (separable) operator sum representation of $\Lambda_1$.
If there are strictly less than $m$ Kraus operators in the representation of $\Lambda_1$, 
then we can complete it to $m$ Kraus operators
by adding zeros (i.e. taking some of the $p_k$s in Eq.(\ref{mkform}) to be zero). With these notations, the map $\Lambda_1$
converts $\psi_1$ to $\psi_2$ with probability
$$
\text{Tr}\left[\Lambda_1(\psi_1)\right]=\sum_{k=1}^{m}p_k\;.
$$
Now, the completeness relation for the tace-preserving CP map $\Lambda$, can be written as,
\begin{align*}
I & =\sum_{k}M_{k}^{\dag}M_{k}=\sum_{k=1}^{m}M_{k}^{\dag}M_{k}+\sum_{k>m}M_{k}^{\dag}M_{k}\\
&=\|g_1\psi\|^2
\sum_{k=1}^{m}p_k\left(g_{1}^{\dag}\right)^{-1}U_{k}^{\dag}\rho_2(\phi)U_{k}g_{1}^{-1}+\sum_{k>m}M_{k}^{\dag}M_{k}\;,
\end{align*}
where we have used Eq.~(\ref{mkform}). Hence, the probabilities $\{p_k\}$ for $\Lambda_1$ are chosen such that
$$
\Delta_{\psi_1\to\psi_2}\left(\{p_k\}\right)=\frac{1}{\|g_1\psi\|^2}g_{1}^{\dag}\left(\sum_{k>m}M_{k}^{\dag}M_{k}\right)g_1\;.
$$ 
That is, $\Delta_{\psi_1\to\psi_2}\left(\{p_k\}\right)$ is a positive semi-definite separable matrix. This completes the proof.
\end{proof}

The requirement in theorem~\ref{delta} that $\Delta_{\psi_1\to\psi_2}\left(\{p_k\}\right)$ is a separable matrix implies that
$\mathcal{G}(\Delta_{\psi_1\to\psi_2}\left(\{p_k\}\right))$ is also a separable matrix. This leads to the following necessary condition:
\begin{corollary}\label{NC}
Notations as in theorem~\ref{delta}. If $P_m\equiv P_{\max}(\psi\to\phi)$ then the matrix
\begin{equation}\label{nc}
\mathcal{G}(\rho_1)-P_m\mathcal{G}\left(\rho_2\right)
\end{equation}
is separable.
\end{corollary}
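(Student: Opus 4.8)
The plan is to derive this directly from the exact formula for $P_{\max}$ in Theorem~\ref{delta} by pushing the optimal decomposition through the $Stab(\psi)$-twirling map $\mathcal{G}$. The two ingredients are elementary: $\mathcal{G}$ sends separable matrices to separable matrices, and $\mathcal{G}$ collapses conjugation by any stabilizer element.

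First I would use Theorem~\ref{delta} to choose an optimal set of probabilities $\{p_k\}_{k=1}^{m}$ with $\sum_{k=1}^{m}p_k=P_m$ and such that
\[
\Delta\;\equiv\;\Delta_{\psi_1\to\psi_2}\!\left(\{p_k\}\right)\;=\;\rho_1-\sum_{k=1}^{m}p_kU_k^{\dag}\rho_2U_k
\]
is a positive semi-definite separable matrix, with $U_k\in Stab(\psi)$. (That the maximum in Eq.(\ref{opt}) is attained is not an issue: the feasible set of $\{p_k\}$ is a closed and bounded subset of $\mathbb{R}^m$, since the separability constraint is closed.)

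Next I would observe that $\mathcal{G}(\Delta)$ is again separable. By the choice of orbit representative (Proposition~\ref{uni}) every $U_j\in Stab(\psi)$ is a tensor product of unitary matrices, so conjugating any product $\sigma_1^{(l)}\otimes\cdots\otimes\sigma_n^{(l)}$ of positive semi-definite matrices by $U_j$ again yields a product of positive semi-definite matrices; since $\mathcal{G}(\Delta)=\frac{1}{m}\sum_{j=1}^{m}U_j^{\dag}\Delta U_j$ is a sum of such conjugates of a separable matrix, and a sum of separable matrices is separable, $\mathcal{G}(\Delta)$ is separable.

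Finally I would compute $\mathcal{G}(\Delta)$ by linearity, $\mathcal{G}(\Delta)=\mathcal{G}(\rho_1)-\sum_{k=1}^{m}p_k\,\mathcal{G}\!\left(U_k^{\dag}\rho_2U_k\right)$, and use the invariance identity $\mathcal{G}(U_k^{\dag}\rho_2U_k)=\mathcal{G}(\rho_2)$ for every $U_k\in Stab(\psi)$ — this is the main (and only mildly delicate) step, and it follows from $\frac{1}{m}\sum_{j}U_j^{\dag}U_k^{\dag}\rho_2U_kU_j=\frac{1}{m}\sum_{j}(U_kU_j)^{\dag}\rho_2(U_kU_j)=\mathcal{G}(\rho_2)$, since left translation by $U_k$ permutes the finite group $Stab(\psi)$. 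Substituting back gives $\mathcal{G}(\Delta)=\mathcal{G}(\rho_1)-\big(\sum_{k}p_k\big)\mathcal{G}(\rho_2)=\mathcal{G}(\rho_1)-P_m\,\mathcal{G}(\rho_2)$, which is therefore separable — exactly the assertion of the corollary.
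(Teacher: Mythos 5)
Your proof is correct and follows exactly the paper's (very terse) argument: apply the $Stab(\psi)$-twirling $\mathcal{G}$ to the optimal separable $\Delta_{\psi_1\to\psi_2}(\{p_k\})$ from Theorem~\ref{delta}, use that $\mathcal{G}$ preserves separability (the $U_k$ being local unitaries), and use the left-translation invariance $\mathcal{G}(U_k^{\dag}\rho_2 U_k)=\mathcal{G}(\rho_2)$ to collapse the sum to $P_m\,\mathcal{G}(\rho_2)$. The only additions you make — noting that the maximum is attained and spelling out the invariance identity — are sound and merely fill in details the paper leaves implicit.
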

As we will see shortly, this leeds to an upper bound on $P_m$.

\subsubsection{A simple formula for states with unitary stabilizer}

If the state $\psi_2$ in theorem~\ref{delta} is the critical state $\psi$, then $\rho_2=I$ and 
the expression $\Delta_{\psi_1\to\psi_2}\left(\{p_k\}\right)$ in theorem~\ref{delta} is given simply by
$$
\Delta_{\psi_1\to\psi_2}\left(\{p_k\}\right)=\rho_1-\sum_{k=1}^{m}p_k \;I\;.
$$
Now, from the similar arguments as in lemma~\ref{flem} it follows that $\Delta_{\psi_1\to\psi_2}\left(\{p_k\}\right)$ is separable if and only if
$\sum_{k=1}^{m}p_k$ is not greater than the smallest eigenvalue of $\rho_1$. We therefore have the following corollary:
\begin{corollary}
Let $\psi\in\mathcal{H}_n$ be a critical state and let $\psi_1\in\mathcal{O}_\psi$. Then, the maximum possible probability
to convert $\psi_1$ to $\psi$ is given by
$$
P_{\max}(\psi_1\to\psi)=\lambda_{\min}[\rho_1]\;,
$$
where $\lambda_{\min}[\rho_1]$ is the minimum eigenvalue of $\rho_1$.
\end{corollary}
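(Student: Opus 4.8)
The plan is to specialize Theorem~\ref{delta} to the case $\psi_2=\psi$ and then evaluate the optimization in Eq.~(\ref{opt}) in closed form. First I would note that, since $\psi$ is normalized, taking the identity element $g=I$ in Definition~\ref{asso} gives $\rho_2=\rho_\psi(\psi)=I$. As $\psi$ is critical with finite stabilizer, every $U_k\in\mathrm{Stab}(\psi)$ is unitary (Proposition~\ref{nolan1}), so $\sum_{k=1}^m p_kU_k^{\dag}\rho_2U_k=\big(\sum_{k=1}^m p_k\big)I$ and therefore
$$
\Delta_{\psi_1\to\psi}\big(\{p_k\}\big)=\rho_1-P\,I,\qquad P\equiv\sum_{k=1}^m p_k .
$$
Consequently Theorem~\ref{delta} says that $P_{\max}(\psi_1\to\psi)$ is the largest $P$ with $0\le P\le 1$ for which $\rho_1-PI$ is a positive semi-definite separable matrix, and the whole task reduces to identifying this number.

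Next I would establish the separability criterion: $\rho_1-PI$ is separable if and only if $P\le\lambda_{\min}[\rho_1]$. The ``only if'' direction is immediate, since a separable matrix is in particular positive semi-definite, so $\rho_1\succeq PI$; this also shows $P\le\langle\psi|\rho_1|\psi\rangle=1$, so that the side constraint $P\le1$ in Eq.~(\ref{opt}) is automatically met and is never the binding one. For the ``if'' direction I would exploit the product structure of $\rho_1$: writing $|\psi_1\rangle=g_1|\psi\rangle/\|g_1\psi\|$ with $g_1=A_1\otimes\cdots\otimes A_n\in G$, the associate density matrix $\rho_1=g_1^{\dag}g_1/\|g_1\psi\|^2$ is, after absorbing the scalar, a tensor product $B_1\otimes\cdots\otimes B_n$ of positive definite matrices. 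Mimicking the proof of Lemma~\ref{flem}, split each factor as $B_i=(B_i-\mu_i I_i)+\mu_i I_i$ with $\mu_i=\lambda_{\min}[B_i]>0$; expanding the tensor product, the term in which every factor is $\mu_i I_i$ contributes $(\mu_1\cdots\mu_n)I=\lambda_{\min}[\rho_1]\,I$, while every other term is a nonnegative scalar times a tensor product of positive semi-definite matrices, hence separable. Thus $\rho_1=\lambda_{\min}[\rho_1]\,I+R$ with $R$ separable, and so $\rho_1-PI=(\lambda_{\min}[\rho_1]-P)I+R$ is a sum of separable matrices whenever $P\le\lambda_{\min}[\rho_1]$.

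Combining the two parts, the supremum in Eq.~(\ref{opt}) is attained at $P=\lambda_{\min}[\rho_1]$, which gives $P_{\max}(\psi_1\to\psi)=\lambda_{\min}[\rho_1]$. I expect the only genuinely non-routine step to be the ``if'' direction of the separability criterion, namely verifying that $\rho_1-\lambda_{\min}[\rho_1]I$ is not merely positive semi-definite but actually separable; this is precisely where the tensor-product form of $\rho_1$ forced by $g_1\in G$ is indispensable, and it is the direct multipartite analogue of the computation carried out in Lemma~\ref{flem}.
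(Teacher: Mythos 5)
Your proposal is correct and follows essentially the same route as the paper: specialize Theorem~\ref{delta} with $\rho_2=I$ and unitary $U_k$ so that $\Delta=\rho_1-PI$, then show $\rho_1-PI$ is separable iff $P\le\lambda_{\min}[\rho_1]$ by the tensor-product splitting argument of Lemma~\ref{flem}. The paper merely cites ``similar arguments as in Lemma~\ref{flem}'' at this last step, whereas you spell the decomposition out explicitly (and correctly), so there is nothing to fault.
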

Note that in this case the upper bound given in corollary~\ref{vidupper} is saturated by this value. 
That is, the probability to convert a state to the maximally entangled state of the SLOCC orbit (i.e. the critical state)
is given by the same formula that it is given in the bipartite case. 

The expression in theorem~\ref{delta} can also be simplified if 
$\psi_1$ is critical or if the stabilizer group of $\psi_1$ is unitary; i.e. if $\psi_1$ is also
a natural representative of the orbit $\mathcal{O}_{\psi}$.
Since most states are generic, their stabilizer group is finite, and if they are also critical then from proposition~\ref{nolan1} it follows that their stabilizer group is unitary. For such states the following corollary applies.

\begin{corollary}
Add to the assumptions in theorem~\ref{delta} the assumption that $Stab(\psi_1)$ is unitary.
Denote $\sigma_p\equiv \rho_1-p\mathcal{G}\left(\rho_2\right)$, where $\mathcal{G}(\cdot)$ is the $Stab(\psi)$-twirling operation (see Eq.~(\ref{gtwirling})). Then,
$$
P_{\max}(\psi_1\to\psi_2)=\max\left\{p\;\Big|\;\sigma_p\text{ separable}\right\}\;.
$$
\end{corollary}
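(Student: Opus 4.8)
The plan is to show that the multi-parameter optimization of Theorem~\ref{delta} degenerates, under the added hypothesis, to the single-parameter problem over the family $\sigma_p=\rho_1-p\,\mathcal{G}(\rho_2)$. The first step I would carry out is to rephrase the hypothesis: by Proposition~\ref{proasso}, applied with orbit representative $\psi$ and point $\psi_1$, the stabilizer $Stab(\psi_1)$ is unitary if and only if $[\rho_1,U_k]=0$ for every $U_k\in Stab(\psi)$, equivalently $U_k^\dagger\rho_1U_k=\rho_1$ for all $k$, i.e.\ $\mathcal{G}(\rho_1)=\rho_1$. This identity is what makes the reduction possible.

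Second, I would record three elementary facts about the $Stab(\psi)$-twirling $\mathcal{G}$ of Eq.~(\ref{gtwirling}). (i) $\mathcal{G}$ sends separable matrices to separable matrices, because it is an average of conjugations by the local unitaries $U_k\in Stab(\psi)\subset U(d_1)\otimes\cdots\otimes U(d_n)$, which preserve the tensor-product form of each summand in a separable decomposition. (ii) Since the $U_k$ form a finite group, for each fixed $j$ the map $U_k\mapsto U_jU_k$ permutes $Stab(\psi)$; hence for any operator $\sigma$ and any numbers $p_k$,
$$\mathcal{G}\!\left(\sum_{k=1}^{m}p_kU_k^\dagger\sigma U_k\right)=\Big(\sum_{k=1}^{m}p_k\Big)\mathcal{G}(\sigma).$$
(iii) $\rho_2=g_2^\dagger g_2/\|g_2\psi\|^2$ is a single tensor product of positive semidefinite matrices, hence separable, and so is each $U_k^\dagger\rho_2U_k$; therefore $\mathcal{G}(\rho_2)$ is separable, which shows that $\{p\ge0:\sigma_p\text{ separable}\}$ is a closed interval — if $\sigma_p$ is separable and $0\le p'\le p$ then $\sigma_{p'}=\sigma_p+(p-p')\mathcal{G}(\rho_2)$ is a sum of separable matrices — so the maximum defining the right-hand side is attained.

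Third comes the proof proper, in two inclusions. For ``$\ge$'': given $p\le1$ with $\sigma_p$ separable, set $p_k=p/m$ for all $k$; then $\Delta_{\psi_1\to\psi_2}(\{p_k\})=\rho_1-p\,\mathcal{G}(\rho_2)=\sigma_p$ is separable and $\sum_k p_k=p$, so Theorem~\ref{delta} gives $P_{\max}(\psi_1\to\psi_2)\ge p$; taking the supremum over such $p$ gives one inequality. For ``$\le$'': pick a maximizer $\{p_k\}$ in Theorem~\ref{delta}, so $\sum_k p_k=P_{\max}(\psi_1\to\psi_2)$ and $\Delta_{\psi_1\to\psi_2}(\{p_k\})$ is separable. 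Apply $\mathcal{G}$ to $\Delta_{\psi_1\to\psi_2}(\{p_k\})=\rho_1-\sum_kp_kU_k^\dagger\rho_2U_k$; by (i) the result is still separable, and using $\mathcal{G}(\rho_1)=\rho_1$ together with fact (ii) it equals $\rho_1-P_{\max}(\psi_1\to\psi_2)\,\mathcal{G}(\rho_2)=\sigma_{P_{\max}}$. Hence $\sigma_{P_{\max}}$ is separable, so $P_{\max}(\psi_1\to\psi_2)\le\max\{p:\sigma_p\text{ separable}\}$. The two inclusions give the claimed equality.

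I expect the only substantive point to be the ``$\le$'' direction, and within it the observation that twirling $\Delta_{\psi_1\to\psi_2}$ loses nothing: this is precisely where the unitarity of $Stab(\psi_1)$ (in the form $\mathcal{G}(\rho_1)=\rho_1$) is indispensable. Without it the twirled inequality would be only necessary and not sufficient, as already signalled by Corollary~\ref{NC}. The remaining ingredients — the group-translation identity for $\mathcal{G}$, separability of $\mathcal{G}(\rho_2)$, and closedness of the separable cone — are routine bookkeeping.
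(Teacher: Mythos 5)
Your proof is correct and follows essentially the same route as the paper: twirl the feasible $\Delta_{\psi_1\to\psi_2}(\{p_k\})$ of Theorem~\ref{delta}, use $\mathcal{G}(\rho_1)=\rho_1$ (from Proposition~\ref{proasso}) and the group-averaging identity to collapse it to $\sigma_p$ with $p=\sum_k p_k$, and observe that $\sigma_p=\Delta_{\psi_1\to\psi_2}(\{p/m\})$ so the single-parameter family is itself feasible. Your version merely makes explicit the separability-preservation of $\mathcal{G}$ and the two inclusions that the paper compresses into one line.
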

\begin{remark} Note that $\sigma_p=\mathcal{G}\left(\rho_1-p\rho_2\right)$ since we assume that  $Stab(\psi_1)$ is unitary.
Moreover, if $p$ is small enough (see lemma~\ref{flem}) then $\rho_1-p\rho_2$ is a positive semi-definite separable matrix, 
and therefore $\mathcal{G}\left(\rho_1-p\rho_2\right)$ is also positive semi-definite separable matrix. This observation provides 
a lower bound on $P_{\max}$, which will be discussed after the proof.\end{remark}

\begin{proof}
Based on theorem~\ref{delta}, we require that  
$\Delta_{\psi_1\to\psi_2}\left(\{p_k\}\right)$ is separable operator. However, since $\mathcal{G}(\rho_1)=\rho_1$ (see proposition~\ref{proasso}) we get that
\begin{align}
&\mathcal{G}(\Delta_{\psi_1\to\psi_2}\left(\{p_k\}\right))=\mathcal{G}(\rho_1)-p\mathcal{G}\left(\rho_2\right)\nonumber\\
&=\rho_1-p\mathcal{G}\left(\rho_2\right)=\Delta_{\psi_1\to\psi_2}\left(\{p/m\}\right))
\end{align}
is also a separable matrix. Here, $p\equiv\sum_{k=1}^{m}p_k$ and $\{p/m\}$ is a set of $m$ probabilities all equal to $p/m$.
This completes the proof.
\end{proof}

\subsubsection{Lower and Upper bounds}

The formula in theorem~\ref{delta} for the maximum probability to convert one state to another is very hard to calculate especially
if the stabilizer of $\psi_1$ is not unitary.
One of the reasons for that is that the optimization in Eq.~(\ref{opt}) needs to be taken over all sets $\{p_k\}$ such that 
$\Delta_{\psi\to\phi}\left(\{p_k\}\right)$ is separable; but to check whether a multipartite mixed state is separable or not
is an NP hard problem~\cite{Gur}. Therefore, it would be very useful to find lower and upper bounds to $P_{\max}$ in terms of much 
simpler expressions. Fortunately, we were able to find such expressions.

\subsubsection*{An Upper Bound}
Notations as in theorem~\ref{delta}. A simple upper bound to $P_m$ follows immediately from corollary~\ref{NC}.
The requirement that the matrix in Eq.~(\ref{nc}) is separable implies that the matrix is also positive semi-definite.
We therefore have the following upper-bound:
\begin{corollary}
The maximum probability to convert $\psi_1$ to $\psi_2$ is bound from above by:
$$
P_{\max}(\psi_1\to\psi_2)\leq\frac{\lambda_{\max}\left[\mathcal{G}(\rho_1)\right]}{\lambda_{\max}\left[\mathcal{G}\left(\rho_2\right)\right]}
$$
where $\lambda_{\max} [\cdot]$ is the maximum eigenvalue of the matrix in the square parenthesis.
\end{corollary}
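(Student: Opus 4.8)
The plan is to read off the bound directly from Corollary~\ref{NC}. That corollary already does the substantive work: it asserts that if $P_m\equiv P_{\max}(\psi_1\to\psi_2)$ then the matrix $\mathcal{G}(\rho_1)-P_m\,\mathcal{G}(\rho_2)$ is separable. First I would invoke the elementary fact that every separable matrix is in particular positive semi-definite, which turns the separability statement into the Löwner-order inequality
\[
\mathcal{G}(\rho_1)\;\geq\;P_m\,\mathcal{G}(\rho_2)\;.
\]

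Next I would pass from this operator inequality to an eigenvalue inequality by means of the variational (min–max) characterization of the largest eigenvalue. Concretely, let $|v\rangle$ be a normalized eigenvector of $\mathcal{G}(\rho_2)$ attaining $\lambda_{\max}[\mathcal{G}(\rho_2)]$; sandwiching the inequality above between $\langle v|$ and $|v\rangle$ gives
\[
\lambda_{\max}[\mathcal{G}(\rho_1)]\;\geq\;\langle v|\mathcal{G}(\rho_1)|v\rangle\;\geq\;P_m\,\langle v|\mathcal{G}(\rho_2)|v\rangle\;=\;P_m\,\lambda_{\max}[\mathcal{G}(\rho_2)]\;,
\]
the first step being exactly the min–max principle. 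Rearranging yields the claimed bound $P_{\max}(\psi_1\to\psi_2)\leq\lambda_{\max}[\mathcal{G}(\rho_1)]/\lambda_{\max}[\mathcal{G}(\rho_2)]$.

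There is no genuine obstacle here — the statement is a one-line consequence of Corollary~\ref{NC} together with the min–max principle — so the only point that deserves a moment's care is the legitimacy of the division, i.e. that the denominator is strictly positive. This follows because $\rho_2=\rho_\psi(\psi_2)$ is positive definite by construction (Definition~\ref{asso}), and $\mathcal{G}$ is an average of unitary conjugations, so $\mathcal{G}(\rho_2)$ is again positive definite and hence $\lambda_{\max}[\mathcal{G}(\rho_2)]>0$. With that observation in place the proof is complete.
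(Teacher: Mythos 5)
Your proof is correct and follows the paper's own route exactly: the paper likewise derives this bound by noting that the separability of $\mathcal{G}(\rho_1)-P_m\mathcal{G}(\rho_2)$ (Corollary~\ref{NC}) implies positive semi-definiteness, and your min--max step merely makes explicit the eigenvalue comparison the paper leaves implicit. The added remark on the strict positivity of $\lambda_{\max}[\mathcal{G}(\rho_2)]$ is a sensible, correct detail.
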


Note that another upper bound can be found by replacing the requirement in theorem~\ref{delta} that
$\Delta_{\psi\to\phi}\left(\{p_k\}\right)$ is separable, with the weaker requirement that it is only 
positive semi-definite matrix. However, this will still lead to a non-trivial optimization problem. 

\subsubsection*{A Lower Bound}

The maximum probability to convert one state to another is always greater than zero if both states belong to the same orbit.
In the following corollary we find a positive lower bound for this probability.
\begin{corollary}
Notations as in theorem~\ref{delta}. The maximum probability to convert $\psi_1$ to $\psi_2$ is bound from below by:
$$
P_{\max}(\psi_1\to\psi_2)\geq\frac{1}{\lambda_{\max}\left[\rho_{1}^{-1}\rho_2\right]}
$$
where $\lambda_{\max} [\cdot]$ is the maximum eigenvalue of the matrix in the square parenthesis.
\end{corollary}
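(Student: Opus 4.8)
The plan is to produce an explicit feasible point for the optimization in Theorem~\ref{delta} whose value equals the claimed right-hand side. Since the identity lies in $Stab(\psi)$, I would take $U_1=I$, $p_1=p\equiv 1/\lambda_{\max}[\rho_1^{-1}\rho_2]$, and $p_k=0$ for $k\geq 2$, so that
$$
\Delta_{\psi_1\to\psi_2}(\{p_k\})=\rho_1-p\,\rho_2\;.
$$
By Theorem~\ref{delta} it then suffices to check three things: that $p\leq 1$, that $\rho_1-p\rho_2$ is positive semi-definite, and that it is separable. The bound follows at once because $P_{\max}(\psi_1\to\psi_2)\geq\sum_k p_k=p$.

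The engine of the argument is the observation that the ADMs are tensor products of positive definite matrices. Indeed, writing $\rho_i=g_i^{\dag}g_i/\|g_i\psi\|^2$ with $g_i=A_1^{(i)}\otimes\cdots\otimes A_n^{(i)}\in G$, we have $\rho_i=\bigl(A_1^{(i)\dag}A_1^{(i)}\otimes\cdots\otimes A_n^{(i)\dag}A_n^{(i)}\bigr)/\|g_i\psi\|^2$, so both $\rho_1^{1/2}$ and $M\equiv\rho_1^{-1/2}\rho_2\rho_1^{-1/2}$ are again tensor products of positive definite matrices. I would then apply Lemma~\ref{flem} to $M$: it says that $I-pM$ is separable (and automatically positive semi-definite) precisely when $p\leq 1/\lambda_{\max}(M)$. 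Since $M$ is similar to $\rho_1^{-1}\rho_2$ (conjugate it by $\rho_1^{1/2}$), the two share the same spectrum, so $\lambda_{\max}(M)=\lambda_{\max}(\rho_1^{-1}\rho_2)$ and our choice of $p$ is exactly the threshold. Finally, separability transfers through the congruence $\rho_1-p\rho_2=\rho_1^{1/2}(I-pM)\rho_1^{1/2}$: if $I-pM=\sum_k\sigma_1^{(k)}\otimes\cdots\otimes\sigma_n^{(k)}$ with each $\sigma_j^{(k)}\geq 0$ and $\rho_1^{1/2}=B_1\otimes\cdots\otimes B_n$ with $B_j>0$, then $\rho_1-p\rho_2=\sum_k\bigl(B_1\sigma_1^{(k)}B_1\bigr)\otimes\cdots\otimes\bigl(B_n\sigma_n^{(k)}B_n\bigr)$ is manifestly separable.

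It remains to dispose of the inequality $p\leq 1$, i.e. $\lambda_{\max}(\rho_1^{-1}\rho_2)\geq 1$; this I would obtain from the fact that $\rho_1^{-1}\rho_2$ is similar to $(\|g_1\psi\|^2/\|g_2\psi\|^2)\,h^{\dag}h$ with $h\equiv g_2g_1^{-1}$, combined with $\|g_2\psi\|=\|hg_1\psi\|\leq\|h\|_{\mathrm{op}}\|g_1\psi\|$. None of these steps is genuinely hard; the one place to be careful is that Lemma~\ref{flem} must be invoked for the tensor-product operator $M$ rather than for $\rho_2$ itself (which, though a tensor product, has the wrong largest eigenvalue for our purposes), which is why one first passes to $M$ by the similarity transformation. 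This lower bound is just the sharp form of the heuristic already noted in the remark above, that $\rho_1-p\rho_2$ is a positive semi-definite separable matrix for $p$ small enough.
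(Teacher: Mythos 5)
Your proposal is correct and follows the route the paper itself indicates (the remark preceding the corollary points to Lemma~\ref{flem} and the observation that $\rho_1-p\rho_2$ is separable for $p$ small enough): you exhibit the feasible point $U_1=I$, $p_1=p$ in Theorem~\ref{delta} and identify the separability threshold via Lemma~\ref{flem}. You also carefully supply the two details the paper leaves implicit --- the congruence $\rho_1-p\rho_2=\rho_1^{1/2}(I-pM)\rho_1^{1/2}$ needed to apply the lemma to the correct tensor-product operator $M=\rho_1^{-1/2}\rho_2\rho_1^{-1/2}$, and the verification that $p\leq 1$ (which also follows at once from $\langle\psi|\rho_i|\psi\rangle=1$ for $i=1,2$).
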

\begin{remark}
Since the ADM is defined up to a conjugation by an element in the stabilizer group,
we can replace $\rho_2$ in the equation above by $U_{k}^{\dag}\rho_1U_{k}^{\dag}$ for any $U_k\in Stab(\psi)$. Since 
$\lambda_{\max}\left[\rho_{1}^{-1}\rho_2\right]$ is in general \emph{not} invariant under such a conjugation, we can
improve the bound by taking an optimiztion over all $m$ elements of the stabilizer; that is,
$$
P_{\max}(\psi_1\to\psi_2)\geq\max_{k=1,...,m}\frac{1}{\lambda_{\max}\left[\rho_{1}^{-1}U_{k}^{\dag}\rho_2U_k\right]}\;.
$$
Note also that 
$$
\frac{1}{\lambda_{\max}\left[\rho_{1}^{-1}\rho_2\right]}\geq\frac{\lambda_{\max}\left[\rho_1\right]}{\lambda_{\max}\left[\rho_2\right]}\;.
$$
with equality if $[\rho_1,\rho_2]=0$. Hence, the maximum eigenvalue of the associate density matrix can only increase under SEP.
This implies, for example, that $-\log\lambda_{\max}\left[\rho_1\right]$ is an entanglement monotone. In the bipartite case it is called
entanglement of teleportation~\cite{Gou04}. 
\end{remark}

\subsection{Infinite Stabilizer Group}\label{sec:conti}

In this section we generalize some of the results obtained in the previous sections for the case where the stabilizer group is not finite. Although, for most states the stabilizer group is finite, for many important states, like the GHZ type states, W type states, or Cluster states, the stabilizer group has non-zero dimension. Furthermore, as we will in the examples of the next section, the larger the stabilizer group is, the more conversions between multipartite states can be achieved by local separable operations. In this context, the fact, for example, that the maximally bipartite entangled state can be converted by LOCC to any other bipartite state is a consequence of a very large stabilizer group (see Eq.~(\ref{bistab})).

When the stabilizer group of a state $\psi$ is infinite, it is impossible in general to find a state in the orbit of $\psi$ for which the stabilizer group is a unitary group. A simple example for that can be found in the stabilizer of the three qubits GHZ state discussed in the next section. Nevertheless, theorem~\ref{two} in section~\ref{sec:niel} can be easily generalized to infinite groups, but the conditions are more complicated. 
 
 \begin{theorem}\label{cont}
Let $|\psi_1\rangle=g_1|\psi\rangle/\|g_1\psi\|$ and $|\psi_2\rangle=g_2|\psi\rangle/\|g_2\psi\|$
be two entangled states in the same orbit $\mathcal{O}_{\psi}$.  
Then, $|\psi_1\rangle\rightarrow|\psi_2\rangle$ by SEP 
if and only if there exists $m\in\mathbb{N}$ and a set of probabilities $p_0,p_1,...,p_m$ (i.e. $p_k\geq 0$ and $\sum_{k=1}^{m}p_k=1$)  such that
\begin{equation}\label{basiccont}
\sum_{k=1}^{m}p_kS_{k}^{\dag}\rho_2S_{k}=\rho_1\;,
\end{equation}
for some $S_k\in Stab(\psi)$, where $\rho_i\equiv\rho_{\psi}(\psi_i)$ for $i=1,2$.
\end{theorem}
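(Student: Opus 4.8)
The plan is to follow the proof of Theorem~\ref{two} almost verbatim, isolating the two places where finiteness of $Stab(\psi)$ was used and showing that neither is essential. Recall that in Theorem~\ref{two} the hypothesis that $Stab(\psi)$ is a finite \emph{unitary} group entered in exactly two ways: (i) the Kraus decomposition of a SEP map was taken to have exactly $m$ terms, one for each stabilizer element, which let us index the probabilities $p_k$ by group elements $U_k$; and (ii) in the converse direction one needed the completeness relation $\sum_k M_k^\dagger M_k = I$ to be a finite sum. For a general (possibly infinite) $Stab(\psi)$ we no longer control the \emph{number} of Kraus operators a priori, so we simply let that number $m$ be whatever it is, and we allow the stabilizer elements $S_k$ appearing in \eqref{basiccont} to be arbitrary elements of $Stab(\psi)$ rather than an enumeration of the whole group. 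Note that this theorem does \emph{not} require the orbit representative to have unitary stabilizer, so $\psi$ here is arbitrary.

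First I would prove the forward direction. Suppose $\Lambda \in \mathrm{SEP}$ satisfies $\Lambda(|\psi_1\rangle\langle\psi_1|) = |\psi_2\rangle\langle\psi_2|$, with Kraus operators $M_k = A_1^{(k)}\otimes\cdots\otimes A_n^{(k)}$, $k=1,\dots,m$ for some finite $m$ (any SEP channel on a finite-dimensional space has a finite operator-sum representation). Since the output is pure, each $M_k|\psi_1\rangle$ is proportional to $|\psi_2\rangle$; absorbing phases into the $M_k$ (which leaves $\Lambda$ unchanged) we write $M_k|\psi_1\rangle = \sqrt{p_k}\,|\psi_2\rangle$ with $p_k \geq 0$, and the completeness relation $\sum_{k=1}^m M_k^\dagger M_k = I$ gives $\sum_k p_k = 1$. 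Writing $|\psi_i\rangle = g_i|\psi\rangle/\|g_i\psi\|$ and using that $\psi$ is $n$-way entangled (so that $g_i$ and the $M_k$ are invertible — this is where we use that $\psi_1,\psi_2$ are genuinely entangled, exactly as in Theorem~\ref{two}), we deduce
$$
S_k \equiv \frac{\|g_2\psi\|}{\|g_1\psi\|\sqrt{p_k}}\, g_2^{-1}M_k g_1 \in Stab(\psi),
$$
hence $M_k = \frac{\|g_1\psi\|}{\|g_2\psi\|}\sqrt{p_k}\, g_2 S_k g_1^{-1}$, and substituting this into $\sum_k M_k^\dagger M_k = I$ and conjugating by $g_1$ yields $\sum_{k=1}^m p_k S_k^\dagger \rho_2 S_k = \rho_1$ with $\rho_i = g_i^\dagger g_i/\|g_i\psi\|^2$. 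This is \eqref{basiccont}. (The number of terms on the $p_0$ in the statement is a harmless typo — one really wants $p_1,\dots,p_m$.)

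For the converse, given $m$, probabilities $p_1,\dots,p_m$ summing to $1$, and $S_k \in Stab(\psi)$ satisfying \eqref{basiccont}, I would run the computation backwards: \emph{define} $M_k \equiv \frac{\|g_1\psi\|}{\|g_2\psi\|}\sqrt{p_k}\, g_2 S_k g_1^{-1}$, which is manifestly a tensor product of $n$ local matrices, check $M_k|\psi_1\rangle = \sqrt{p_k}\,|\psi_2\rangle$ (since $S_k|\psi\rangle = |\psi\rangle$), and verify that \eqref{basiccont} is precisely the statement $\sum_k M_k^\dagger M_k = I$. Thus $\Lambda(\cdot) = \sum_k M_k(\cdot)M_k^\dagger$ is a trace-preserving separable channel taking $|\psi_1\rangle\langle\psi_1|$ to $|\psi_2\rangle\langle\psi_2|$. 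No obstacle of substance arises here; the only subtlety, and the main thing to get right in the exposition, is to make clear that $m$ is a free parameter in the statement — it is quantified existentially — so that the finiteness of the operator-sum representation of an arbitrary SEP channel on $\mathcal{H}_n$ is all that is needed, and the possible infinitude of $Stab(\psi)$ only means that the $S_k$ range over an infinite set, not that the \emph{sum} is infinite.
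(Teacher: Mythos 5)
Your proposal is correct and is essentially the paper's own argument: the paper proves Theorem~\ref{cont} by simply asserting that it ``follows the exact same lines as in the proof of theorem~\ref{two},'' and you have carried out precisely that adaptation, correctly identifying that $m$ is existentially quantified as the (finite) number of Kraus operators and that the $S_k$ need only be finitely many elements of the possibly infinite $Stab(\psi)$.
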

 
The proof of the theorem above follows the exact same lines as in the proof of theorem~\ref{two}.
Similarly, theorem~\ref{basic2} can also be generalized to non-compact groups in this way.
However, since the set of $m$ elements, $\{S_k\}$, is only a finite subset of the infinite group $Stab(\psi)$,
it can be very difficult to find such a finite subset of $Stab(\psi)$ that satisfies Eq.(\ref{basiccont}). 
Moreover, note that according to proposition~\ref{noncom} the $Stab(\psi)$ is also non-compact
which makes it impossible to generalize the stabilizer twirling operation (see Eq.(\ref{gtwirling})) to this case. 
Nevertheless, we will see in the 3 qubits example of the next section that the twirling operation in Eq.(\ref{gtwirling2}) 
can be a very useful tool in determining local conversions of multipartite states with non-compact stabilizers.
The idea to define the group $T$ as the intersection of $Stab(\psi)$ with the group of local unitary matrices is
useful in general (not only for the 3 qubits example below), but we leave the general analysis for future work.

\section{Examples}\label{sec:ex}

In this section we give examples demonstrating the applications of the main results to specific cases.
We start by showing how our techniques produce all the known results about the necessary and sufficient conditions for 
local interconversion among pure bipartite states. 
Then, we focus on Hilbert spaces of three and four qubits. 
We start with the four qubits case because the stabilizer
of generic states in four qubits is finite, while the stabilizer of the three qubits GHZ state turns out to be a two dimensional non-compact group, and therefore the analysis is a bit more complicated. 

\subsection{The Bipartite Case}\label{bip}
In this subsection we consider the bipartite example with a Hilbert space 
$\mathcal{H}_2=\mathbb{C}^{d}\otimes\mathbb{C}^{d}$. Up to local unitaries, 
the only critical state is the maximally entangled state
$$
|\psi\rangle=\sum_{i=0}^{d-1}|i\rangle |i\rangle\;,
$$
where $\{|i\rangle\}$ is an orthonormal basis in $\mathbb{C}^d$. Clearly, in this case the orbit $\mathcal{O}_{\psi}$ is dense and
consists of all states in $\mathbb{C}^{d}\otimes\mathbb{C}^{d}$ with Schmidt number $d$.
In order to apply the theorems from the previous sections, we need to calculate first the stabilizer group. 

\begin{lemma}
The stabilizer group of $\psi$ is given by 
\begin{equation*}
Stab(\psi)=\left\{S^{-1}\otimes S^{T} \Big| \;S\in GL(\mathbb{C}^{d})\right\}\;,
\end{equation*}
where $S^{T}$ is the transpose matrix of $S$. 
\end{lemma}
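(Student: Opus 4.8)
The statement to prove is that the stabilizer of $|\psi\rangle = \sum_{i=0}^{d-1}|i\rangle|i\rangle$ inside $\tilde G = \mathrm{GL}(d,\mathbb{C})\otimes\mathrm{GL}(d,\mathbb{C})$ is exactly $\{S^{-1}\otimes S^T : S\in \mathrm{GL}(d,\mathbb{C})\}$. The natural tool is the standard vectorization (Choi--Jamio\l kowski) isomorphism $\mathrm{vec}:\mathbb{C}^{d\times d}\to\mathbb{C}^d\otimes\mathbb{C}^d$, under which $|\psi\rangle = \mathrm{vec}(I)$, and which satisfies the ``transpose trick'' $ (A\otimes B)\,\mathrm{vec}(X) = \mathrm{vec}(A X B^T)$ for all matrices $A,B,X$. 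I would state this identity first (it is a one-line computation on basis elements $|i\rangle|j\rangle \leftrightarrow e_i e_j^T$), since everything else is an immediate consequence.

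Given the transpose trick, the proof splits into the two inclusions. For ``$\supseteq$'': for any $S\in\mathrm{GL}(d,\mathbb{C})$ take $A = S^{-1}$, $B = S^T$, so $B^T = S$; then $(S^{-1}\otimes S^T)|\psi\rangle = \mathrm{vec}(S^{-1} I S) = \mathrm{vec}(I) = |\psi\rangle$, hence $S^{-1}\otimes S^T\in\mathrm{Stab}(\psi)$. For ``$\subseteq$'': suppose $A\otimes B\in\tilde G$ fixes $|\psi\rangle$, i.e. $\mathrm{vec}(A B^T) = \mathrm{vec}(I)$, which forces $A B^T = I$, i.e. $B^T = A^{-1}$, equivalently $A = (B^T)^{-1} = (B^{-1})^T$. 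Setting $S := B^T$ gives $A = S^{-1}$ and $B = S^T$, so $A\otimes B = S^{-1}\otimes S^T$ with $S\in\mathrm{GL}(d,\mathbb{C})$. One small point worth a sentence: a general element of $\tilde G$ is by definition of the form $A\otimes B$ (the tensor product of two $\mathrm{GL}$ factors), so there is no loss of generality in writing stabilizer elements this way; and the product representation $A\otimes B$ is unique up to the scalar freedom $A\otimes B = (cA)\otimes(c^{-1}B)$, which is exactly absorbed by the corresponding rescaling $S\mapsto cS$ in $S^{-1}\otimes S^T$, so the parametrization is consistent.

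I do not expect a genuine obstacle here; the only thing to be slightly careful about is the direction of the transpose trick (whether conjugation appears as $AXB^T$ or $AX^TB$, and on which tensor factor), so I would fix conventions explicitly at the outset and check them against $|\psi\rangle$ being $\mathrm{vec}(I)$. The cleanest write-up is: (1) record $\mathrm{vec}(AXB^T) = (A\otimes B)\mathrm{vec}(X)$ and $|\psi\rangle = \mathrm{vec}(I)$; (2) deduce $(A\otimes B)|\psi\rangle = \mathrm{vec}(AB^T)$; (3) observe $\mathrm{vec}$ is injective so $(A\otimes B)|\psi\rangle = |\psi\rangle \iff AB^T = I$; (4) solve $AB^T=I$ as $\{A = S^{-1}, B = S^T\}$ ranging over $S\in\mathrm{GL}(d,\mathbb{C})$, and conclude. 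This also transparently shows the stabilizer is isomorphic as a group to $\mathrm{GL}(d,\mathbb{C})$ via $S\mapsto S^{-1}\otimes S^T$, which is the fact used later when discussing why the bipartite stabilizer is ``large'' and recovers Nielsen's theorem.
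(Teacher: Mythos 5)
Your proof is correct and is essentially the same as the paper's: both reduce the stabilizer condition to the matrix identity $AB^{T}=I$ (the paper writes it as $BA^{T}=I$, which is the transpose of the same equation). The only cosmetic difference is that the paper obtains this by multiplying a stabilizer element $A\otimes B$ by the already-verified stabilizer element $A^{-1}\otimes A^{T}$, whereas you invoke the vectorization identity $(A\otimes B)\,\mathrm{vec}(X)=\mathrm{vec}(AXB^{T})$ explicitly; the underlying computation is identical.
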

\begin{proof}
It is straightforward to check that $S^{-1}\otimes S^{T}|\psi\rangle=|\psi\rangle$. 
Now, if $A\otimes B\in Stab(\psi)$ then
$$
|\psi\rangle=A\otimes B|\psi\rangle=\left(A\otimes B\right)\left(A^{-1}\otimes A^{T}\right)|\psi\rangle
=I\otimes BA^{T}|\psi\rangle\;.
$$
Thus, $BA^{T}=I$ which implies that $A\otimes B$ is of the form given in the lemma.
\end{proof}

Clearly, the lemma above implies that $Stab(\psi)$ is not compact. 
However, in~\cite{Lo97} it was shown that when considering only pure bipartite states, 
any LOCC transformation can be achieved by the following protocol:
Alice performs a quantum measurement and send the result to Bob. Based on this result, Bob performs a unitary matrix.
Hence, all LOCC protocols can be achieved by the following completely positive map. For all 
$|\psi\rangle\in\mathbb{C}^{d}\otimes\mathbb{C}^{d}$, 
$$
\Lambda\left(|\phi\rangle\langle\phi|\right)=\sum_{k}M_k|\phi\rangle\langle\phi|M_{k}^{\dag}\;\;,\;M_k=A_k\otimes B_k\;,
$$
where $B_k$ is a unitary matrix. 
Thus, without loss of generality, for the bipartite case we can replace $Stab(\psi)$ in theorem~\ref{cont}
with the subgroup $T\subset Stab(\psi)$ given by
\begin{equation*}
T\equiv\left\{V^{\dag}\otimes V^{T} \Big| \;V\in U(d)\right\}\;.
\end{equation*}
Note that $T$ is a compact unitary group.

Now, let $|\psi_1\rangle$ and $|\psi_2\rangle$ be as in theorem~\ref{two} applied to the bipartite case, and denote by $\rho_1$ and $\rho_2$ their ADMs, respectively.
Towards the end of Sec.~\ref{sec:asso} we have seen that the ADMs $\rho_1$ and $\rho_2$ 
can be written as
$$
\rho_i=\rho^{(i)}_{r}\otimes I\;\;,\;\text{for }i=1,2\;\;,
$$
where $\rho^{(1)}_{r}$ and $\rho^{(2)}_{r}$ are the the reduced density matrices of $|\psi_1\rangle$ and $|\psi_2\rangle$, respectively. Thus, theorem~\ref{cont} implies that the map $|\psi_1\rangle\to|\psi_2\rangle$ can be implemented by LOCC
if and only if their exists probabilities $\{p_k\}$ and elements $S_k\in T$ such that
$$
\sum_{k=1}^{m}p_kS_{k}^{\dag}\rho_2S_{k}=\rho_1\;.
$$
But this condition is equivalent to
$$
\sum_{k=1}^{m}p_kV_{k}^{\dag}\rho^{(2)}_{r}V_{k}=\rho^{(1)}_{r}
$$
where $\{V_k\}$ is a set of $m$ unitary matrices. This is the exact same condition obtained in~\cite{Nie99}
for pure bipartite LOCC transformations and is equivalent to the majorization condition. In the same way, we can show that when applying theorem~\ref{basic2} to the bipartite case, we can obtain the same results found in~\cite{Ple99} for non-deterministic LOCC transformations.

\subsection{Four qubits}

In this subsection we consider the Hilbert space of four qubits 
$\mathcal{H}_{4}=\mathbb{C}^{2}\otimes\mathbb{C}^{2}\otimes\mathbb{C}^{2}\otimes\mathbb{C}^{2}$.
The SLOCC group is given by $G=SL(2,\mathbb{C})\otimes SL(2,\mathbb{C})\otimes SL(2,\mathbb{C})\otimes SL(2,\mathbb{C})$.
The analysis of four qubits SLOCC classes can be found in~\cite{Ver02,Ver03,GW10,W} (and references therein), and we refer the reader to these references for more details about some of the statements we make here.

In four qubits the critical set can be characterize elegantly by the following 4 dimensional subspace: 
$$
Crit(\mathcal{H}_4)=K\mathcal{A},
$$ 
where $K\equiv SU(2)\otimes SU(2)\otimes SU(2)\otimes SU(2)$ is the set of special local unitary matrices,
$$
\mathcal{A}\equiv\Big\{z_0u_0+z_1u_1+z_2u_2+z_3u_3\Big|\;z_0,z_1,z_2,z_3\in\mathbb{C}\Big\}\;,
$$
with 
\begin{align}
& u_0\equiv|\phi^{+}\rangle|\phi^{+}\rangle\;\;,\;\;u_1\equiv|\phi^{-}\rangle|\phi^{-}\rangle\nonumber\\
& u_2\equiv|\psi^{+}\rangle|\psi^{+}\rangle\;,\;\;
u_3\equiv|\psi^{-}\rangle|\psi^{-}\rangle\nonumber
\end{align}
and $|\phi^{\pm}\rangle=(|00\rangle\pm|11\rangle)/\sqrt{2}$ and $|\psi^{\pm}\rangle=(|01\rangle\pm|10\rangle)/\sqrt{2}$.
In~\cite{Ver02} the class $\mathcal{A}$ was denoted by $G_{abcd}$.

Note that not all the states in $\mathcal{A}$ are generic. For example, the state $u_0$ is not generic.
It is possible to show~\cite{W} that a four-qubit state $\phi$ is generic if and only if $(a)$ the dimension of the (non-normalized) orbit $G\phi$ is maximal (i.e. $12$), and $(b)$ the orbit $G\phi$ is closed in $\mathcal{H}_4$.
 
The set of all generic states in $Crit(\mathcal{H}_4)$ is given by $K\mathcal{D}$, where
$$
\mathcal{D}\equiv\Big\{z_0u_0+z_1u_1+z_2u_2+z_3u_3\Big|z_{i}^{2}\neq z_{j}^{2}\;\;\text{for}\;\;i\neq j\Big\}
$$
In the following theorem we prove that for all these states the intersection of stabilizer group with $G$ is the Klein group, consisting of four elements.
\begin{proposition}
Let $\psi\in\mathcal{D}$ and denote by $St(\psi)$ the intersection of $Stab(\psi)$ with $G$ (see corollary~\ref{stst}). Then, 
$$
St(\psi)=\{I,\bar{X},\bar{Y},\bar{Z}\}\;,
$$
where $X$, $Y$ and $Z$ are the three $2\times 2$ Pauli matrices and $\bar{C}\equiv C\otimes C\otimes C\otimes C$
for $C=X,Y,Z$.
\end{proposition}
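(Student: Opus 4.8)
The plan is to show the two inclusions $St(\psi)\supseteq\{I,\bar X,\bar Y,\bar Z\}$ and $St(\psi)\subseteq\{I,\bar X,\bar Y,\bar Z\}$ separately, exploiting the explicit Bell-pair description of $\mathcal{A}$. For the easy inclusion, I would compute directly how $\bar X=X^{\otimes 4}$, $\bar Y$, $\bar Z$ act on each generator $u_0,u_1,u_2,u_3$. Since each $u_i$ is a product of two copies of one of the four Bell states, and each Bell state is an eigenvector of $X\otimes X$, $Y\otimes Y$, $Z\otimes Z$ with eigenvalue $\pm1$, one checks that $\bar C$ ($C=X,Y,Z$) sends each $u_i$ to $\pm u_i$, and the squared sign is $1$, so on a suitable $u_i$-basis $\bar C$ acts diagonally with entries $\pm1$; one then verifies that for $\psi=\sum z_iu_i$ with the $z_i$ in the ``diagonal'' position we actually get $\bar C\psi=\psi$. (Because $\{u_i\}$ is fixed pointwise up to these signs, and $\bar X\bar Y\bar Z=(XYZ)^{\otimes4}=(iI)^{\otimes4}=-I$ acts as $\pm1$ too, one has to be a little careful about which representative of $\mathcal{O}_\psi$ one uses, but the statement is about a fixed $\psi\in\mathcal{D}$ and the signs work out.) That $\bar X,\bar Y,\bar Z$ together with $I$ form the Klein four-group is immediate from $\bar X\bar Y=\bar Z$ up to a scalar that is trivial on $\tilde G/(\text{phases})$, i.e.\ inside $G$ using $\det=1$.

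For the hard inclusion I would use the material already developed in the excerpt. By Proposition~\ref{nolan1} (or directly, since $\psi\in\mathcal{D}\subset Crit(\mathcal{H}_4)$ and the stabilizer of a generic four-qubit state is finite by the remark after Proposition~\ref{uni}), $Stab(\psi)$ is a \emph{finite} group, hence $St(\psi)=Stab(\psi)\cap G$ is finite. The strategy is then: any $g\in St(\psi)$ fixes $\psi$, and since $G$ acts on $\mathcal{A}$ in a controlled way (this is the content of the four-qubit normal-form analysis of~\cite{Ver02,Ver03,W}), the element $g$ must preserve the orbit structure near $\psi$. Concretely, the stabilizer in $G$ of a generic point of $\mathcal{A}$ must normalize the subspace $\mathcal{A}$ — or at least send $\psi$ to a point of $K\mathcal{A}$ that equals $\psi$ — and this forces $g$ to lie in a very small group. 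The cleanest route is to invoke the known fact that the centralizer/normalizer of the torus-like structure acting here is exactly the ``Weyl-group-times-Klein-group'' object described in~\cite{Ver02,W}, and that among those elements the ones fixing a \emph{generic} $\psi$ (i.e.\ one with $z_i^2$ pairwise distinct) are precisely $\{I,\bar X,\bar Y,\bar Z\}$ — the permutation part of the symmetry is killed by the genericity condition $z_i^2\ne z_j^2$.

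The main obstacle will be making the last step rigorous without reproving the entire four-qubit classification: I need a clean argument that an element $g\in G$ with $g\psi=\psi$ for generic $\psi\in\mathcal{A}$ must act on the four-dimensional space $\mathcal{A}$ (on which $G$ restricts, up to $K$, to a finite group) by a linear map permuting/rescaling the coordinates $z_i$, and that fixing a point with all $z_i^2$ distinct forces the underlying permutation to be trivial, leaving only sign changes — which are realized exactly by $\bar X,\bar Y,\bar Z$. I expect to handle this by writing $g\psi$ in the $\{u_i\}$-coordinates, using that $g$ preserves each SL-invariant polynomial (the four generators of the four-qubit invariant ring, which in these coordinates are symmetric functions of $z_0^2,z_1^2,z_2^2,z_3^2$), concluding that $g$ permutes the multiset $\{z_i^2\}$, and then invoking $z_i^2\ne z_j^2$ to kill the permutation; the residual sign freedom on the $z_i$ (with product-of-signs constrained) is then matched against the explicit action of $\bar X,\bar Y,\bar Z$ to finish. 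A short dimension/finiteness check (the stabilizer has order dividing $|$Klein$|\cdot|S_4|$ a priori, and genericity cuts it to $4$) provides the sanity bound.
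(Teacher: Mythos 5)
Your overall strategy is the same as the paper's: show that an element of $G$ fixing a generic $\psi\in\mathcal{A}$ must normalize $\mathcal{A}$, that the normalizer acts on $\mathcal{A}$ by signed permutations of the $u_i$ (the Weyl group of $D_4$), and that the genericity condition $z_i^2\neq z_j^2$ kills everything but the trivial action on $\mathcal{A}$. You correctly flag the first of these steps as the main obstacle; the paper resolves it by embedding $\mathcal{H}_4$ into $Lie(SO(8,\mathbb{C}))$ so that $\mathcal{A}$ becomes a Cartan subalgebra and a generic $\psi$ a regular element, whence its stabilizer centralizes the whole Cartan. Your proposed substitute --- deducing from the $SL$-invariant polynomials (symmetric functions of the $z_i^2$) that ``$g$ permutes the multiset $\{z_i^2\}$'' --- does not do this job: given $g\psi=\psi$ the invariance of polynomials is vacuous, and in any case knowing the values of invariants cannot show that $g$ maps $\mathcal{A}$ to itself or acts linearly on it. That step genuinely needs the Cartan/regular-element argument (or an equivalent computation), not the invariant ring.

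There is a second, more concrete gap at the end. Once genericity forces $g$ to act as the \emph{identity} on $\mathcal{A}$ (the permutation dies because the $z_i^2$ are distinct, and the signs die because at most one $z_i$ vanishes and the product of signs is $1$), what remains to be computed is the \emph{pointwise} stabilizer of $\mathcal{A}$ in $G$ --- and that is where $\{I,\bar{X},\bar{Y},\bar{Z}\}$ actually comes from. Your sketch instead tries to ``match the residual sign freedom on the $z_i$ against the explicit action of $\bar{X},\bar{Y},\bar{Z}$,'' but as your own easy-direction computation shows, each $\bar{C}$ fixes every $u_i$ individually; these elements realize no sign changes at all. They are the kernel of the map from the normalizer of $\mathcal{A}$ to the Weyl group, not representatives of sign flips. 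The paper computes this kernel by identifying $G$ with $SO(4,\mathbb{C})\times SO(4,\mathbb{C})$ acting on $M_4(\mathbb{C})$ by $(g,h)\cdot X=gXh^{-1}$, with $\mathcal{A}$ the diagonal matrices: $gIh^{-1}=I$ gives $g=h$, commuting with all diagonal matrices makes $g$ diagonal, orthogonality gives entries $\pm1$ with $\det g=1$, and quotienting by the kernel $\pm(I,I)$ of the action leaves exactly four elements, which unwind to $I,\bar{X},\bar{Y},\bar{Z}$. Without this (or an equivalent) computation your argument establishes at best that $g$ acts trivially on the four-dimensional subspace $\mathcal{A}$, which is far from pinning $g$ down inside the $12$-dimensional group $G$; so the inclusion $St(\psi)\subseteq\{I,\bar{X},\bar{Y},\bar{Z}\}$ is not yet proved. (Minor point: your a priori order bound should use $|W(D_4)|=192$, not $|S_4|=24$.)
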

\begin{proof}
If $\phi\in A$ and $\phi=\sum_{j=0}^{4}z_{j}u_{j}$ and $z_{i}^{2}\neq
z_{j}^{2}$ for $i\neq j$ then we say that $\phi$ is generic. If $g\in G$ and
$gA=A$ then if $\beta=\sum x_{i}u_{i}$ then $g\beta=\sum\varepsilon
_{i}x_{\sigma i}u_{i}$ with $\sigma$ a permutation, $\varepsilon_{i}=\pm1$ and
$\varepsilon_{0}\varepsilon_{1}\varepsilon_{2}\varepsilon_{3}=1$. We will show
that the stabilizer in $G$ of a generic element in $A$ has the property that it
leaves $A$ invariant. We therefore see that if $g\phi=\phi$ with $\phi$
generic then $g_{|A}=I$. We now give more details (see~\cite{W}).

To complete the discussion we reformulate the action of $G$ on $\mathcal{H}
_{4}$. We first consider $SL(2,\mathbb{C})\otimes SL(2,\mathbb{C})$ acting on $\mathcal{H}_{2}$. There is a non-degenerate symmetric form,
$(...,...)$ on $\mathcal{H}_{2}$ that is invariant under $SL(2,\mathbb{C})\otimes SL(2,\mathbb{C})$ (take the tensor product of the symplectic form on $\mathbb{C}^{2}$). Relative to an orthonormal basis of $\mathcal{H}_{2}$ the image of
$SL(2,\mathbb{C})\otimes SL(2,\mathbb{C})$ is $SO(4,\mathbb{C})$. Thus the action of $G$ on $\mathcal{H}_{4}=\mathcal{H}_{2}\otimes
\mathcal{H}_{2}$ is equivalent to the action of $SO(4,\mathbb{C})\otimes SO(4,\mathbb{C})$ on $\mathbb{C}^{4}\otimes\mathbb{C}^{4}$. We can use the form $(...,...)$ to identify $\mathbb{C}^{4}\otimes\mathbb{C}^{4}$ with $M_{4}(\mathbb{C})$ with \ $SO(4,\mathbb{C})\times SO(4,\mathbb{C})$ acting by $(g,h)X=gXh^{T}=gXh^{-1}$. \ One can check that with these
identifications we can take $A$ to be the diagonal elements. If we consider the
group $SO(8,\mathbb{C})$ then we can consider $G$ to be the subgroup
\[
\left\{  \left[
\begin{array}
[c]{cc}
g & 0\\
0 & h
\end{array}
\right]  |g,h\in SO(4,\mathbb{C})\right\}  .
\]
The space $\mathcal{H}_{4}$ can be imbedded in $Lie(SO(4,\mathbb{C}))$ as
\[
\left\{  \left[
\begin{array}
[c]{cc}
0 & X\\
-X^{T} & 0
\end{array}
\right]  |X\in M_{4}(\mathbb{C})\right\}  .
\]
The space $A$ now corresponds to a Cartan subalgebra of 
$Lie(SO(8,\mathbb{C}))$ and the assertion above is implied by the fact that the Weyl group of
$SO(8,\mathbb{C})$ is given as above. This implies that the stabilizer of a regular element in
$A$ fixes every element in $A$. In particular if $(g,h)$ is in the stabilizer
then
\[
gIh^{T}=I
\]
so $g=h$. Also if for all $X$ we have $gXg^{-1}=X$ then $g$ is diagonal. Since
$g$ is orthogonal $gg^{T}=I$ so $g^{2}=I$. Hence $g$ is diagonal with entries
$\pm1$. Now unwind the identifications.
\end{proof}

In 4-qubits there are 4 generating $SL$ invariant polynomials~\cite{GW10}. One of these generating polynomials
is of order 2 and is given by:
$$
f(\psi)\equiv\langle\psi^{*}|
\sigma_y\otimes\sigma_y\otimes\sigma_y\otimes\sigma_y|\psi\rangle\;.
$$ 
The absolute value of this SL-invariant polynomial is called the 4-tangle. If $\psi\in Crit(\mathcal{H}_4)$
then
$$
f(\psi)=z_{0}^{2}+z_{1}^{2}+z_{2}^{2}+z_{3}^{2}\;.
$$
Now, from proposition~\ref{gk} (see also the remark after the proof of proposition~\ref{gk}) it follows that if 
$f(\psi)\neq 0$ then
$$
St(\psi)=Stab(\psi)\;.
$$
That is, for all the generic states in $\mathcal{D}$ with $f(\psi)\neq 0$, the stabilizer group is the 4 element
Klein group $Stab(\psi)=\{I,\bar{X},\bar{Y},\bar{Z}\}$. Hence, the results in~\ref{sec:niel} implies that a state 
$\psi\in\mathcal{D}$ with $f(\psi)\neq 0$, can be converted to $|\phi\rangle\equiv g|\psi\rangle/\|g|\psi\rangle$ by SEP,
if and only if
$$
\mathcal{G}(\sigma)=I\;\;,\;\sigma\equiv\frac{g^{\dag}g}{\|g|\psi\rangle\|^2}\;,
$$
where $\mathcal{G}$ is the $Stab(\psi)$-twirling operation defined in Def.~\ref{gtwirling}. Moreover, 
it is possible to characterize completely \emph{all} the ADMs $\sigma$ that satisfy the equation above.
This can be done by applying the factorization in Eq.(\ref{factorization}) to the Klein group. 

For states $\psi\in\mathcal{D}$ with $f(\psi)=0$ one can use corollary~\ref{stst} to find out the stabilizer group.
For example, the state (see~\cite{GW10} for important properties of this state)
$$
|L\rangle\equiv\frac{1}{\sqrt{3}}\left(u_0+\omega u_1+\bar{\omega}u_2\right)\;,
$$
is in $D$ and satisfies $f(|L\rangle)=0$. In fact it is possible to show that it vanish on all $SL$-invariant polynomials
of degree less than 6, and that there exists an $SL$-invariant polynomial of degree 6 on which it does not vanish~\cite{GW10}.
Hence, from corollary~\ref{stst} it follows that $Stab(|L\rangle)/St(|L\rangle)$ is a cyclic group with three elements.
The generator of the group is given by $g=\omega p_3$, where $p_3\in K$ is the local unitary permutation that takes $u_0\to u_2$,
$u_1\to u_0$, $u_2\to u_1$ and $u_3\to u_3$. Hence, now that we found $Stab(|L\rangle)$, we can apply all the theorems in the previous section to determine the states to which $|L\rangle$ can be converted by SEP. 
 
\subsection{Three Qubits}

In this subsection we apply the results from the previous section to the case of three qubits.
In particular, we will consider local transformations between two states in the GHZ SLOCC class~\cite{Vid00}.
Since the stabilizer group of the GHZ state is non-compact, we will only be able to solve the problem partially.

We denote the GHZ state by $\phi=\frac{1}{\sqrt{2}}(\left\vert 000\right\rangle
+\left\vert 111\right\rangle )$. A straightforward calculation shows that the stabilizer of $\phi$ in $G$ is the
group
\[
St(\phi)=\left\{  t_{1}\otimes t_{2}\otimes t_{3}\;\Big|\;t_{j}=\left[
\begin{array}
[c]{cc}
s_{j} & 0\\
0 & s_{j}^{-1}
\end{array}
\right]  ,s_{1}s_{2}s_{3}=1\right\}  .
\]
Note that if $t=t_{1}\otimes t_{2}\otimes t_{3}$ as above then we have
\begin{align*}
t\left\vert 000\right\rangle  &  =\left\vert 000\right\rangle ,t\left\vert
111\right\rangle =\left\vert 111\right\rangle \\
t\left\vert 001\right\rangle  &  =s_{3}^{-2}\left\vert 001\right\rangle
,t\left\vert 010\right\rangle =s_{2}^{-2}\left\vert 010\right\rangle
,t\left\vert 100\right\rangle =s_{1}^{-2}\left\vert 100\right\rangle ,\\
t\left\vert 011\right\rangle  &  =s_{1}^{2}\left\vert 011\right\rangle
,t\left\vert 101\right\rangle =s_{2}^{2}\left\vert 101\right\rangle
,t\left\vert 110\right\rangle =s_{3}^{2}\left\vert 110\right\rangle .
\end{align*}

Now, the 3-tangle of the GHZ state is non-zero. Since the tangle is the absolute value of an homogeneous $SL$-invariant polynomial of degree 4, from corollary~\ref{stst} it follows that $Stab(\phi)/St(\phi)$ is a cyclic group with two elements.
The two elements are
$$
Stab(\phi)/St(\phi)=\left\{I\;,\;X\otimes X\otimes X\right\}\;,
$$
where $X$ is the x-Pauli matrix. $Stab(\phi)$ is therefore the union of $St(\phi)$ and $hSt(\phi)$, where 
$h\equiv X\otimes X\otimes X$.

Set $T$ equal to the intersection of $Stab(\phi)$ with the group of local unitary
transformations, and $T_0$ the intersection of $St(\phi)$ with the group of local unitary
transformations . The $T$-twirling operation is therefore
$$
\mathcal{T}(\sigma)=\frac{1}{2}\mathcal{T}_0(\sigma)+\frac{1}{2}h\mathcal{T}_0(\sigma)h
$$
where
$$
\mathcal{T}_0(\sigma)\equiv\int dt\;t^{\dag}\sigma t
$$
with $dt=\frac{1}{4\pi^{2}}d\theta_{1}d\theta_{2}$ for $s_{j}=e^{i\theta_{j}
},\theta_{1}+\theta_{2}+\theta_{3}=0$. 

Now, from theorem~\ref{cont} it follows that if $\psi_1$ and $\psi_2$ are two states in the GHZ class, then it is possible to convert $\psi_1$ to $\psi_2$ by SEP if and only if
there exists probabilities $\{p_k\}$ and elements $g_k\in Stab(\phi)$ such that
\[
\sum_{k} p_{k}g_{k}^{\dag}\rho_2 g_{k}=\rho_1\;,
\]
where $\rho_1$ and $\rho_2$ are the ADMs of $\psi_1$ and $\psi_2$, respectively.
In order to get a simpler condition we apply the $T_0$-twirling operation on both sides of this equation.
This gives, 
\[
\sum_k p_{k}\mathcal{T}_0(g_{k}^{\dag}\rho_2 g_{k})=\mathcal{T}_0\left(\rho_1\right).
\]
Since $St(\phi)$ is commutative and $hSt(\phi)=St(\phi)h$ we get
\[
\sum_k p_{k}g_{k}^{\dag}\mathcal{T}_0(\rho_2)g_{k}=\mathcal{T}_0\left(\rho_1\right).
\]
Now, without loss of generality assume $g_k\in St(\phi)$ for $k\leq a$, and $g_k=hf_k$ with $f_k\in St(\phi)$
for $k>a$. Then, 
$$
\sum_k p_{k}g_{k}^{\dag}\mathcal{T}_0(\rho_2)g_{k}
=\sum_{k\leq a}p_{k}g_{k}^{\dag}g_{k}\mathcal{T}_0(\rho_2)+\sum_{k>a} p_{k}f_{k}^{\dag}f_k\mathcal{T}_0(h\rho_2h)\;,
$$
where we have used the fact that $\mathcal{T}_0(\rho_2)$ commutes with every element of $T_0$,  and since $St(\phi)$ is the 
complexification of $T_0$ we see that $\mathcal{T}_0(\rho_2)$ commutes with every element of $St(\psi)$. 
This
implies that
\[
\sum_{k\leq a}p_{k}g_{k}^{\dag}g_{k}\mathcal{T}_0(\rho_2)+\sum_{k>a} p_{k}f_{k}^{\dag}f_k\mathcal{T}_0(h\rho_2h)
=\mathcal{T}_0\left(\rho_1\right).
\]
In the case where $\mathcal{T}_0\left(\rho_1\right)=\rho_1$, the above condition is both necessary and sufficient.
In particular, if we take $\psi_1$ to be the GHZ state $\phi$, then $\rho_1=I$ and we get
$$
\sum_{k\leq a}p_{k}g_{k}^{\dag}g_{k}\mathcal{T}_0(\rho_2)+\sum_{k>a} p_{k}f_{k}^{\dag}f_k\mathcal{T}_0(h\rho_2h)=I\;.
$$
The above condition provides necessary and sufficient condition for local separable conversion of the GHZ state $\psi_1=\phi$ to some other state $\psi_2$. In particular, the condition implies that the GHZ state can be converted by SEP to any state $\psi_2$ with 
$\mathcal{T}(\rho_2)=I$ since we can take in this case $g_k$ and $f_k$ to be unitaries.
However, it is left open if there are other states that can be obtained by SEP from the GHZ state.

\section{Conclusions}\label{sec:con}

Quantum information can be viewed as a theory of interconversions among different resources.
Multipartite entangled states, like cluster states, are resources for quantum information and quantum computing.
In this paper we considered transformations that can be implemented on pure multipartite entangled states 
by local separable operations. Quite remarkably, we were able to  generalize both Nielsen majorization theorem~\cite{Nie99} 
and Jonathan and Plenio theorem~\cite{Ple99} for transformations involving multipartite pure states. This generalization was 
expressed in terms of the the stabilizer group and the associate density matrices (ADMs) of the states involved.

For the bipartite case with a Hilbert space $\mathbb{C}^{d}\otimes\mathbb{C}^{d}$, 
both the majorization theorem in~\cite{Nie99} and the conditions in~\cite{Ple99} can be expressed in terms
of Vidal's entanglement monotones~\cite{Vid99} defined by
$$
E_k(\psi)=\sum_{j=k}^{d}\lambda_k\;\;k=1,2,...,d
$$
where $\lambda_k$ are the ordered (from the biggest to the smallest) eigenvalues of the reduced density matrix, and $\psi$
is a bipartite state. Here we were able to extend this definition to the multipartite case by taking the $\lambda_k$s to be the ordered
eigenvalues of the ADM of $\psi$. However, unlike the bipartite case, for local transformations involving multipartite states, the conditions given in terms of these functions are only necessary but not sufficient. It is therefore impossible to express the necessary \emph{and} sufficient conditions only in terms of functions of the eigenvalues of the ADMs involved in the transformation. Instead,
we express the conditions for local transformations in terms of matrix equalities given in theorems~\ref{two} and~\ref{basic2}.

The necessary and sufficient conditions gets extremely elegant for transformations that involve the maximally entangled state of the SLOCC orbit. That is, every SLOCC orbit with a finite stabilizer group has a maximally entangled state (this is usually a critical state, i.e. a normal form). Its stabilizer group consisting of unitary matrices, and the necessary and sufficient conditions to convert this state
to another state in its SLOCC orbit, are given in Eq.(\ref{abc}) and expressed elegantly in terms of the stabilizer twirling operation (see definition~\ref{gtwirling}). This simple expression made it possible to characterize \emph{all} the states to which a maximally entangled state can be converted by SEP.

Both theorem~\ref{two} and~\ref{basic2} assume that the stabilizer group of the state representing the SLOCC orbit is finite.
We also argued that for a dense set of states in $\mathcal{H}_n$ with $n>3$ the stabilizer is finite. Therefore, these theorems hold true for most states in $\mathcal{H}_n$. Moreover, from continuity arguments, it follows that the necessary conditions in Eq.(\ref{ensvid}) given in terms of the continuous functions $E_k(\psi)$ hold true for \emph{all} states in $\mathcal{H}_n$.
That is, the necessary conditions that generalize the majorization theorems in~\cite{Nie99} and~\cite{Ple99} hold true for all the states in the Hilbert space. In particular, this implies that $E_{k}$ are entanglement monotones.

When local deterministic transformations are not possible, we found an expression for the maximum probability to convert one multipartite state to another by SEP. This expression can be simplified dramatically if one of the two states involved is critical.
However, since in general the expression is relatively complicated, 
we found simple lower and upper bounds for the probability of local conversion.

There are many important SLOCC orbits, like the GHZ class, that have a non-finite (in fact, non-compact) stabilizer group. 
For such orbits we also found necessary and sufficient conditions for a local conversion among states in the same orbit, but these
conditions are more complicated since the stabilizer group is not unitary. We are willing to conjecture that for such orbits 
theorems~\cite{Nie99} and~\cite{Ple99} hold true if one replace the non-compact stabilizer group, with the subgroup obtained by the intersection of the stabilizer with the local unitary group. 

There are many other interesting related problems that where not discussed here.
One such problem concerns with the asymptotic rates to interconvert one multipartite state to another.
We hope the work presented here will be useful for such a research direction.

\emph{Acknowledgments:---}
The authors acknowledge support from iCore for Nolan Wallach visit to IQIS in Calgary.
G.G. research is supported by NSERC.
The research of N.W. was partially supported by the NSF.


\begin{references}

\bibitem{Dev08}
I. Devetak, A.W. Harrow, and A. Winter,
IEEE Trans. Inf. Th. \textbf{54}, 4587 (2008).

\bibitem{Nie00}
M. A. Nielsen, I. L. Chuang, Quantum Computation and Quantum Information (Cambridge Univ. Press, Cambridge, 2000). 

\bibitem{Rau01}
H.J. Briegel and R. Raussendorf, Phys. Rev. Lett. \textbf{86}, 910 (2001); R. Raussendorf and H. J. Briegel, {\it ibid}. \textbf{86}, 5188 (2001).

\bibitem{Hil99}
M. Hillery, V. Buzek, and A. Berthiaume, Phys. Rev. A \textbf{59}, 1829 (1999).

\bibitem{Llo96}
S. Lloyd, Science \textbf{273}, 1073 (1996).

\bibitem{Hor09}
R. Horodecki, P. Horodecki, M. Horodecki, and K. Horodecki,
Rev. Mod. Phys. \textbf{81}, 865 (2009). 

\bibitem{Ple07}
M.~B.~Plenio and S.~Virmani, Quant. Inf. Comp. \textbf{7}, 1 (2007).

\bibitem{GS}
G. Gour and R. W. Spekkens, New J. Phys. \textbf{10}, 033023 (2008).  

\bibitem{Vid00}
W. Dur, G. Vidal, and J. I. Cirac,
Phys. Rev. A \textbf{62}, 062314 (2000). 

\bibitem{GW10}
G. Gour and N. R. Wallach, JMP \textbf{51}, 112201 (2010).

\bibitem{Hei05}
M. Hein {\it et al.},
Proceedings of the International School of Physics "Enrico Fermi" on "Quantum Computers, Algorithms and Chaos", Varenna, Italy, July, 2005.

\bibitem{Spe01}
F. M. Spedalieri, e-print arXiv:quant-ph/0110179.

\bibitem{Kin10}
S. Kintas and S. Turgut, J. Math. Phys. \textbf{51}, 092202 (2010). 

\bibitem{Cui10}
Wei Cui, Wolfram Helwig, and Hoi-Kwong Lo, Phys. Rev. A \textbf{81}, 012111 (2010).

\bibitem{Chen}
L. Chen and M. Hayashi, Phys. Rev. A \textbf{83}, 022331 (2011). 

\bibitem{Ish05}
S. Ishizaka and M. B. Plenio, Phys. Rev. A 72, 042325 (2005).

\bibitem{Bra08}
F. G. S. L. Brandao and M. B. Plenio, Nature Physics 4, 873 (2008).

\bibitem{Ben99}
C. H. Bennett {\it et al.}, Phys. Rev. A \textbf{59}, 1070 (1999).

\bibitem{Ghe08}
Vlad Gheorghiu and Robert B. Griffiths, Phys. Rev. A \textbf{78}, 020304 (R) (2008).

\bibitem{Eri09}
Eric Chitambar and Runyao Duan, Phys. Rev. Lett. 103, 110502 (2009).

\bibitem{Ver03}
F. Verstraete, J. Dehaene, B. De Moor, Phys. Rev. A \textbf{68}, 052112 (2003).

\bibitem{KN}
G. Kempf and L. Ness, ``The length of vectors in representation spaces",
Lecture Notes in Math. \textbf{732}, Springer, Berlin, 1979, 233-243.

\bibitem{W}
N. R. Wallach, Lectures on quantum computing Venice C.I.M.E. June (2004).
http://www.math.ucsd.edu/~nwallach/venice.pdf

\bibitem{BRS07} S. D. Bartlett, T. Rudolph, and R. W. Spekkens, Rev. Mod.
Phys. \textbf{79}, 555 (2007).

\bibitem{Woo98}
W.\ K.\ Wootters, Phys. Rev. Lett.  {\bf 80}, 2245 (1998).

\bibitem {CKW} 
V. Coffman, J. Kundu, and W. K. Wootters, \pra \textbf{61}, 052306 (2000).

\bibitem{Nie99}
M. A. Nielsen, Phys. Rev. Lett. \textbf{83}, 436 (1999).

\bibitem{Ple99}
D. Jonathan and M. B. Plenio, Phys. Rev. Lett. \textbf{83}, 1455 (1999).

\bibitem{Vid99}
G. Vidal, Phys. Rev. Lett. \textbf{83}, 1046 (1999).

\bibitem{Gur}
L. Gurvits, "Classical deterministic complexity of Edmonds' Problem and quantum entanglement". Proceedings of the thirty-fifth annual ACM symposium on Theory of computing, (2003). 

\bibitem{Gou04}
G. Gour, Phys. Rev. A \textbf{70}, 042301 (2004). 

\bibitem{Lo97}
Hoi-Kwong Lo and Sandu Popescu, arXiv:quant-ph/9707038.

\bibitem{Ver02}
F. Verstraete, J. Dehaene, B. De Moor, H. Verschelde, Phys. Rev. A \textbf{65}, 052112 (2002). 

\end{references}
\end{document}